\definecolor{Gred}{RGB}{219, 50, 54}
\definecolor{ToCgreen}{RGB}{0, 128, 0}
\crefname{theorem}{theorem}{theorems}
\Crefname{theorem}{Theorem}{Theorems}
\crefname{fact}{fact}{facts}
\Crefname{fact}{Fact}{Facts}
\crefname{definition}{definition}{definitions}
\Crefname{definition}{Definition}{Definitions}
\crefname{lemma}{lemma}{lemmas}
\Crefname{lemma}{Lemma}{Lemmas}
\crefname{claim}{claim}{claims}
\Crefname{claim}{Claim}{Claims}
\crefname{proposition}{proposition}{propositions}
\Crefname{proposition}{Proposition}{Propositions}
\crefname{example}{example}{examples}
\Crefname{example}{Example}{Examples}
\crefname{corollary}{corollary}{corollaries}
\Crefname{corollary}{Corollary}{Corollaries}
\crefname{remark}{remark}{remarks}
\Crefname{remark}{Remark}{Remarks}
\crefname{problem}{problem}{problems}
\Crefname{problem}{Problem}{Problems}
\crefname{conjecture}{conjecture}{conjectures}
\Crefname{conjecture}{Conjecture}{Conjectures}
\crefname{task}{task}{tasks}
\Crefname{task}{Task}{Tasks}
\newtheorem{theorem}{Theorem}
\newtheorem{fact}[theorem]{Fact}
\newtheorem{definition}[theorem]{Definition}
\newtheorem{lemma}[theorem]{Lemma}
\newtheorem{corollary}[theorem]{Corollary}
\newtheorem{task}[theorem]{Task}
\newcommand{\R}{\mathbb R}
\newcommand{\N}{\mathbb N}
\renewcommand{\H}{\mathbb H}
\newcommand{\cA}{\mathcal{A}}
\newcommand{\cE}{\mathcal{E}}
\newcommand{\cM}{\mathcal{M}}
\newcommand{\cO}{\mathcal{O}}
\newcommand{\ba}{\boldsymbol{a}}
\newcommand{\bp}{\boldsymbol{p}}
\newcommand{\bq}{\boldsymbol{q}}
\newcommand{\br}{\boldsymbol{r}}
\newcommand{\Id}{\mathds{1}}
\newcommand{\Zero}{\mathds{O}}
\newcommand{\E}{\mathop{\mathbb{E}\/}}
\renewcommand{\tr}{\mathrm{tr}}
\DeclareMathOperator{\poly}{poly}
\newcommand{\tO}{\tilde{\mathcal{O}}}
\newcommand{\Haar}{\text{H}}
\newcommand{\SWAP}{\textnormal{SWAP}}
\DeclarePairedDelimiter\parens{\lparen}{\rparen}
\DeclarePairedDelimiter\floor{\lfloor}{\rfloor}
\DeclarePairedDelimiter\ceil{\lceil}{\rceil}
\DeclarePairedDelimiter\bracks{\lbrack}{\rbrack}
\renewcommand{\epsilon}{\varepsilon}
\renewcommand{\emptyset}{\varnothing}
\DeclareMathAlphabet{\pazocal}{OMS}{zplm}{m}{n}
\renewcommand{\mathcal}[1]{\pazocal{#1}}
\begin{document}

\title{Exponential Advantage from One More Replica in Estimating Nonlinear Properties of Quantum States}

\author[1,2]{Qi Ye\thanks{Email: \href{mailto:yeq22@mails.tsinghua.edu.cn}{yeq22@mails.tsinghua.edu.cn}}}
\author[3]{Zhenhuan Liu}
\author[1,2,4]{Dong-Ling Deng\thanks{Email: \href{mailto:dldeng@tsinghua.edu.cn}{dldeng@tsinghua.edu.cn}}}

\affil[1]{Center for Quantum Information, IIIS, Tsinghua University, Beijing, China}
\affil[2]{Shanghai Qi Zhi Institute, Shanghai, China}
\affil[3]{Yau Mathematical Sciences Center, Tsinghua University, Beijing 100084, China}
\affil[4]{Hefei National Laboratory, Hefei, China}

\date{}
\maketitle

\begin{abstract}
Inferring nonlinear features of quantum states is fundamentally important across quantum information science, but remains challenging due to the intrinsic linearity of quantum mechanics. 
It is widely recognized that quantum memory and coherent operations help avoid exponential sample complexity, by mapping nonlinear properties onto linear observables over multiple copies of the target state. 
In this work, we prove that such a conversion is not only sufficient but also necessary. 
Specifically, we prove that the estimation of $\mathrm{tr}(\rho^{k} O)$ for a broad class of observables $O$ is exponentially hard for any protocol restricted to $(k-1)$-replica joint measurements, whereas access to one additional replica reduces the complexity to a constant. 
These results establish, for the first time, an exponential separation between $(k-1)$- and $k$-replica protocols for any integer $k>2$, thereby introducing a fine-grained hierarchy of replica-based quantum advantage and resolving an open question in the literature. 
The technical core is a general indistinguishability principle showing that any ensemble constructed from large Haar random states via tensor products and mixtures is hard to distinguish from its average.
Leveraging this principle, we further prove that $k$-replica joint measurements are also necessary for distinguishing rank-$k$ density matrices from rank-$(k-1)$ ones. 
Overall, our work delineates sharp boundaries on the power of joint measurements, highlighting resource--complexity trade-offs in quantum learning theory and deepening the understanding of quantum mechanics' intrinsic linearity.
\end{abstract}

\thispagestyle{empty}
\clearpage
\thispagestyle{empty}
\tableofcontents
\thispagestyle{empty}
\clearpage
\newpage
\addtocounter{page}{-2}

\section{Introduction}
Quantum learning is a fundamental process in quantum information science that bridges the gap between quantum and classical worlds. 
Exploiting resources and strategies for quantum learning is thus crucial for reducing the complexity of quantum benchmarking and developing efficient quantum algorithms~\cite{badescuQuantumStateCertification2019,Eisert2020certification}.
Among all quantum learning tasks, estimating nonlinear functions of quantum states is key to a wide range of fields in quantum information science, including quantum resource certification~\cite{renyi1961measures,guhneEntanglementDetection2009,Leone2022sre}, quantum algorithms~\cite{lloydQuantumPrincipalComponent2014}, quantum metrology~\cite{giovannettiAdvancesQuantumMetrology2011}, and quantum simulation~\cite{cotlerQuantumVirtualCooling2019,koczorExponentialErrorSuppression2021,hugginsVirtualDistillationQuantum2021}.
Despite their broad utility, accessing nonlinear functions is challenging as the linearity of quantum measurements prevents direct measurement of nonlinear functions.

A standard strategy to circumvent this obstacle is to exploit multiple replicas of the state and joint quantum operations among them. 
As an example, consider a typical nonlinear quantity $\tr(\rho^{k}O)$, where $\rho$ is the target state and $O$ is an observable.
This quantity covers state moments by setting $O$ to the identity operator, and it serves as a central quantity in entanglement detection~\cite{guhneEntanglementDetection2009}, quantum virtual cooling~\cite{cotlerQuantumVirtualCooling2019}, quantum error mitigation~\cite{koczorExponentialErrorSuppression2021,hugginsVirtualDistillationQuantum2021}, and detecting quantum phases of matter for mixed states \cite{leeQuantumCriticalityDecoherence2023, zhangProbingMixedstatePhases2025}.
This quantity is nonlinear in $\rho$ for any integer $k>1$, but always linear in $\rho^{\otimes k}$.
Therefore, with the ability to jointly operate on $k$ copies of $\rho$, $\tr(\rho^{k}O)$ can be efficiently estimated by measuring a linear observable.
A well-known example is the generalized swap test \cite{ekertDirectEstimationsLinear2002} which estimates $\tr(\rho^k)$ efficiently using $k$-replica joint measurements. 
However, implementing such $k$-replica joint measurements remains experimentally challenging, requiring either large-scale quantum memory or efficient state resetting.
Demonstrations have so far only reached small values of $k$ \cite{islamMeasuringEntanglementEntropy2015,kaufmanQuantumThermalizationEntanglement2016,cotlerQuantumVirtualCooling2019,huangQuantumAdvantageLearning2022}.
This raises a fundamental and practical question:
\begin{center}
\textit{Are $k$-replica joint measurements necessary for efficient estimation of $\tr(\rho^{k}O)$?}
\end{center}
If the answer is negative, it means that we can design a more efficient learning protocol that utilizes fewer than $k$ copies of $\rho$ at a time to accurately estimate $\tr(\rho^{k}O)$.
If the answer is positive, it uncovers that some nonlinear functions are fundamentally more difficult to estimate than other nonlinear ones.
This would deepen our understanding of the intrinsic linearity of quantum mechanics from the viewpoint of quantum learning.

This question also lies within a broader pursuit of understanding replica quantum advantage, that is, the extent to which multi-replica measurements enhance our ability to learn properties of quantum states. 
Extensive studies have shown that allowing 2-replica joint measurements provides exponential advantages over single-replica protocols in a variety of fundamental tasks, including purity estimation, shadow tomography, and entanglement detection~\cite{aharonovQuantumAlgorithmicMeasurement2022,chenExponentialSeparationsLearning2022,chenOptimalTradeoffsEstimating2024a,gongSampleComplexityPurity2024,kingTriplyEfficientShadow,liu2025separation}. 
Moreover, \cite{chenHierarchyReplicaQuantum2021} established a separation between $\poly(n,k)$-replica and $k$-replica protocols, where $n$ is the qubit number of the target state. 
Yet beyond these specific cases, our understanding remains limited. 
In particular, it is not even known whether 3-replica protocols can be exponentially more powerful than 2-replica protocols for any learning task. An open question raised by \cite{chenHierarchyReplicaQuantum2021} and re-emphasized recently in \cite{focs2024openproblems} is
\begin{center}
\textit{Are there learning tasks which are $(k-1)$-replica hard but $k$-replica easy for $k>2$?}
\end{center}

We answer both questions affirmatively in this work. 
We prove that if one is restricted to joint measurements on fewer than $k$ replicas, then accurately estimating $\tr(\rho^{k} O)$ for a wide variety of observables $O$ requires exponentially many samples. 
In contrast, with access to $k$-replica joint measurements, the task becomes simple and can be achieved with constant sample complexity. 
This establishes, for the first time, a fine-grained hierarchy of replica quantum advantage for any integer $k\ge 2$ \footnote{During the preparation of this manuscript, we became aware of the independent and concurrent work by Nöller, Tran, Gachechiladze, and Kueng~\cite{Noeller2025hierarchy}, which builds separations between $k$- and $(k-1)$- replica protocols for all primes $k$ based on other learning tasks. See \Cref{sec: related works} for details.}, and delineates sharp boundaries on the power of joint measurements.

To establish this separation, we develop a general indistinguishability principle, showing that any ensemble constructed from large Haar-random states via tensor products and mixtures is hard to distinguish from its average. 
We expect this principle to have broad applicability as a tool for proving lower bounds in quantum learning theory. 
As another illustrative applications, we also employ it in spectrum testing and rank testing. 
We prove that $k$-replica joint measurements are likewise necessary to efficiently distinguish two spectra that coincide on all moments up to degree $k-1$, which implies an exponential sample complexity lower bound for testing the rank of an unknown state using bounded-replica joint measurements.

\subsection{Our results}
Here we provide a more detailed introduction to our results. For clarity of exposition, in what follows we focus on $\tr(\rho^{k+1}O)$ rather than $\tr(\rho^k O)$. A $k$-replica protocol is a learning protocol that can only perform joint measurements on at most $k$ copies of the unknown states at a time, see \Cref{sec: k replica} for formal definitions.

\paragraph{A general indistinguishability principle.} A common approach to proving lower bounds in quantum learning theory is to reduce the learning task to a distinguishing problem between two ensembles of quantum states. In a nutshell, if two ensembles are hard to distinguish, then any property that differs significantly between them must also be hard to learn, otherwise one could distinguish the ensembles by learning that property.
A fruitful example is pseudorandom states~\cite{jiPseudorandomQuantumStates}, which are easy to generate yet indistinguishable from Haar-random states by any polynomial-time algorithm under standard computational assumptions. Building on this, prior works have shown that it is computationally hard to test whether a state has imaginary amplitudes~\cite{hinscheEfficientDistributedInner2024}, negative amplitudes~\cite{giurgica-tironPseudorandomnessSubsetStates2023,jeronimoPseudorandomPseudoentangledStates2024}, high entanglement~\cite{boulandQuantumPseudoentanglement2022}, or high magic~\cite{guPseudomagicQuantumStates2024}.
Another example, more closely related to our setting, is to test purity using restricted measurements.
In \cite{chenExponentialSeparationsLearning2022,chenOptimalTradeoffsEstimating2024a}, the authors show the hardness of distinguishing the Haar-random ensemble $\cE=\{\psi\leftarrow \mu_\Haar(d)\}$ from the maximally mixed state $\rho_0=\Id/d$ with 1-replica protocols, thereby implying the hardness of purity estimation.
These examples illustrate a general paradigm that new indistinguishability results often translate directly into new lower bounds for quantum learning tasks.

In this work, we establish an indistinguishability principle that substantially generalizes the purity testing results in~\cite{chenExponentialSeparationsLearning2022,chenOptimalTradeoffsEstimating2024a}. We prove that any ensemble constructed from large Haar-random states via tensor products and mixtures is indistinguishable from its average without joint measurements.

\begin{definition}[Haar-assembled ensembles]\label{def: Haar assembled ensembles}
    For $m,d,D\in \N^+$, a $D$-dimensional Haar-assembled ensemble $\cE_\Haar$ assembled by $m$ $d$-dimensional Haar-random states $\psi_1,\cdots, \psi_m$ has the form
    \begin{equation}
        \cE_\Haar \coloneqq \left\{\sum_{j=1}^J p_j U_j(\psi_1^{\otimes a_{j1}}\otimes \cdots \otimes \psi_m^{\otimes a_{jm}}\otimes \tau_j)U_j^\dagger\right\}_{\psi_1,\cdots, \psi_m \leftarrow \mu_\Haar(d)}\,,\label{equ: Haar assembled ensembles}
    \end{equation}
    where $J\in \N^+$, $\{p_j\}_{j=1}^J$ is a probability distribution. For each $j\in [J]$ and $r\in [m]$, $U_j$ is a fixed unitary, $a_{jr}\in \N$, $\tau_j$ is a fixed state such that the dimension of each term is $D$.
\end{definition}

\begin{theorem}[Indistinguishability principle for Haar-assembled ensembles. Informal version of \Cref{thm: Haar-assembled ensemble vs mean}]\label{thm:indistinguishability_informal}
Any 1-replica protocol requires $\Omega(\sqrt{d/(m^2\max\{a_{jr}\}})$ samples to distinguish a Haar-assembled ensemble $\cE_\Haar$ from its average $\E_{\rho\leftarrow \cE_\Haar}[\rho]$ with a constant success probability.
\end{theorem}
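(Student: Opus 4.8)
The plan is to place the $1$-replica protocol in the single-copy learning-tree model, reduce the distinguishing task to a chi-square estimate between the two transcript distributions, and prove that estimate from a Haar moment bound on the covariance kernel of $\cE_\Haar$. So, first, reduce to a learning tree: put $\rho_0 := \E_{\rho\leftarrow\cE_\Haar}[\rho]$, and note that, since coarsening a POVM only loses information, a $1$-replica protocol using $T$ copies may be taken to be a depth-$T$ tree whose depth-$t$ nodes carry adaptively chosen rank-$1$ POVMs $\{|v_i\rangle\langle v_i|\}_i$ with $\sum_i|v_i\rangle\langle v_i| = \Id$; a leaf $\ell$ then receives probability $p^\sigma(\ell) = \prod_{t=1}^{T}\langle v_{o_t(\ell)}|\sigma|v_{o_t(\ell)}\rangle$ on a fixed input state $\sigma$, and $p^{\cE_\Haar}(\ell) = \E_{\psi_1,\dots,\psi_m\leftarrow\mu_\Haar(d)}[p^{\rho}(\ell)]$ on an input drawn from $\cE_\Haar$. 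A distinguisher succeeding with constant probability forces $d_{\mathrm{TV}}(p^{\cE_\Haar},p^{\rho_0}) = \Omega(1)$, so it suffices to show this distance is $o(1)$ once $T = o(\sqrt{d/(m^2\max_{j,r}a_{jr})})$ (infinite/continuous POVMs being handled by the usual limiting arguments).

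Next, pass to the likelihood ratio's second moment. By Cauchy--Schwarz, $2\,d_{\mathrm{TV}}(p^{\cE_\Haar},p^{\rho_0})\le\chi^2(p^{\cE_\Haar}\,\|\,p^{\rho_0})^{1/2}$, and since $p^{\cE_\Haar} = \E_\rho[p^\rho]$,
\[
  1 + \chi^2(p^{\cE_\Haar}\,\|\,p^{\rho_0}) = \sum_\ell\frac{p^{\cE_\Haar}(\ell)^2}{p^{\rho_0}(\ell)} = \E_{\rho,\rho'\leftarrow\cE_\Haar}\,\sum_\ell\prod_{t=1}^{T}\frac{\langle v_{o_t}|\rho|v_{o_t}\rangle\,\langle v_{o_t}|\rho'|v_{o_t}\rangle}{\langle v_{o_t}|\rho_0|v_{o_t}\rangle},
\]
where $\rho,\rho'$ are two \emph{independent} draws of $(\psi_1,\dots,\psi_m)$. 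With $b_i := \langle v_i|\rho_0|v_i\rangle$, $a_i := \langle v_i|(\rho-\rho_0)|v_i\rangle$, $a'_i := \langle v_i|(\rho'-\rho_0)|v_i\rangle$, each factor is $b_{o_t} + a_{o_t} + a'_{o_t} + a_{o_t}a'_{o_t}/b_{o_t}$; expanding the product and removing the all-$b$ term (which equals $\sum_\ell\prod_t b_{o_t} = 1$) writes $\chi^2$ as a sum over assignments to each level of one of the ``pieces'' $a_{o_t}$, $a'_{o_t}$, $a_{o_t}a'_{o_t}/b_{o_t}$, not all being $b$. Two facts prune this sum: because $\rho\perp\rho'$ and $\E_\rho[a_i] = \langle v_i|(\E_\rho[\rho]-\rho_0)|v_i\rangle = 0$, a surviving term needs the set of levels ``feeding $\rho$'' (pieces $a$ or $aa'/b$) and the set ``feeding $\rho'$'' each of size $\ne 1$; and carrying out the outcome sum at the highest selected level against $\sum_i|v_i\rangle\langle v_i| = \Id$ gives $\sum_i a_i = \tr(\rho-\rho_0) = 0$ unless that level carries an $aa'/b$ piece. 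At second order this pins down one family of terms: pick two levels $t_1<t_2$, both carrying $aa'/b$, contributing $\sum_{t_1<t_2}\E_{\ell\sim p^{\rho_0}}\big[\Cov_\rho(\langle v_{o_{t_1}}|\rho|v_{o_{t_1}}\rangle,\langle v_{o_{t_2}}|\rho|v_{o_{t_2}}\rangle)^2/(b_{o_{t_1}}^2 b_{o_{t_2}}^2)\big]$, a nonnegative quantity; the remaining terms pair up $\ge 4$ of the $a$/$a'$ factors, and the same bound (each extra pairing costing a further factor $\sim T^2 m^2\max_{j,r}a_{jr}/d$) shows they form a geometric tail negligible once $T^2 m^2\max_{j,r}a_{jr}/d = O(1)$.

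The crux is then the following covariance estimate: for every POVM element $v$ and rank-$1$ POVM $\{|w_i\rangle\langle w_i|\}$,
\[
  \sum_i\frac{\Cov_{\rho\leftarrow\cE_\Haar}\!\big(\langle v|\rho|v\rangle,\langle w_i|\rho|w_i\rangle\big)^2}{\langle w_i|\rho_0|w_i\rangle}\le\frac{C\,m^2\max_{j,r}a_{jr}}{d}\,\langle v|\rho_0|v\rangle^2.
\]
To prove it I would expand $\Cov_\rho$ using $\rho = \sum_j p_j U_j(\bigotimes_r\psi_r^{\otimes a_{jr}}\otimes\tau_j)U_j^\dagger$, the mutual independence of $\psi_1,\dots,\psi_m$, and the Haar averages $\E[\psi_r^{\otimes a}] = \Pi_{\mathrm{sym}}^{(a)}/\binom{d+a-1}{a}$ and $\E[(\psi_r^{\otimes a})^{\otimes 2}] = \Pi_{\mathrm{sym}}^{(2a)}/\binom{d+2a-1}{2a}$ on the symmetric subspaces of $(\mathbb{C}^d)^{\otimes a}$ and $(\mathbb{C}^d)^{\otimes 2a}$: absorbing the $U_j$'s and $\tau_j$'s and carrying out the $j$-sum turns $\Cov_\rho$ into a sum over a nonempty subset of the $m$ tensor factors, dominated by singletons (one power of $m$, a second from squaring the covariance), each summand a symmetric-subspace overlap $\|\Pi_{\mathrm{sym}}^{(2a)}(\tilde v\otimes\tilde w_i)\|^2/\binom{d+2a-1}{2a} - \langle v|\rho_0|v\rangle\langle w_i|\rho_0|w_i\rangle$ with $\tilde v := \Pi_{\mathrm{sym}}^{(a)}v$; and then---the genuinely delicate point---summing the square against the $\langle w_i|\rho_0|w_i\rangle$-weights and invoking $\sum_i|w_i\rangle\langle w_i| = \Id$ (the generalization of the purity-testing identity $\sum_i\langle w_i|\rho_0|w_i\rangle(dF_i - 1)^2\le d$, $F_i$ the fidelity of the two normalized outcome directions) supplies the extra factor of order $\max_{j,r}a_{jr}/d$ over the naive per-element bound, which by itself can be only $O(1)$ (indeed $\binom{2a}{a}$-sized when $v$ and $w_i$ are aligned). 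Granting this, feeding it into the tree sum makes the innermost summation over $o_{t_2}$ at most $Cm^2\max_{j,r}a_{jr}/d$ uniformly, the remaining outcome sums telescope to $1$, so each pair $(t_1,t_2)$ contributes $\le Cm^2\max_{j,r}a_{jr}/d$; hence $\chi^2\le\binom{T}{2}\cdot Cm^2\max_{j,r}a_{jr}/d\cdot(1+o(1))$, and $d_{\mathrm{TV}}(p^{\cE_\Haar},p^{\rho_0})\le\tfrac12\sqrt{\chi^2} = O\!\big(T\sqrt{m^2\max_{j,r}a_{jr}/d}\big)$, which is $o(1)$ for $T = o(\sqrt{d/(m^2\max_{j,r}a_{jr})})$. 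The principal obstacle is exactly the displayed estimate: performing the symmetric-subspace computation uniformly over the mixture, the unitaries, and the ancillas, and in particular extracting the completeness cancellation that stops the per-element blow-up from surviving the outcome sum.
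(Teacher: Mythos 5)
Your overall route (learning tree, $\chi^2$ bound via two independent draws, pruning by mean-zero and completeness) is a legitimately different strategy from the paper's, but it hinges on the displayed covariance estimate, and that estimate is false as stated: the dependence on the multiplicities must be quadratic, not linear. Concretely, take $m=1$, $J=1$, $\cE_\Haar=\{\psi^{\otimes a}\}_{\psi\leftarrow\mu_\Haar(d)}$, so $\rho_0=\tilde{S}_a$, and let $v=|e_{j_1}\cdots e_{j_a}\rangle$ with distinct indices and $\{|w_{\boldsymbol i}\rangle\}$ the product computational basis. For the roughly $a^2d^{a-1}$ outcomes whose index multiset shares exactly one value with that of $v$, one has $\Cov_\psi(\langle v|\rho|v\rangle,\langle w_{\boldsymbol i}|\rho|w_{\boldsymbol i}\rangle)=\tfrac{2}{d^{\uparrow 2a}}-\tfrac{1}{(d^{\uparrow a})^2}\approx d^{-2a}$ and $\langle w_{\boldsymbol i}|\rho_0|w_{\boldsymbol i}\rangle\approx d^{-a}$, so the left-hand side of your estimate is already $\gtrsim a^2/d^{2a+1}$, while the claimed right-hand side is $\approx C a\,\langle v|\rho_0|v\rangle^2/d\approx Ca/d^{2a+1}$ --- off by a factor of $a$. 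This is not a fixable constant: if your final conclusion $\chi^2\lesssim T^2m^2\max_{j,r}a_{jr}/d$ were true, it would give an $\Omega(\sqrt{d/a})$ lower bound for distinguishing $\{\psi^{\otimes a}\}$ from $\tilde{S}_a$, but a nonadaptive rank-one protocol beats that: measure each copy in a fixed product basis and count cross-copy collisions among the $Ta$ single-register outcomes (collision rate $2/(d+1)$ versus $1/d$), which distinguishes with $T=\Theta(\sqrt{d}/a)$ copies. So no correct argument can deliver the first power of the multiplicity; the true per-pair contribution is $\Theta(a^2/d)$, which is exactly the quadratic scaling in the paper's formal \Cref{thm: Haar-assembled ensemble vs mean} (bound $T(T-1)\sum_r a_r^2/d_r+T\sum_r a_r'/d$); the informal statement you are targeting is simply loose on this point and should be read through its formal version.

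Beyond the crux inequality, note how the paper avoids your remaining sketched steps altogether: it bounds, per leaf, the \emph{one-sided} likelihood ratio $\tr(F\,\E[\otimes_t\sigma_{j_t}])/\tr(F\otimes_t\E[\sigma_{j_t}])\ge\exp(-\sum_r A_r^2/d_r)$ using \Cref{lem: one-sided likelihood ratio} together with a new permutation-operator inequality (\Cref{lem: permutation inequality}), which handles adaptivity for free and never needs upper-tail control of the ratio. In your $\chi^2$ route the likelihood ratio can be as large as $d^{a}$ on rare leaves, so the pruning of cross terms and the claimed geometric decay of the higher-order terms are themselves delicate and are only asserted; even after correcting the covariance bound to the quadratic scaling, these parts would still need to be carried out before the argument could recover the paper's (weaker, but correct) bound.
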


\noindent Compared to \cite{chenExponentialSeparationsLearning2022,chenOptimalTradeoffsEstimating2024a}, we now allow an arbitrary number of independent Haar-random states, arbitrary multiplicities in the tensor product, and arbitrary mixtures of such tensor products. Given its flexibility, we expect the indistinguishability principle to have broad applications in proving lower bounds for quantum learning. In this paper, we apply it in particular to lower bounds for nonlinear property estimation, spectrum testing, and rank testing.

\paragraph{Lower bounds for estimating $\tr(\rho^{k+1}O)$.} With Theorem~\ref{thm:indistinguishability_informal}, we can derive our main result that establishes the hardness of estimating $\tr(\rho^{k+1}O)$ using $k$-replica protocols.
\begin{theorem}\label{thm: main theorem}
    Let $k, d\in \N^+$, $O$ be a $d$-dimensional observable with the operator norm $\norm{O}_{\infty}\leq 1$ and $\norm{O}_1\ge 2(k+1)^{2k}$. Let $\epsilon$ be the error parameter such that $50(k+1)^3d^{-1/2}\leq \epsilon \leq (2(k+1))^{-k}\norm{O}_1/(10d)$. 
    Given access to a $d$-dimensional mixed quantum state $\rho\in \mathcal{C}^{d\times d}$, the sample complexity of estimating $\tr(\rho^{k+1}O)$ within additive error $\epsilon$ with probability $0.9$ is at least $\Omega(\frac{\sqrt{d}}{k\epsilon^{1/(1+k)}})$ for any $k$-replica protocol.
\end{theorem}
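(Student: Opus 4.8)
The plan is to apply the indistinguishability principle (\Cref{thm:indistinguishability_informal}) by constructing two Haar-assembled ensembles that are hard to distinguish with a $k$-replica protocol but on which the value of $\tr(\rho^{k+1}O)$ differs by more than $2\epsilon$. Since a $k$-replica protocol on $\rho$ is a $1$-replica protocol on the ``blocked'' state $\rho^{\otimes k}$, the relevant indistinguishability statement will concern ensembles of $kn$-qubit states; I would set $D = d^k$ in \Cref{def: Haar assembled ensembles}, so that one sample of the blocked ensemble costs $k$ samples of $\rho$. The first ensemble is simply $\cE_\Haar = \{(\psi^{\otimes k}) : \psi \leftarrow \mu_\Haar(d)\}$ — i.e.\ $m=1$, $J=1$, $a_{11}=k$, $\tau_1$ trivial, $U_1 = \Id$ — which corresponds to the target state being $\rho = \psi$, a Haar-random pure state. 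Its average is $\E_\psi[\psi^{\otimes k}]$, which by Schur–Weyl is proportional to the projector onto the symmetric subspace $\mathrm{Sym}^k(\C^d)$; this average state is exactly the $k$-fold tensor power of the ``canonical'' mixed state one gets from the second ensemble, so both ensembles have the same first moment over $k$ copies and \Cref{thm:indistinguishability_informal} applies to the pair (ensemble vs.\ its mean) with a lower bound of $\Omega(\sqrt{d}/k)$ samples of $\rho$, after accounting for the multiplicity $\max\{a_{jr}\} = k$ and $m=1$.

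Next I would compute the two values of $\tr(\rho^{k+1}O)$. For $\rho = \psi$ a pure state, $\rho^{k+1} = \psi$, so the quantity is $\bra{\psi} O \ket{\psi}$, whose expectation over $\psi \leftarrow \mu_\Haar(d)$ is $\tr(O)/d$; but since $O$ may be non-positive and we want to force a large deviation, I would actually work with a slightly randomized target — e.g.\ apply a uniformly random unitary from a suitable group to $\psi$ so that the distinguishing instances are ``$\rho$ Haar-random pure'' versus ``$\rho$ a fixed low-rank state,'' and use the structure $\norm{O}_1 \gg (k+1)^{2k}$ to guarantee that $\E[\tr(\rho^{k+1}O)]$ over the first family is bounded away from the value over the mean. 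Concretely, for the ``mean'' side one takes $\rho_0$ to be (a normalized piece of) the symmetric-subspace projector, so that $\rho_0^{k+1}$ has all eigenvalues of order $1/d$ and $\tr(\rho_0^{k+1}O) = O(\norm{O}_1/d^{2})$ or smaller, whereas on the pure side one can arrange, by an averaging argument exploiting $\norm{O}_1$ large, that a constant fraction of the random pure states $\psi$ satisfy $|\bra{\psi}O\ket{\psi}| \gtrsim \norm{O}_1/d$; the gap between the two is then $\gtrsim \norm{O}_1/d \gg \epsilon$ by the upper hypothesis on $\epsilon$, while the lower hypothesis $\epsilon \gtrsim (k+1)^3 d^{-1/2}$ ensures the fluctuation of $\bra{\psi}O\ket{\psi}$ around its mean (which is $O(d^{-1/2})$ by Levy's lemma / concentration on the sphere) does not swamp the signal. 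A learner estimating $\tr(\rho^{k+1}O)$ to error $\epsilon$ with probability $0.9$ would then distinguish the two ensembles, contradicting the $\Omega(\sqrt d/k)$ bound; the claimed $\epsilon$-dependence $\epsilon^{-1/(1+k)}$ in the theorem comes from optimizing a more refined version where one dilutes the pure state with noise, $\rho = (1-\delta)\psi + \delta\,\Id/d$, tuning $\delta$ so that $\tr(\rho^{k+1}O) \approx (1-\delta)^{k+1}\bra\psi O\ket\psi$ and the per-copy indistinguishability parameter scales with $\delta$.

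The main obstacle, I expect, is the last point: getting the sharp $\epsilon^{-1/(1+k)}$ scaling rather than a cruder bound. This requires parametrizing the hard instance by a noise level $\delta$ (or equivalently by how ``spread out'' the spectrum of $\rho$ is), tracking how $\delta$ enters both (i) the signal $|\tr(\rho^{k+1}O) - \text{(mean value)}|$, which is polynomial in $\delta$ of degree related to $k+1$, and (ii) the indistinguishability bound from \Cref{thm:indistinguishability_informal}, where a smaller $\delta$ makes the Haar component smaller-dimensional or lower-weight and hence easier to hide, and then balancing the two to read off the exponent $1/(1+k)$. A secondary technical nuisance is verifying that the family of hard instances, once the noise and random-unitary dressing are included, still literally fits the Haar-assembled template of \Cref{def: Haar assembled ensembles} — in particular that mixing $(1-\delta)\psi^{\otimes k} + (\text{terms from }\Id/d)$ after blocking is expressible as a mixture $\sum_j p_j U_j(\psi^{\otimes a_{j}} \otimes \tau_j)U_j^\dagger$ with all $a_j \le k$ — so that the theorem is applicable with $\max\{a_{jr}\}\le k$; this is a bookkeeping step but needs care because raising $\rho$ to the $(k+1)$-st power is what makes $\tr(\rho^{k+1}O)$ nonlinear, yet the protocol only ever sees $k$ copies, and the whole argument hinges on that one-copy gap.
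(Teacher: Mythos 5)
There is a genuine gap at the heart of your construction. Your hard pair is essentially ``$\rho$ a Haar-random pure state (possibly diluted as $(1-\delta)\psi+\delta\,\Id/d$)'' versus ``$\rho$ equal to the average,'' and these two only agree on the \emph{first} moment $\E[\rho]$. For any $k\ge 2$ a $k$-replica protocol can run a swap test and estimate $\tr(\rho^2)$, which already differs by a constant (or by $\Theta((1-\delta)^2)$ after dilution) between the two cases, so the pair is easy to distinguish and no lower bound against $k$-replica protocols follows; your argument only yields the known $1$-replica hardness. Relatedly, your claim that $\E_\psi[\psi^{\otimes k}]$ ``is exactly the $k$-fold tensor power'' of some single-copy state is false — the normalized symmetric projector is not of the form $\sigma^{\otimes k}$ — and the mean side of the reduction must anyway be realized by genuine single-copy states, since the learner is handed copies of $\rho$. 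The missing idea is the paper's moment-matching construction: take two probability vectors $\bp,\bq$ with $k+1$ atoms whose power sums agree up to degree $k$ but differ at degree $k+1$ (built from Chebyshev polynomials, with gap $2(2(k+1))^{-k}$), and set $\cE_1=\{p_0\rho_0+\sum_r p_r\psi_r\}$, $\cE_2=\{q_0\rho_0+\sum_r q_r\psi_r\}$ with \emph{independent} Haar states $\psi_r$. Then $\E[\rho^{\otimes k}]$ is literally identical for the two ensembles (symmetric-function argument), both blocked ensembles $\{\rho^{\otimes k}\}$ are Haar-assembled with the same average, and the indistinguishability principle plus a triangle inequality gives hardness against $k$-replica protocols of order $\sqrt{d}/(1-p_0)$; the dilution weight $p_0=q_0$ is then tuned to produce the $\epsilon^{-1/(k+1)}$ scaling, which is the part of your sketch that is on the right track.

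A second, independent error is your treatment of indefinite $O$. For Haar-random $\psi$, $\bra{\psi}O\ket{\psi}$ concentrates around $\tr(O)/d$ with fluctuations $\mathcal{O}(\|O\|_\infty/\sqrt{d})$, so for a traceless $O$ (e.g.\ a Pauli string with $\|O\|_1=d$) it is \emph{not} true that a constant fraction of states satisfy $|\bra{\psi}O\ket{\psi}|\gtrsim \|O\|_1/d$; the signal you hope to extract is exponentially small there. The paper instead computes the gap $\E_{\cE_1}[\tr(\rho^{k+1}O)]-\E_{\cE_2}[\tr(\rho^{k+1}O)]$, which is proportional to the $(k+1)$-th power-sum gap times (roughly) $\tr(O)/d$, and handles traceless $O$ by selecting $d/2$ eigenvalues of $O$ whose sum has absolute value at least $\|O\|_1/4$ and building the ensembles inside that eigen-subspace, so that within the subspace the effective trace is large. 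Without such a restriction (or the averaging over the ensembles rather than over a single Haar state), the positive and negative parts of $O$ cancel and your distinguishing gap disappears, which is exactly why the theorem is stated in terms of $\|O\|_1$ rather than $|\tr(O)|$.
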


To interpret the theorem, let us focus on the case $k=O(1)$ and $\norm{O}_1=\Omega(d)$, which covers the important case where $O$ is a Pauli string, and in particular the identity observable $O=\Id$. The theorem states that estimating $\tr(\rho^{k+1}O)$ within constant additive error requires $\Omega(\sqrt{d})$ samples using $k$-replica protocols. 
More generally, our hardness result applies to estimating $\tr(\rho^{\otimes (k+1)} O')$ for any $O'$ that has a nontrivial $(k+1)$-body component (see \Cref{sec: Haar measure} for definitions and \Cref{sec: k+1 body observables} for the full statement).
In sharp contrast, with access to $(k+1)$ replicas, the estimation becomes efficient because $\tr(\rho^{k+1}O)$ is a linear function of $\rho^{\otimes k+1}$. The sample complexity is $\cO(k/\epsilon^2)$ (see \Cref{sec: subroutines}). 
Therefore, we obtain an exponential separation between $k$-replica protocols and $(k+1)$-replica protocols for any constant $k$.

\begin{corollary}[Hierarchy of quantum replica advantage]\label{cor: hierarchy}
For $\epsilon=(2(k+1))^{-k}/10$, $\norm{O}_\infty=1$, and $\norm{O}_1=d=2^n$, estimating $\tr(\rho^{k+1}O)$ within error $\epsilon$ requires $\Omega(\sqrt{d})$ samples for $k$-replica protocols but only $\mathcal{O}(k(2(k+1))^{2k})$ samples for $(k+1)$-replica protocols. This yields a constant-vs-exponential separation for constant $k$, polynomial-vs-exponential for $k=\mathcal{O}(\log n/\log\log n)$, and subexponential-vs-exponential for $k=o(n/\log n)$.
\end{corollary}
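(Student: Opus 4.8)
The plan is to obtain the corollary as a direct instantiation of the two bounds already in hand: \Cref{thm: main theorem} for the $k$-replica lower bound, and the $(k+1)$-replica estimation subroutine of \Cref{sec: subroutines}, whose sample complexity is $\cO(k/\epsilon^2)$, for the upper bound. No substantively new idea is needed; the work is to (i) confirm that the hypotheses of \Cref{thm: main theorem} hold for the stated parameters across the three growth regimes of $k$, (ii) simplify the lower bound $\Omega\!\big(\sqrt d/(k\,\epsilon^{1/(1+k)})\big)$ to $\Omega(\sqrt d)$, and (iii) read off the three separations.

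For the lower bound, I would set $\norm{O}_\infty=1$, $\norm{O}_1=d=2^n$, and $\epsilon=(2(k+1))^{-k}/10$, and check the hypotheses. The right endpoint of the admissible $\epsilon$-window, $(2(k+1))^{-k}\norm{O}_1/(10d)$, equals $\epsilon$ exactly because $\norm{O}_1=d$; the left-endpoint condition $50(k+1)^3 d^{-1/2}\le \epsilon$ rearranges to $500(k+1)^3(2(k+1))^k\le 2^{n/2}$, and the one-norm condition $\norm{O}_1\ge 2(k+1)^{2k}$ rearranges to $2^n\ge 2(k+1)^{2k}$. Both are implied by $n=\omega(k\log k)$, hence by $k=o(n/\log n)$, and are therefore valid in all three regimes. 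It then remains to evaluate $k\,\epsilon^{1/(1+k)}$: writing $\epsilon^{1/(1+k)}=(2(k+1))^{-k/(k+1)}\,10^{-1/(k+1)}=(2(k+1))^{-1}\cdot(2(k+1))^{1/(k+1)}\,10^{-1/(k+1)}$ and using $(2(k+1))^{1/(k+1)}\in(1,2]$ and $10^{-1/(k+1)}\in[10^{-1/2},1)$ for $k\ge 1$, one gets $k\,\epsilon^{1/(1+k)}=\tfrac{k}{2(k+1)}\cdot\Theta(1)=\Theta(1)$, so \Cref{thm: main theorem} yields a $k$-replica lower bound of $\Omega(\sqrt d)=\Omega(2^{n/2})$, uniformly in $k$.

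For the upper bound, $\tr(\rho^{k+1}O)$ is a fixed linear functional of $\rho^{\otimes(k+1)}$: it equals $\tr\!\big(A\,\rho^{\otimes(k+1)}\big)$ for a Hermitian operator $A$ obtained by dressing the cyclic-shift permutation on $k+1$ registers with $O$ on one tensor factor (cf.\ \Cref{sec: subroutines}), which has $\norm{A}_\infty=\cO(1)$. A standard concentration/median-of-means argument then estimates it to additive error $\epsilon$ with $\cO(k/\epsilon^2)$ copies; substituting $\epsilon=(2(k+1))^{-k}/10$ gives $\cO\!\big(k(2(k+1))^{2k}\big)$, as claimed.

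Finally, comparing $\Omega(2^{n/2})$ with $\cO\!\big(k(2(k+1))^{2k}\big)=2^{\cO(k\log k)}$ produces the stated hierarchy: for $k=\cO(1)$ the upper bound is $\cO(1)$, giving constant-vs-exponential; for $k=\cO(\log n/\log\log n)$ one has $k\log k=\cO(\log n)$, so the upper bound is $\poly(n)$, giving polynomial-vs-exponential; and for $k=o(n/\log n)$ one has $k\log k=o(n)$, so the upper bound is $2^{o(n)}$, giving subexponential-vs-exponential. The one point requiring care in the whole argument is ensuring the $\epsilon$-window of \Cref{thm: main theorem} is nonempty and consistent with these regimes, but as noted above this reduces in every case to $n=\omega(k\log k)$, i.e.\ exactly the range $k=o(n/\log n)$ under consideration.
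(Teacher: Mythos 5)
Your proposal is correct and follows essentially the same route the paper intends: the corollary is a direct instantiation of \Cref{thm: main theorem} with $\norm{O}_1=d$ and $\epsilon=(2(k+1))^{-k}/10$ (noting $k\epsilon^{1/(k+1)}=\Theta(1)$ so the lower bound becomes $\Omega(\sqrt{d})$), combined with the $(k+1)$-replica direct-estimation subroutine of \Cref{sec: subroutines} at cost $\cO(k/\epsilon^2)=\cO(k(2(k+1))^{2k})$. Your verification that the hypotheses of \Cref{thm: main theorem} reduce to $k\log k=o(n)$, hence hold in all three regimes, matches the paper's parameter choices.
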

\Cref{cor: hierarchy} establishes the first fine-grained hierarchy for quantum replica advantage for $k$ up to $o(n/\log n)$, resolving an open problem raised in \cite{chenHierarchyReplicaQuantum2021, focs2024openproblems}.

\paragraph{Upper bounds for estimating $\tr(\rho^{k+1}O)$.}
To study the optimality of our lower bounds, we also prove an upper bound result in \Cref{sec: upper bounds} for estimating $\tr(\rho^{k+1}O)$, which shows that $\lceil (k+1)/2 \rceil$-replica measurements already suffice to match the $\sqrt{d}$ dependence in the lower bound and thus the $\sqrt{d}$ dependence in \Cref{thm: main theorem} is optimal.
\begin{theorem}\label{thm: upper bound} 
Let $k,d\in \N^+$, $\epsilon>0$, and $O$ be a $d$-dimensional observable with $\norm{O}_\infty\leq 1$. There exists a $\ceil{\frac{k+1}{2}}$-replica protocol that estimates $\tr(\rho^{k+1}O)$ within error $\epsilon$ with sample complexity $\tilde{\mathcal{O}}\left(\max\{\frac{k\sqrt{d}}{\epsilon}, \frac{k}{\epsilon^2}\}\right)$.
\end{theorem}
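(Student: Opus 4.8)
\emph{Overall strategy.} The plan is to give an explicit $\lceil(k+1)/2\rceil$-replica protocol with two layers. Write $O = O_\perp + \frac{\tr O}{d}\Id$ with $O_\perp$ traceless, and set $a = \lceil(k+1)/2\rceil$, $b = k+1-a$ (so $a,b\le a$). Every quantity $\tr(\rho^j)$ or $\tr(\rho^j O)$ with $j\le a$ is linear in $\rho^{\otimes j}$, hence estimable to additive error $\epsilon'$ with $\tilde{\mathcal O}(j/\epsilon'^2)$ samples by the generalized cyclic/swap test of \Cref{sec: subroutines} restricted to $\le a$ copies; these ``control'' quantities are cheap. It remains to supply the missing power, i.e.\ to estimate $\tr(\rho^{k+1})$ and $\tr(\rho^{k+1}O_\perp)$, for which I would use a ``one-replica-short'' randomized-measurement subroutine.

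\emph{Core subroutine.} Run $B$ batches; in batch $t$ draw a fresh Haar-random (or Clifford) unitary $V$, shared by all samples of the batch. An ``$a$-sample'' consumes $a$ copies of $\rho$ and one workspace qubit: prepare the qubit in $\ket{+}$, apply the cyclic shift $W_a$ on the $a$ copies controlled on the qubit, measure the qubit in the $X$ basis for a sign $s\in\{\pm1\}$, and measure register $1$ in the basis $V$ for an outcome $c$; a ``$b$-sample'' is the same with $b$ copies. From $\tr_{2,\dots,a}[\rho^{\otimes a}W_a]=\rho^a$ and the Hadamard-test structure one gets $\E[s\,\bone[c=c']\mid V]=\bra{c'}V\rho^a V^\dagger\ket{c'}$ (and $\rho^b$ for a $b$-sample). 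Forming, within each batch, the coincidence statistic $\frac{1}{n_An_B}\sum_{A,B}s_As_B\,\bone[c_A=c_B]\,\bra{c_A}VOV^\dagger\ket{c_A}$, averaging over batches and rescaling by $\Theta(d^2)$ yields, via $\E_V[(V^\dagger\ketbra{1}{1}V)^{\otimes3}]=\Pi^{(3)}_{\mathrm{sym}}/\binom{d+2}{3}$, an estimator whose expectation equals $\sum_{\pi\in S_3}\tr\big[(\rho^a\otimes\rho^b\otimes O)P_\pi\big]$: the two $3$-cycles contribute $2\tr(\rho^{k+1}O)$, and the remaining permutations a fixed linear combination of control quantities and $\tr(\rho^{k+1})\tr O$ (the last term vanishing for $O=O_\perp$, and reducing to control quantities for $O=\Id$). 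Subtracting the control terms (using the layer-one estimates) and halving gives the desired estimator. Since an $a$-sample uses $a$ copies, a $b$-sample $\le a$, and the control subroutines $\le a$, the whole protocol is a $\lceil(k+1)/2\rceil$-replica protocol.

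\emph{Variance and sample count.} View the estimator as a cross U-statistic in the batch samples. A Hoeffding-type decomposition splits its variance into a ``one-block-fixed'' part, controlled by the classical-shadow norms of the trace-class operators $\rho^b O$ and hence $\mathcal O(1)$, contributing $\mathcal O(1/N)$ with $N$ the number of core-subroutine samples; and a ``coincidence'' part, polynomial in $d$ but carried by $0/1$ coincidence indicators of mean $\Theta(1/d)$ and subject to the same leading-order cancellation of $d$-scale terms that underlies single-copy purity estimation, which reduces to $\tilde{\mathcal O}(d/N^2)$ after optimizing the batch size against the number of batches. Hence $\Var=\tilde{\mathcal O}(1/N + d/N^2)$, so $N=\tilde{\mathcal O}(\max\{1/\epsilon^2,\sqrt d/\epsilon\})$ suffices, with median-of-means boosting the success probability. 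Each core sample uses $\Theta(k)$ copies and the layer-one corrections cost only $\tilde{\mathcal O}(\poly(k)/\epsilon^2)$ samples, giving the claimed $\tilde{\mathcal O}(\max\{k\sqrt d/\epsilon,\,k/\epsilon^2\})$.

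\emph{Main obstacle.} The crux is the variance of the core subroutine: proving that the coincidence part is $\tilde{\mathcal O}(d)$ and not larger — which is exactly why a common measurement basis per batch is needed rather than independent classical shadows — by isolating the genuine cancellations in the conditional-on-$V$ second moments and controlling the $\norm{O}_{\mathrm F}\le\sqrt d$ and $\tr O$ contributions that appear there. A secondary bookkeeping burden is to track the $\poly(k)$ factors accumulated through the $\Theta(d^2)$ rescaling, the $S_3$-sum constants, and the accuracy demanded of the control subtractions, so that the final $k$-dependence is as stated.
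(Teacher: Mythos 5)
Your expectation calculations are correct: the Hadamard-test sign trick indeed gives $\E[s\,\bone[c=c']\mid V]=\bra{c'}V\rho^aV^\dagger\ket{c'}$, the batch coincidence statistic rescaled by $(d+1)(d+2)$ has expectation $\sum_{\pi\in S_3}\tr[(\rho^a\otimes\rho^b\otimes O)P_\pi]$, and the two $3$-cycles isolate $2\tr(\rho^{k+1}O)$ up to degree-$\le\lceil(k+1)/2\rceil$ control quantities; the replica count is also right. However, there is a genuine gap exactly where you flag it: the variance bound $\Var=\tilde{\cO}(1/N+d/N^2)$ for the sign-weighted coincidence estimator is asserted by analogy but never established, and it is the entire content of the $\max\{\sqrt{d}/\epsilon,1/\epsilon^2\}$ claim (a naive second-moment bound only gives $\cO(d/\epsilon^2)$ after the $\Theta(d^2)$ rescaling). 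You cannot import this from the distributed inner-product results of \cite{anshuDistributedQuantumInner2022,duOptimalRandomizedMeasurements2025} (\Cref{lem: inner product estimation}) as a black box, because those analyses are for \emph{physical} density matrices measured in a common random basis, whereas in your protocol $\rho^a$ and $\rho^b$ appear only through signed expectations: the coincidence probabilities are governed by $\bra{c}V\rho V^\dagger\ket{c}$ (the marginal of $c$ is that of measuring $\rho$), while the signal is the much smaller $\bra{c}V\rho^aV^\dagger\ket{c}$, so the second-moment cancellations have to be re-derived for this signed, attenuated setting. As it stands, the quantitative heart of the theorem is unproven.

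For comparison, the paper sidesteps this entirely by staying with physical states: it runs the generalized swap test on $\rho^{\otimes k_1}$ (resp.\ $\rho^{\otimes k_2}$) with $k_1=\ceil{\frac{k+1}{2}}$, $k_2=\floor{\frac{k+1}{2}}$, and keeps the post-measurement state, which with probability $\ge 1/2$ is the genuine density matrix $\rho_1=(\rho+\rho^{k_1})/(1+\tr(\rho^{k_1}))$ (similarly $\rho_2$). It then applies \Cref{lem: inner product estimation} as a black box to estimate $\tr(\rho_1\rho_2O)$ and uses the identity $\tr(\rho^{k+1}O)=(1+\tr(\rho^{k_1}))(1+\tr(\rho^{k_2}))\tr(\rho_1\rho_2O)-\tr(\rho^2O)-\tr(\rho^{k_1+1}O)-\tr(\rho^{k_2+1}O)$, recursing once more so that all residual powers drop to degree $\le k_1$ and can be measured directly. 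The most economical repair of your argument is to adopt this move: condition on the $+$ outcome of your Hadamard test and use the post-measurement state $\propto\rho+\rho^a$ as an actual input to the known distributed-estimation lemma, rather than encoding $\rho^a$ in signs; otherwise you must carry out a full Anshu--Landau--Liu/Du-style second-moment analysis for your signed cross U-statistic, which is a substantial piece of work, not bookkeeping.
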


Another possible concern about \Cref{thm: main theorem} is that when $k$ is large, the lower bound only works in the high-precision case ($\epsilon\leq (2(k+1))^{-k}$). However, the upper bound result in \cite{shinResourceefficientAlgorithmEstimating2025a} shows that a high-precision restriction is necessary. Specifically, they prove that when $\epsilon\ge 2(k+1)e^{-k}$, we can infer $\tr(\rho^{k+1})$ from estimators of $\{\tr(\rho^{i})\}_{i=1}^k$, which can be obtained efficiently using $k$-replica protocols.

\paragraph{Lower bounds for spectrum testing}
As another illustrative application of the indistinguishability principle, we obtain a similar lower bound for spectrum testing~\cite{montanaroSurveyQuantumProperty2016,donnell2015spectrum, wrightHowLearnQuantum}.
Let $\bp=(p_1,\dots, p_d)$ and $\bq=(q_1,\dots, q_d)$ be two probability distributions. The $\bp$-versus-$\bq$ spectrum testing~\cite{pelecanosBeatingFullState2025} task asks one to determine whether the spectrum of an unknown state $\rho$ is $\bp$ or $\bq$, promised to be one of the two. 
Depending on the choice of $\bp$ and $\bq$, this task is related to several previously studied problems.
For example, when $\bq=(1/d, \cdots, 1/d)$ is the uniform distribution, this problem is closely related to the mixedness testing \cite{donnell2015spectrum, bubeckEntanglementNecessaryOptimal2020, chenTightBoundsQuantum2022} that asks one to certify a maximally mixed state. When $\bp$ and $\bq$ have different numbers of nonzero elements, the task reduces to rank testing \cite{donnell2015spectrum}.

If $\bp$ and $\bq$ agree on all moments up to degree $k$, a natural distinguishing strategy is to estimate the $(k+1)$-th moment, which indeed requires $(k+1)$-replica measurements. The following theorem (proved in \Cref{sec: spectrum testing}) shows that this approach is in fact optimal in terms of the number of replicas:

\begin{theorem}
    Suppose that $\sum_{r=1}^d p_r^i=\sum_{r=1}^d q_r^i$ for all $i\in [k]$, and that both $\bp$ and $\bq$ have at most $m$ nonzero elements. Then the sample complexity of $\bp$-versus-$\bq$ spectrum testing is at least $\Omega(\sqrt{d}/(m\sqrt{\ln m}))$ for any $k$-replica protocol.
\end{theorem}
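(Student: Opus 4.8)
The plan is to reduce $\bp$-versus-$\bq$ spectrum testing to distinguishing two Haar-assembled ensembles from a common average and then invoke \Cref{thm:indistinguishability_informal}. Writing the nonzero entries of $\bp$ as $p_1,\dots,p_m$, put $\rho_{\bp}=\sum_{r=1}^m p_r\psi_r$ with $\psi_1,\dots,\psi_m$ i.i.d.\ $d$-dimensional Haar-random pure states, let $\cE_{\bp}=\{\rho_{\bp}\}$, and define $\cE_{\bq}$ analogously. These are not yet the exact-spectrum instances the task demands (the $\psi_r$ are only near-orthonormal), so I first prove the lower bound for $\cE_{\bp}$ versus $\cE_{\bq}$ and fix up the spectrum afterwards.

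First comes a replicas-to-dimension reduction: any $k$-replica protocol on $\rho$ is simulated verbatim by a $1$-replica protocol on the super-state $\rho^{\otimes k}$ (supply each batch of $\le k$ fresh copies from one fresh copy of $\rho^{\otimes k}$, discarding unused registers), and a $1$-replica protocol consuming $B$ copies of $\rho^{\otimes k}$ uses $\ge B$ copies of $\rho$; so a $1$-replica lower bound for $\{\rho_{\bp}^{\otimes k}\}$ versus $\{\rho_{\bq}^{\otimes k}\}$ transfers to a $k$-replica lower bound for $\cE_{\bp}$ versus $\cE_{\bq}$. Now $(\sum_r p_r\psi_r)^{\otimes k}=\sum_{\br\in[m]^k}\big(\prod_{i=1}^k p_{r_i}\big)\psi_{r_1}\otimes\cdots\otimes\psi_{r_k}$, and sorting equal factors with permutation unitaries casts $\{\rho_{\bp}^{\otimes k}\}$ into the form of \Cref{def: Haar assembled ensembles}, with the probability distribution given by the coefficients $\prod_i p_{r_i}$, trivial $\tau_j$, and multiplicities $\le k$. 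The total weight on terms in which some $\psi_r$ is repeated more than $t$ times is $\le\sum_r\Pr[\mathrm{Bin}(k,p_r)>t]$, which is negligible once $t=\Theta(\log m)$ in the regime where the hypothesis is non-vacuous (so no $p_r$ is too close to $1$); discarding that tail leaves a Haar-assembled ensemble whose effective multiplicity is $O(\log m)$ rather than $k$.

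The crux is that the two averages agree: $\bar\sigma_{\bp}:=\E_{\psi_1,\dots,\psi_m}[\rho_{\bp}^{\otimes k}]=\bar\sigma_{\bq}$. Since $\E[\psi_{r_1}\otimes\cdots\otimes\psi_{r_k}]$ is a tensor product of normalized symmetric-subspace projectors placed according to the equalities among the $r_i$, summing over $\br$ shows that $\bar\sigma_{\bp}$ is a fixed linear combination of fixed operators whose coefficients are homogeneous degree-$k$ symmetric polynomials in $\bp$; every such polynomial lies in $\Q[p_1(\bp),\dots,p_k(\bp)]$ with $p_j(\bp)=\sum_r p_r^j$, and by hypothesis $\bp$ and $\bq$ share these values. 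With the averages equal, \Cref{thm:indistinguishability_informal} (applied with $d$-dimensional Haar states, $m$ of them, effective multiplicity $O(\log m)$) gives that any $1$-replica protocol needs $\Omega(\sqrt d/(m\sqrt{\log m}))$ copies of $\rho_{\bp}^{\otimes k}$, and likewise of $\rho_{\bq}^{\otimes k}$, to distinguish it from $\bar\sigma$; a triangle inequality on the outcome distributions yields the same bound for $\{\rho_{\bp}^{\otimes k}\}$ versus $\{\rho_{\bq}^{\otimes k}\}$, hence for $\cE_{\bp}$ versus $\cE_{\bq}$ under $k$-replica protocols.

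Finally, to obtain genuine exact-spectrum instances: for each draw of the $\psi_r$, Gram--Schmidt orthonormalization produces $|e_1\rangle,\dots,|e_m\rangle$ with $\rho_{\bp}^{\mathrm{ex}}:=\sum_r p_r|e_r\rangle\langle e_r|$ of spectrum \emph{exactly} $\bp$, and $\||e_r\rangle-|\psi_r\rangle\|$ is governed by the pairwise overlaps $|\langle\psi_s|\psi_r\rangle|$, which concentrate at scale $d^{-1/2}$; this gives $\E_\psi\|\rho_{\bp}^{\mathrm{ex}}-\rho_{\bp}\|_1=\tilde O(\sqrt{m/d})$, so the joint state of $N$ copies moves by at most $N$ times this amount in trace distance, which is $o(1)$ as long as $N=o(\sqrt d/(m\sqrt{\log m}))$. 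One last triangle inequality shows that distinguishing the exact-spectrum ensembles also costs $\Omega(\sqrt d/(m\sqrt{\ln m}))$ for any $k$-replica protocol. I expect the main obstacles to be (i) making the non-vacuous regime precise and bounding the truncation tail for general (non-uniform) $\bp,\bq$, since this is exactly what turns a $\sqrt k$ into a $\sqrt{\log m}$, and (ii) carrying out the symmetric-subspace computation of $\bar\sigma_{\bp}$ carefully enough to see that it depends on $\bp$ only through the first $k$ power sums --- the step where the moment-matching hypothesis enters.
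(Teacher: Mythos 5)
Your overall architecture mirrors the paper's proof: build $\cE_{\bp},\cE_{\bq}$ from Haar states, use the replicas-to-dimension reduction so that $\{\rho_{\bp}^{\otimes k}\}$ is a Haar-assembled ensemble, argue that moment matching up to degree $k$ forces $\E[\rho_{\bp}^{\otimes k}]=\E[\rho_{\bq}^{\otimes k}]$ (your symmetric-polynomial argument is exactly \Cref{lem: hard instance moments}), invoke the indistinguishability principle plus a triangle inequality, and finally Gram--Schmidt-round to states with exact spectrum (this is \Cref{lem: round spectrum}). However, there is a genuine gap in the quantitative step you yourself flag as the crux: the truncation to ``effective multiplicity $O(\log m)$.'' The tail weight you want to discard is $\sum_r\Pr[\mathrm{Bin}(k,p_r)>t]$, and this is \emph{not} negligible at $t=\Theta(\log m)$ whenever some $p_r$ is bounded away from $0$ and $k\gg\log m$: the typical multiplicity of $\psi_r$ is $kp_r$, so the ``tail'' then carries almost all of the weight. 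Your caveat that no $p_r$ is ``too close to $1$'' does not help, and the hypotheses do not exclude this regime --- e.g.\ $\bp=(\tfrac12,\tfrac12\bp')$, $\bq=(\tfrac12,\tfrac12\bq')$ with $\bp',\bq'$ moment-matched up to degree $k$ satisfies the assumptions with a constant-size entry and $k$ as large as $m-O(1)$. Without the truncation, your route through the informal principle (which is stated in terms of the maximal multiplicity) only yields a bound degraded by $\mathrm{poly}(k)$ rather than $\sqrt{\ln m}$, so the claimed $\Omega(\sqrt d/(m\sqrt{\ln m}))$ is not established in general.

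The paper does not need any truncation because its formal indistinguishability bound (\Cref{thm: Haar-assembled ensemble vs mean}) is controlled by the \emph{average} occurrence $a_r=\sum_j p_ja_{jr}$ and the average squared occurrence, not by the maximal multiplicity. For $\{\rho_{\bp}^{\otimes k}\}$ one has $a_r=kp_r$ and $a'_r=k(k-1)p_r^2+kp_r$, so the total-variation bound is at most $\frac{1}{d}\bigl((kT\sum_r p_r)^2+kT\sum_r p_r\bigr)=O((kT)^2/d)$, independent of $m$ and of how large individual $p_r$ are; the distinguishing step alone therefore costs $\Omega(\sqrt d)$ samples. In the paper the $m\sqrt{\ln m}$ loss in the final bound comes entirely from the spectrum-rounding step, whereas in your accounting it comes from the (flawed) truncation while your sharper $\tilde O(\sqrt{m/d})$ rounding estimate plays that role --- so to repair your argument you should replace the informal, max-multiplicity bound by the average-occurrence version (or rederive the likelihood-ratio bound so that it depends on $\sum_r(kp_r)^2\le k^2$), after which the truncation can simply be deleted.
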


By carefully designing $\bp$ and $\bq$, this theorem implies various lower bounds for learning spectral properties of unknown states under restricted measurements. For instance, taking $\bp$ and $\bq$ with $k$ and $k+1$ nonzero entries respectively, we obtain the following corollary for rank testing.
\begin{corollary}\label{cor: rank testing}
    Let $k, d\in \N^+$ and $\epsilon=2/(k+1)^3$. Using $k$-replica protocols, $\Omega(\sqrt{d}/(k\sqrt{\ln k}))$ copies are necessary to test whether a $d$-dimensional unknown state $\rho$ has rank at most $k$ or is $\epsilon$-far (in trace distance) from any state with rank at most $k$.
\end{corollary}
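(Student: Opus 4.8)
The plan is to reduce rank testing to the $\bp$-versus-$\bq$ spectrum testing problem and then apply the spectrum-testing theorem with $m=k+1$. I would exhibit probability vectors $\bp,\bq$ such that (i) $\bp$ has exactly $k$ nonzero entries and $\bq$ has exactly $k+1$; (ii) $\sum_r p_r^i=\sum_r q_r^i$ for all $i\in[k]$; and (iii) the least nonzero entry of $\bq$ is at least $\epsilon=2/(k+1)^3$. A state with spectrum $\bp$ has rank $k\le k$, whereas---by the trace-distance fact below together with (iii)---a state with spectrum $\bq$ is $\epsilon$-far in trace distance from every state of rank at most $k$. Hence any rank-$\le k$ tester distinguishes spectrum $\bp$ from spectrum $\bq$, so its sample complexity is at least that of $\bp$-versus-$\bq$ spectrum testing, which by (ii) and the theorem is $\Omega\bigl(\sqrt d/((k+1)\sqrt{\ln(k+1)})\bigr)=\Omega\bigl(\sqrt d/(k\sqrt{\ln k})\bigr)$.

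For the trace-distance fact I would use the identity $\min_{\operatorname{rank}(\sigma)\le k}\tfrac12\|\rho-\sigma\|_1=\sum_{i>k}\lambda_i(\rho)$, valid for any state $\rho$ with eigenvalues $\lambda_1\ge\lambda_2\ge\cdots$: the upper bound follows by truncating $\rho$ to its top-$k$ eigenvectors and renormalizing, and the matching lower bound by testing $\rho-\sigma$ against the projector onto $\ker\sigma$, whose overlap with $\rho$ is at least the sum of the $d-k$ smallest eigenvalues of $\rho$. For a spectrum-$\bq$ state of rank exactly $k+1$ this distance is precisely the least nonzero entry of $\bq$, so (iii) is exactly the required $\epsilon$-farness.

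The substance lies in constructing $\bp$ and $\bq$, for which I would work through the correspondence between spectra, power sums, and polynomials. Newton's identities convert the first $k$ power sums $1,s_2,\dots,s_k$ of a rank-$k$ spectrum into a monic degree-$k$ polynomial $P$ whose roots are its nonzero eigenvalues; moreover every rank-$(k+1)$ spectrum sharing these $k$ power sums has eigenvalue polynomial $Q(x)=xP(x)+(-1)^{k+1}c$, where $c=\prod_j q_j>0$ is the one remaining free parameter (matching $s_1,\dots,s_k$ fixes the top $k$ coefficients of $Q$, and the $-x^k$ coefficient already forces $\sum_j q_j=1$). So it suffices to pick a rank-$k$ $\bp$---equivalently $P$ with $k$ simple positive roots summing to $1$---and a $c>0$ for which $g(x)\coloneqq xP(x)$ meets the horizontal line $(-1)^k c$ in $k+1$ positive points with the smallest one at least $\epsilon$. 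As $c$ grows from $0$, this smallest intersection rises from $0$ roughly like $c/\prod_j p_j$, approaching the location $x_0$ of the extremum of $g$ on $(0,\lambda_{\min}(\bp))$; a short estimate (from $1/x_0=\sum_j 1/(p_j-x_0)\le k/(\lambda_{\min}(\bp)-x_0)$) gives $x_0\ge\lambda_{\min}(\bp)/(k+1)$. The catch is that $c$ cannot exceed the depth of the shallowest extremum of $g$ between consecutive roots of the appropriate sign; for a ``generic'' $\bp$ (e.g.\ equally spaced eigenvalues) that depth is exponentially small in $k$ relative to $\prod_j p_j$, which would only give an exponentially small $\epsilon$. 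I would therefore take $\bp$'s eigenvalues of Chebyshev type, so that $g=xP$ has inter-root oscillations of comparable size and $c$ may be taken only polynomially small; then the least eigenvalue of $\bq$ can be pushed up to order $\lambda_{\min}(\bp)/(k+1)$, and calibrating the Chebyshev interval so that $\lambda_{\min}(\bp)\ge 2/(k+1)^2$ while the eigenvalues still sum to $1$ delivers $\epsilon=2/(k+1)^3$; the remaining $k$ eigenvalues of $\bq$ are then automatically positive.

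The main obstacle is precisely this quantitative construction: choosing $\bp$'s eigenvalues so as to simultaneously (a) keep $\lambda_{\min}(\bp)$ of order $1/k^2$, (b) make $g=xP$ equioscillation-like so the admissible $c$ is polynomially (not exponentially) large, (c) ensure that raising $c$ to that threshold leaves all $k+1$ roots of $Q$ real and positive, and (d) track constants closely enough to land exactly on $2/(k+1)^3$. Once $(\bp,\bq)$ is in hand, the reduction, the trace-distance identity, and the appeal to the spectrum-testing theorem are all routine.
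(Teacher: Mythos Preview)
Your proposal is correct and takes essentially the same approach as the paper: reduce to $\bp$-versus-$\bq$ spectrum testing via moment-matched spectra of ranks $k$ and $k+1$ built from Chebyshev polynomials, then invoke the spectrum-testing lower bound with $m=k+1$ (the paper's trace-distance step uses Mirsky's inequality rather than your exact identity, but both yield $q_{k+1}$). The paper resolves the construction you flag as the obstacle more symmetrically---rather than fixing $\bp$ and then tuning $c$ in $Q=xP+(-1)^{k+1}c$, it rescales $T_{k+1}$ affinely to a monic $T$ with leading terms $x^{k+1}-x^k$ and takes $\bp,\bq$ to be the roots of $T(x)=\pm\delta$ at the equioscillation level, so moment-matching through degree $k$ is automatic (the two polynomials differ only in their constant terms), one sequence contains $0$ for free since $-1$ is always a root of $T_{k+1}(x)=\pm1$, and the smallest nonzero root is explicitly $\tfrac{1}{k+1}\bigl(1-\cos\tfrac{\pi}{k+1}\bigr)\ge 2/(k+1)^3$.
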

Another application is a lower bound for estimating $\tr(\rho^\alpha)$ for a non-integer $\alpha$, a quantity closely related to the Tsallis entropy~\cite{tsallisPossibleGeneralizationBoltzmannGibbs1988} defined as $S_\alpha(\rho)\coloneqq(1-\tr(\rho^\alpha))/(\alpha-1)$.

\begin{corollary}\label{cor: noninteger power}
    Let $d, m, k\in \N^+$ and $\alpha>0$ be a non-integer. Define $\epsilon(\alpha, k, m)$ as the maximum of $\frac12 (\sum_{r=1}^{m}p_r^\alpha -  \sum_{r=1}^m q_r^\alpha)$ over all pairs of distributions $\{p_r\}_{r=1}^m, \{q_r\}_{r=1}^m$ such that $\sum_{r=1}^m p_r^i=\sum_{r=1}^m q_r^i, \forall i\in [k]$. Given copy access of an unknown $d$-dimensional state $\rho$, the sample complexity of estimating $\tr(\rho^{\alpha})$ within error $\epsilon(\alpha, k, m)$ is at least $\Omega(\sqrt{d}/(m\sqrt{\ln m}))$.
\end{corollary}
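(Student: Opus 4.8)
The plan is to obtain this as a direct corollary of the spectrum testing theorem via the usual estimation-to-distinguishing reduction. First I would fix a pair of probability vectors $\bp^\star=(p^\star_1,\dots,p^\star_m)$ and $\bq^\star=(q^\star_1,\dots,q^\star_m)$ on $[m]$ that attains the maximum defining $\epsilon(\alpha,k,m)$; such a maximizer exists because the set of pairs of $m$-point distributions agreeing on the first $k$ power sums is compact (nonempty, closed, and bounded) and $\tfrac12(\sum_r p_r^\alpha-\sum_r q_r^\alpha)$ is continuous on it. By construction $\bp^\star$ and $\bq^\star$ agree on all moments of orders $1,\dots,k$ and each has at most $m$ nonzero entries, so, padding with zeros, they are legitimate spectra of $d$-dimensional states (we assume $m\le d$). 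If $\epsilon(\alpha,k,m)=0$ the claimed bound holds vacuously — no finite number of copies lets one estimate a quantity to error exactly $0$ — so we may assume $\epsilon(\alpha,k,m)>0$, which in particular forces $\bp^\star\neq\bq^\star$.

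Next I would exploit the fact that $\tr(\rho^\alpha)$ depends only on the eigenvalues of $\rho$: if $\rho$ has spectrum $\bp^\star$ then $\tr(\rho^\alpha)=\sum_{r=1}^m(p^\star_r)^\alpha$, and if $\rho$ has spectrum $\bq^\star$ then $\tr(\rho^\alpha)=\sum_{r=1}^m(q^\star_r)^\alpha$, and these two values are exactly $2\epsilon(\alpha,k,m)$ apart. Consequently, any $k$-replica protocol that estimates $\tr(\rho^\alpha)$ to additive error strictly below $\epsilon(\alpha,k,m)$ with probability $0.9$ immediately solves $\bp^\star$-versus-$\bq^\star$ spectrum testing at the same sample cost: run the estimator and compare its output to the midpoint $\tfrac12\bigl(\sum_r(p^\star_r)^\alpha+\sum_r(q^\star_r)^\alpha\bigr)$. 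Since $\bp^\star$ and $\bq^\star$ match on all moments up to degree $k$ and are supported on at most $m$ points, the spectrum testing theorem gives a lower bound of $\Omega(\sqrt d/(m\sqrt{\ln m}))$ copies for any $k$-replica protocol, and the reduction transfers this bound to the estimation task, proving the corollary.

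The only point needing a little care — and the closest thing to an obstacle — is the boundary behaviour in the error parameter: a distinguisher built from an estimator whose error is at most \emph{exactly} $\epsilon(\alpha,k,m)$ can fail if the estimator concentrates on the midpoint, so the reduction is clean for error $<\epsilon(\alpha,k,m)$ and the equality case is the limiting statement (equivalently, one loses only an arbitrarily small amount in the error tolerance). For a fully self-contained reading one would additionally note that $\epsilon(\alpha,k,m)>0$ once $m$ is large enough relative to $k$: since $\alpha\notin\Z$ and $\alpha>0$, the function $x\mapsto x^\alpha$ on $(0,1]$ is not in the span of $1,x,\dots,x^k$, so there is a finitely supported signed measure of total mass $0$ that annihilates all monomials of degree $\le k$ but integrates $x^\alpha$ to a nonzero value; perturbing a fixed interior distribution by a small multiple of it and splitting the perturbation into its positive and negative parts produces the required distinct pair $\bp^\star\neq\bq^\star$. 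Everything else is bookkeeping, so I do not anticipate any genuine difficulty.
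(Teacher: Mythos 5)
Your proposal is correct and follows essentially the same route as the paper: take a maximizing pair $\bp,\bq$, note that $\tr(\rho^\alpha)$ differs by exactly $2\epsilon(\alpha,k,m)$ between the two spectra, and invoke the spectrum-testing lower bound (Theorem~\ref{thm: spectrum testing}) via the standard estimation-to-distinguishing reduction. Your additional remarks on the existence of the maximizer, the strict-versus-non-strict error boundary, and the positivity of $\epsilon(\alpha,k,m)$ are just extra bookkeeping that the paper leaves implicit.
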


Previous works on testing rank~\cite{donnell2015spectrum} or estimating non-integer powers of the trace~\cite{liuEstimatingTraceQuantum2025, chenImprovedSampleUpper2025} typically allow arbitrary joint measurements. Consequently, their sample complexities are dimension-independent --- depending only polynomially on the rank and the precision, but not on the dimension $d$. In contrast, our results show that under bounded-replica protocols, the sample complexity necessarily scales as $\Omega(\sqrt{d})$, thereby suffering from an unavoidable exponential dependence on the system size.

\subsection{Overview of techniques}
Here, we provide a high-level discussion of techniques.

\paragraph{Lower bounds for estimating nonlinear properties.} 
A standard approach to proving lower bounds is to reduce estimation to a distinguishing problem between two carefully designed ensembles $\cE_1$ and $\cE_2$.
The construction must satisfy two conditions that
\begin{enumerate}
\item[(a)] no $k$-replica protocol can efficiently distinguish them,
\item[(b)] we can efficiently distinguish them given the ability to estimate $\tr(\rho^{k+1}O)$.
\end{enumerate}
These two conditions together imply the hardness of estimating $\tr(\rho^{k+1}O)$ under $k$-replica protocols.
The first condition implies that $\rho^{\otimes i}$ has the same distribution under both ensembles for every $i\le k$, otherwise one could distinguish them by estimating some linear observable on $\rho^{\otimes i}$. The second condition demands that $\tr(\rho^{k+1}O)$ differ significantly between the two ensembles.

To construct such ensembles, we start with two distributions $\bp=(p_1,\dots,p_m)$ and $\bq=(q_1,\dots,q_m)$ that match on all moments up to degree $k$ but differ as much as possible on the $(k+1)$-th moment, that is, $\sum_{r=1}^mp_r^i=\sum_{r=1}^m q_r^i$ for $i\leq k$ but $\abs{\sum_{r=1}^mp_r^{k+1}-\sum_{r=1}^m q_r^{k+1}}$ is as large as possible.
Using Chebyshev polynomials, we achieve a gap of $2(2(k+1))^{-k}$ in the $(k+1)$-th moment (\Cref{lem: equal moment sequences}). We believe this to be optimal, and in fact it is the only obstacle to extending the lower bound beyond $\epsilon=\mathcal{O}((2(k+1))^{-k})$.
Given $\bp$ and $\bq$, we define
\begin{equation}
    \cE_1\coloneqq \left\{\sum_{r=1}^m p_r\psi_r\right\}_{\psi_1,\cdots, \psi_m\leftarrow \mu_\Haar(d)},\quad \cE_2\coloneqq \left\{\sum_{r=1}^m q_r\psi_r\right\}_{\psi_1,\cdots, \psi_m\leftarrow \mu_\Haar(d)},\label{equ: overview 1}
\end{equation}
where $\psi_1,\dots,\psi_m$ are independent Haar-random pure states. 
To prove condition (a), observe that distinguishing $\cE_1$ and $\cE_2$ with a $k$-replica protocol is equivalent to distinguishing $\cE_1^{(k)}\coloneqq \{\rho^{\otimes k}\}_{\rho\leftarrow \cE_1}$ and $\cE_2^{(k)}\coloneqq \{\rho^{\otimes k}\}_{\rho\leftarrow \cE_2}$ using $1$-replica protocols on $k$ qudits.
A key observation is that $\cE_1^{(k)}$ and $\cE_2^{(k)}$ are both Haar-assembled ensembles satisfying Definition~\ref{def: Haar assembled ensembles}.
Furthermore, $\cE_1^{(k)}$ and $\cE_2^{(k)}$ have the same average.
Indeed, $\E_{\rho\sim \cE_1}[\rho^{\otimes k}]$ is symmetric in the entries of $\bp$, and hence can be expressed solely in terms of the moments of $\bp$ up to degree $k$, which equal those of $\bq$ by construction.
By our indistinguishability principle for Haar-assembled ensembles, both $\cE_1^{(k)}$ and $\cE_2^{(k)}$ are hard to distinguish from this common average using 1-replica protocols, which is defined on $k$ qudits. 
The triangle inequality then implies that $\cE_1^{(k)}$ and $\cE_2^{(k)}$ are indistinguishable under 1-replica protocols, establishing condition (a).

Now we explain the condition (b). Direct calculation gives $\E_{\rho\leftarrow \cE_1}[\tr(\rho^{k+1}O)]-\E_{\rho\leftarrow \cE_2}[\tr(\rho^{k+1}O)] \approx 2(2(k+1))^{-k}\abs{\tr(O)}/d$. Therefore, we can distinguish the two ensembles given estimators of $\tr(\rho^{k+1}O)$ within error $(2(k+1))^{-k}\abs{\tr(O)}/d$. However, this argument fails when $O$ is traceless. The reason is that the positive and negative parts of $O$ cancel out in the calculation. To circumvent this issue, for a possibly traceless $O$, we select $d/2$ eigenvalues of $O$ such that the absolute value of their sum is at least $\norm{O}_1/4$. Then we define $\cE_1,\cE_2$ inside the $(d/2)$-dimensional subspace spanned by these eigenvalues. In this subspace, the $d/2$ eigenvalues will not cancel out completely so the argument above follows.

\paragraph{Indistinguishability principle for Haar-assembled ensembles}
We next sketch the proof of the indistinguishability principle. The simplest Haar-assembled ensemble is $\cE=\{\psi\}_{\psi\sim \mu_\Haar(d)}$, i.e., a single Haar-random state. In \cite{chenExponentialSeparationsLearning2022}, it was shown that any 1-replica protocol requires $\sqrt{d}$ samples to distinguish $\cE$ from its mean, namely the maximally mixed state $\rho_0$. Although this is already a special case, the proof is highly nontrivial and contains the main ideas of the general argument. So we sketch its proof here.

Suppose a protocol applies 1-replica measurements to $T$ copies of $\rho$. These measurements can be combined into a single measurement $\{F_s\}_s$ on $\rho^{\otimes T}$, where each $F_s$ has product form. For every $F_s$, the one-sided likelihood ratio satisfies
\begin{equation}
    \frac{\tr(F_s \E_\psi[\psi^{\otimes T}])}{\tr(F_s \rho_0^{\otimes T})}=\frac{d^T}{d(d+1)\cdots (d+T-1)}\frac{\tr(F_s \sum_{\pi\in S_T}\pi)}{\tr(F_s)}\overset{*}{\ge} \frac{d^T}{d(d+1)\cdots (d+T-1)}\ge 1-\frac{T^2}{d}.
\end{equation}
Here $(*)$ is a property of permutation operators proved in \cite[Lemma 5.12]{chenExponentialSeparationsLearning2022}. Thus, when $T=o(\sqrt{d})$, the outcome distributions under $\E_\psi[\psi^{\otimes T}]$ and under $\rho_0^{\otimes T}$ are statistically close so that no algorithm can distinguish the two cases. This yields the $\Omega(\sqrt{d})$ lower bound.

The general indistinguishability principle follows a similar strategy but with two additional complications. First, in our setting the state is a mixture rather than a pure state, so we must expand $\psi^{\otimes T}$ and bound the likelihood ratio term by term. Second, each term is now a tensor product of multiple Haar-random states, making the Haar integral more involved. To overcome this, we prove a generalized version of $(*)$ in \Cref{lem: permutation inequality}.

\paragraph{Lower bounds for spectrum testing.} We now show how to apply the technique to spectrum testing.
For two spectra $\bp$ and $\bq$ that need to test, we construct $\cE_1$ and $\cE_2$ as in \eqref{equ: overview 1}. They are hard to distinguish for the same reason.
The only problem is that the spectrum of $\sum_r p_r\psi_r$ is not exactly $\bp$ because Haar random states may not be orthogonal, although with high probability they are approximately orthogonal. We address this with a rounding lemma (\Cref{lem: round spectrum}, essentially the Gram-Schmidt procedure) to find a nearby state whose spectrum is exactly $\bp$. This ensures that replacing the ensembles with their rounded versions introduces only a small error.

\paragraph{Upper bounds for estimating nonlinear properties.}
For \Cref{thm: upper bound}, denote $k_1=\ceil{\frac{k+1}{2}}$ and $k_2=\floor{\frac{k+1}{2}}$ so that $k_1+k_2=k+1$. 
Performing a generalized swap test on $\rho^{\otimes k_1}$ and tracing out $(k_1-1)$ registers, the post-measurement state is $\rho_1=(\rho+\rho^{k_1})/(1+\tr(\rho^{k_1}))$ with probability $(1+\tr(\rho^{k_1}))/2 >1/2$. 
The state $\rho_2$ is defined in the same way with $k_2$.
With high probability, we can prepare a sufficient number of copies of $\rho_1$ and $\rho_2$. As a byproduct, we also get estimators for $\tr(\rho^{k_1})$ and $\tr(\rho^{k_2})$ by counting the number of successful preparations. Now we apply protocols in \cite{anshuDistributedQuantumInner2022,duOptimalRandomizedMeasurements2025} to estimate $\tr(\rho_1\rho_2O)$ in a distributed way.
Using the identity $\tr(\rho^{k+1}O)=(1+\tr(\rho^{k_1}))(1+\tr(\rho^{k_2}))\tr(\rho_1\rho_2O)-\tr(\rho^2O)-\tr(\rho^{k_1+1}O)-\tr(\rho^{k_2+1}O)$, we reduce the task of estimating $\tr(\rho^{k+1}O)$ to estimating $\tr(\rho^2O)$, $\tr(\rho^{k_1+1}O)$, and $\tr(\rho^{k_2+1}O)$, all of which involve strictly smaller degrees (except in the base cases $k=1,2,3$, which we handle separately). By applying this reduction one more time, i.e., $k_1+1=\ceil{\frac{k_1+1}{2}}+\floor{\frac{k_1+1}{2}}$ and $k_2+1=\ceil{\frac{k_2+1}{2}}+\floor{\frac{k_2+1}{2}}$, we ensure that all degrees are eventually not larger than $k_1$, which can be estimated directly using $k_1$-replica protocols.
The total sample complexity is $\tilde{\mathcal{O}}\left(\max\{\frac{k\sqrt{d}}{\epsilon}, \frac{k}{\epsilon^2}\}\right)$ due to the distributed estimation of $\tr(\rho_1\rho_2O)$.


\subsection{Future directions}
\paragraph{Error dependence.}
Our lower bound for estimating $\tr(\rho^{k+1}O)$ exhibits a poor error dependence of $\epsilon^{-1/(k+1)}$. Can this be improved? We note that the optimal dependence is unclear even for purity estimation (i.e., $\tr(\rho^2)$). In fact, it is only known that with 1-replica protocols the sample complexity is $\mathcal{O}(\max\{\tfrac{\sqrt{d}}{\epsilon},\tfrac{1}{\epsilon^2}\})$ and $\Omega(\max\{\tfrac{\sqrt{d}}{\sqrt{\epsilon}},\tfrac{1}{\epsilon^2}\})$ \cite{gongSampleComplexityPurity2024}.

\paragraph{Error regime.}
\Cref{thm: main theorem} shows that estimating $\tr(\rho^{k+1})$ is hard when $\epsilon=\mathcal{O}((2(k+1))^{-k})$, while \cite{shinResourceefficientAlgorithmEstimating2025a} shows that it becomes easy when $\epsilon>2(k+1)e^{-k}$. 
A nontrivial gap therefore remains. Pinpointing the precise threshold of this phase transition is an interesting open problem.

\paragraph{More applications in spectrum testing.}
Our application to spectrum testing (\Cref{thm: spectrum testing}) is meaningful only in the regime $m<\sqrt{d}$. The main obstacle is that when $m$ is large, the spectrum of $\sum_{r}p_r\psi_r$ can deviate significantly from $(p_1,\dots,p_d)$. A possible workaround is to consider ensembles of the form $\{U\operatorname{diag}(p_1,\dots,p_d)U^\dagger\}$ for Haar-random unitaries $U$. However, analyzing such ensembles requires heavy use of representation theory and Weingarten calculus. Can the indistinguishability principle be established for this more natural class of ensembles?

\paragraph{Considering state resetting.}
While our results show that $(k+1)$ replicas are necessary for efficiently estimating $\tr(\rho^{k+1}O)$, this does not imply that $n(k+1)$ working qubits are strictly required, since replicas can be loaded sequentially. 
For instance, there exists a 3-replica protocol that estimates $\tr(\rho^3O)$ using only $(2n+1)$ working qubits: one applies a swap test on two replicas, resets one register to a fresh copy, and then applies another swap test. 
Therefore, a practically meaningful question is that, can we identify a quantum learning task, such as the estimation of $\tr(\rho^{\otimes (k+1)}O_k)$ with a general form of $O_k$, that is exponentially hard for all protocols utilizing $nk$ working qubits?

\section{Related works}\label{sec: related works}
\paragraph{Estimating nonlinear properties of quantum states}
The celebrated swap test \cite{buhrmanQuantumFingerprinting2001} estimates $\tr(\rho^2O)$ using a simple quantum circuit consisting of a controlled-SWAP gate and two Hadamard gates. 
Since then, various methods have been developed to estimate $\tr(\rho^kO)$ for general $k$ and to reduce the overhead, including the generalized swap test~\cite{ekertDirectEstimationsLinear2002}, the two-copy test~\cite{subasiEntanglementSpectroscopyDepthtwo2019}, and multivariate trace estimation~\cite{quekMultivariateTraceEstimation2024}.
For a detailed review of existing methods, we refer the reader to \cite{shinResourceefficientAlgorithmEstimating2025a}, which also provides an improved algorithm for the low-rank case.
On the lower-bound side, recent works~\cite{chenSimultaneousEstimationNonlinear2025,zhangMeasuringLessLearn2025} establish a sample complexity of $\Omega(k/\epsilon^2)$, matching the known upper bounds.

Besides the sample-access model, some papers also study the ``purified query access model'', where one additionally has access to a purification of the unknown state or the oracle to prepare the purification~\cite{gilyenDistributionalPropertyTesting2020}. 
This model can sometimes reduce the overhead in predicting nonlinear quantum properties. 
For instance, \cite{chenSimultaneousEstimationNonlinear2025,zhangMeasuringLessLearn2025} show that with purification, one can estimate $\tr(\rho^kO)$ using $\mathcal{O}(\sqrt{k})$ queries, achieving a quadratic improvement. Similarly, \cite{liuExponentialSeparationsQuantum2025} proves that if $\rho$ can be purified with a constant number of ancilla qubits, then $\tr(\rho^kO)$ can be estimated with only a constant number of queries.

More general nonlinear properties have also been investigated. A particularly important class includes the von Neumann and Rényi entropies \cite{acharyaEstimatingQuantumEntropy2020,wangQuantumAlgorithmsEstimating2023, wangTimeEfficientQuantumEntropy2024, gilyenDistributionalPropertyTesting2020, gurSublinearQuantumAlgorithms2021, wangNewQuantumAlgorithms2024}. Recently, \cite{liuEstimatingTraceQuantum2025} studied the sample, query, and time complexity of estimating $\tr(\rho^q)$ for non-integer $q$. Further improvements were obtained in \cite{chenImprovedSampleUpper2025}, which showed that the sample complexity is $\tilde{\Theta}(1/\epsilon^2)$ for $q>2$, and lies between $\tilde{O}(1/\epsilon^{2/(q-1)})$ and $\Omega(1/\epsilon^{\max\{1/(q-1),2}\})$ for $1<q<2$.
Moreover, complexities of quantum spectrum learning and rank testing problems have also been thoroughly studied in \cite{donnell2015spectrum} without limitations in allowed quantum operations.

\paragraph{Learning with restricted measurements and quantum replica advantage} Because implementing joint measurements across many replicas is experimentally challenging, recent works have explored learning properties under restricted measurement models. 
One prominent example is the classical shadow protocol~\cite{huangPredictingManyProperties2020}, which recovers quantum properties of a state from the outcomes of random 1-replica measurements, including $\tr(\rho^kO)$~\cite{hu2022logical,seif2023shadow}. 
Another example is the randomized measurement protocol, which is initially developed to estimate state moments $\tr(\rho^k)$ also with 1-replica measurements~\cite{van2012Measuring,elben2019toolbox,elben2023randomized,Brydges2019Probing}.
It reaches the provable optimal sample complexity in estimating state purity $\tr(\rho^2)$ among all 1-replica protocols~\cite{chenExponentialSeparationsLearning2022}.
This approach has also been applied to estimating $\tr(\rho^2O)$ \cite{duOptimalRandomizedMeasurements2025}. 
In \cite{pelecanosBeatingFullState2025}, another 1-replica protocol was proposed to estimate $\tr(\rho^k)$, using $\mathcal{O}(d^{3-2/k})$ copies to achieve constant multiplicative error.

However, it is already known that 1-replica protocols cannot efficiently estimate purity in the worst case~\cite{aharonovQuantumAlgorithmicMeasurement2022, chenExponentialSeparationsLearning2022,gongSampleComplexityPurity2024}, establishing an exponential separation between 1- and 2-replica protocols.
Another 1-versus-2 separation arises in shadow tomography~\cite{aaronsonShadowTomographyQuantum2018,chenExponentialSeparationsLearning2022,chenOptimalTradeoffsEstimating2024a,kingTriplyEfficientShadow}, where the goal is to estimate $\tr(P\rho)$ for all Pauli strings $P$. 
In \cite{chenHierarchyReplicaQuantum2021}, the authors construct a learning task that is hard for $k$-replica protocols but easy for $\mathrm{poly}(n,k)$-replica protocols. 

Even for 1-replica protocols, one already assumes that arbitrary quantum operations can be performed on a state $\rho$.
In the real world, quantum circuits are limited by factors including topology, noise, and physical characteristics.
Recently, researchers started to consider quantum learning with restricted circuit structures.
In \cite{huDemonstrationRobustEfficient2024,bertoni2022shallow,hu2023locallybias} the authors consider using shallow random circuits to perform classical shadow protocol, demonstrating intriguing scaling behavior in estimating certain nonlocal properties.
In \cite{Acharya2025pauli}, the authors give the optimal sample complexity of quantum tomography using Pauli measurements.
Compared with the known performance bound for 1-replica quantum tomography~\cite{chenWhenDoesAdaptivity2023}, they concluded that Pauli measurement is not the best method for quantum tomography.

\paragraph{Discussion about the concurrent work} As introduced above, there are two known sources of 1-vs-2 replica separations: the linearity of quantum mechanics and Heisenberg’s uncertainty principle. The first leads to the separation in estimating $\tr(\rho^2)$~\cite{aharonovQuantumAlgorithmicMeasurement2022, chenExponentialSeparationsLearning2022,gongSampleComplexityPurity2024}, while the second underlies the separation in Pauli shadow tomography~\cite{chenExponentialSeparationsLearning2022,chenOptimalTradeoffsEstimating2024a,kingTriplyEfficientShadow}. In the latter case, since Pauli strings $P$ are not pairwise commuting, one cannot estimate all $\tr(P\rho)$ simultaneously due to the uncertainty principle. However, $P\otimes P$ are pairwise commuting, which makes it possible to estimate all $\tr((P\otimes P)(\rho\otimes \rho))=\tr(P\rho)^2$ simultaneously.

Our work generalizes the first type of separation to all nonlinear properties $\tr(\rho^{k}O)$ with $\norm{O}_\infty=1$ and $\norm{O}_1=\Omega(d)$, establishing $(k-1)$-vs-$k$ replica separations for all positive integer $k$. Independently and concurrently, Nöller, Tran, Gachechiladze, and Kueng~\cite{Noeller2025hierarchy} extend the second type of separation: by generalizing Pauli tomography on qubits to Weyl–Heisenberg tomography on qudits, they obtained $(k-1)$-vs-$k$ replica separations for all prime $k$, thereby establishing another infinite replica hierarchy. 

\section{Preliminaries}
We use $\Id_D$ to denote the $D\times D$ identity operator and $\H_D$ to denote the $D$-dimensional Hilbert space. For nonnegative integers $a, b$, write $a^{\uparrow b}\coloneqq a(a+1)\cdots (a+b-1)$ and $a^{\downarrow b}\coloneqq a(a-1)\cdots (a-b+1)$, with the convention that $a^{\uparrow 0}=a^{\downarrow 0}=1$. Denote $[N]\coloneqq \{1, 2, \cdots, N\}$. For a pure state $\ket{\psi}$, we use $\psi$ to denote the density matrix $\ketbra{\psi}$. 

\subsection{Norms}
For a Hermitian matrix $A$, we use $\norm{A}_1, \norm{A}_F, \norm{A}_{\infty}$ to denote its trace norm, Frobenius norm, and operator norm, respectively. For a quantum state $\ket{\psi}$, we use $\norm{\ket{\psi}}_2$ to denote its $\ell_2$ norm.
The trace distance between two states is defined as $d_{\tr}(\rho, \sigma)\coloneqq \frac12\norm{\rho-\sigma}_1$.
\Cref{lem: norm facts} introduces some useful facts about norms.
\begin{lemma}\label{lem: norm facts}
    Let $A, B$ be Hermitian matrices and $\ket{\psi}, \ket{\phi}$ be two pure states. Then
    \begin{enumerate}[label=\alph*)]
        \item $\abs{\tr(AB)}\leq \norm{A}_1\norm{B}_{\infty}$.
        \item $\norm{A\otimes B}_1 = \norm{A}_1\norm{B}_1$.
        \item $\norm{\psi-\phi}_1\leq 2\norm{\ket{\psi}-\ket{\phi}}_2$.
        \item $\norm{\psi^{\otimes k}-\phi^{\otimes k}}_1\leq k\norm{\psi-\phi}_1$.
    \end{enumerate}
\end{lemma}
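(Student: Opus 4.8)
The plan is to prove the four items separately, each by a short standard argument from the spectral and singular-value calculus.

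For (a), diagonalize the Hermitian matrix $A=\sum_i \lambda_i\ketbra{v_i}{v_i}$ with $\{\ket{v_i}\}$ orthonormal, so that $\tr(AB)=\sum_i \lambda_i\bra{v_i}B\ket{v_i}$. Since $B$ is Hermitian, $\abs{\bra{v_i}B\ket{v_i}}\le\norm{B}_\infty$ for every $i$, so the triangle inequality gives $\abs{\tr(AB)}\le\sum_i\abs{\lambda_i}\,\norm{B}_\infty=\norm{A}_1\norm{B}_\infty$. For (b), use that the singular value decompositions of $A$ and $B$ tensor to a singular value decomposition of $A\otimes B$, whose singular values are precisely the products $\sigma_i(A)\sigma_j(B)$ over all pairs $(i,j)$; summing, $\norm{A\otimes B}_1=\sum_{i,j}\sigma_i(A)\sigma_j(B)=\bigl(\sum_i\sigma_i(A)\bigr)\bigl(\sum_j\sigma_j(B)\bigr)=\norm{A}_1\norm{B}_1$.

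For (c), I would first establish the familiar closed form $\tfrac12\norm{\psi-\phi}_1=\sqrt{1-\abs{\langle\psi|\phi\rangle}^2}$: the operator $\psi-\phi$ is supported on the (at most two-dimensional) span of $\ket\psi,\ket\phi$, and diagonalizing it there yields eigenvalues $\pm\sqrt{1-\abs{\langle\psi|\phi\rangle}^2}$. Writing $x=\abs{\langle\psi|\phi\rangle}\in[0,1]$, we have $\norm{\ket\psi-\ket\phi}_2^2=2-2\,\mathrm{Re}\langle\psi|\phi\rangle\ge 2-2x$, while $1-x^2=(1-x)(1+x)\le 2(1-x)$. Combining, $\tfrac12\norm{\psi-\phi}_1=\sqrt{1-x^2}\le\sqrt{2(1-x)}\le\norm{\ket\psi-\ket\phi}_2$, which is the claim.

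For (d), I would telescope, $\psi^{\otimes k}-\phi^{\otimes k}=\sum_{i=1}^{k}\psi^{\otimes(i-1)}\otimes(\psi-\phi)\otimes\phi^{\otimes(k-i)}$, and then apply the triangle inequality together with the tensor multiplicativity of (b) and the fact $\norm{\psi}_1=\norm{\phi}_1=1$ to get $\norm{\psi^{\otimes k}-\phi^{\otimes k}}_1\le\sum_{i=1}^{k}\norm{\psi-\phi}_1=k\norm{\psi-\phi}_1$. None of these steps presents a genuine obstacle; the only point demanding a little care is the global-phase bookkeeping in (c) — namely that $\norm{\ket\psi-\ket\phi}_2^2$ involves $\mathrm{Re}\langle\psi|\phi\rangle$ rather than $\abs{\langle\psi|\phi\rangle}$ — which is exactly what turns the identity into an inequality with the extra factor of $2$.
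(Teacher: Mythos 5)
Your proof is correct: all four arguments are the standard ones (spectral decomposition plus Cauchy--Schwarz for (a), multiplicativity of singular values under tensor products for (b), the identity $\tfrac12\norm{\psi-\phi}_1=\sqrt{1-\abs{\braket{\psi}{\phi}}^2}$ with the phase-bookkeeping inequality for (c), and telescoping with (b) for (d)). The paper states this lemma without proof, treating it as a collection of standard norm facts, so there is no authorial argument to compare against; your write-up fills that gap correctly.
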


The following inequality is by Mirsky (see \cite[Corollary 7.4.9.3]{hornMatrixAnalysis1985}).

\begin{lemma}[Mirsky's inequality]
    For two quantum states $\sigma, \rho$, let $\bp = (p_1,\cdots, p_d)$ and $\bq=(q_1,\cdots, q_d)$ be spectra of them arranged in nonincreasing order. Then 
    \begin{equation}
        \norm{\sigma-\rho}_1\ge \sum_{i=1}^d \abs{p_i-q_i}.
    \end{equation}
\end{lemma}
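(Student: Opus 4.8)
The plan is to prove the more general matrix statement and then specialize: if $A, B$ are $d\times d$ Hermitian matrices with eigenvalues $a_1 \ge \cdots \ge a_d$ and $b_1 \ge \cdots \ge b_d$ (each listed in nonincreasing order), then $\sum_{i=1}^d \abs{a_i - b_i} \le \norm{A - B}_1$; the lemma is the special case $A = \sigma$, $B = \rho$, whose eigenvalue vectors are the spectra $\bp$ and $\bq$. The only external ingredient is Weyl's monotonicity theorem --- for Hermitian matrices with $X \preceq Y$, the $i$-th largest eigenvalue of $X$ is at most that of $Y$ --- which is immediate from the Courant--Fischer minimax characterization since $\langle x | X | x\rangle \le \langle x | Y | x\rangle$ for every $x$.

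The key step is to interpose a carefully chosen Hermitian matrix between $A$ and $B$. Put $C \coloneqq A - B$ and write $C = C_+ - C_-$ for the decomposition into positive and negative parts, so that $C_+, C_- \succeq 0$ have orthogonal ranges and $\norm{C}_1 = \tr C_+ + \tr C_-$. Define $M \coloneqq A + C_-$. The crucial algebraic identity is that $M$ also equals $B + C_+$, since $A + C_- = B + C + C_- = B + C_+$. Consequently $M \succeq A$ (because $M - A = C_- \succeq 0$) and $M \succeq B$ (because $M - B = C_+ \succeq 0$), so Weyl's theorem yields $m_i \ge a_i$ and $m_i \ge b_i$ for every $i$, where $m_1 \ge \cdots \ge m_d$ are the eigenvalues of $M$.

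It then remains to combine a coordinatewise triangle inequality with a telescoping trace identity. Using $m_i - a_i \ge 0$ and $m_i - b_i \ge 0$,
\[
  \sum_{i=1}^d \abs{a_i - b_i} \;\le\; \sum_{i=1}^d \bigl[(m_i - a_i) + (m_i - b_i)\bigr] \;=\; (\tr M - \tr A) + (\tr M - \tr B) \;=\; \tr C_- + \tr C_+ \;=\; \norm{C}_1,
\]
where the middle equality uses $\sum_i m_i = \tr M$ together with $\tr M - \tr A = \tr C_-$ and $\tr M - \tr B = \tr C_+$. Setting $A = \sigma$ and $B = \rho$ gives $\norm{\sigma - \rho}_1 \ge \sum_i \abs{p_i - q_i}$, as claimed.

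I do not expect a genuine obstacle here: the argument is essentially three lines once one identifies the right intermediate matrix, namely $M = \sigma + (\sigma - \rho)_- = \rho + (\sigma - \rho)_+$, a common upper bound for $\sigma$ and $\rho$ in the Loewner order. The only ``creative'' point is spotting this $M$. Alternatively one may simply cite Mirsky's theorem \cite[Corollary~7.4.9.3]{hornMatrixAnalysis1985}, or derive it from Lidskii's majorization inequality (the difference of the sorted eigenvalue vectors of $A$ and $B$ is majorized by the eigenvalue vector of $A - B$) together with convexity of $t \mapsto \abs{t}$; but the self-contained argument above avoids invoking majorization machinery and is the route I would take.
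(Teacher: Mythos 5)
Your proof is correct. Note that the paper does not prove this lemma at all --- it simply cites it as Mirsky's inequality via \cite[Corollary 7.4.9.3]{hornMatrixAnalysis1985} --- so any self-contained argument is necessarily "different" from the paper's route. Your argument is a clean and standard one: the interposed matrix $M = \sigma + (\sigma-\rho)_- = \rho + (\sigma-\rho)_+$ is a common Loewner upper bound for $\sigma$ and $\rho$, Weyl monotonicity gives $m_i \ge \max\{p_i, q_i\}$ for the sorted eigenvalues, and the coordinatewise bound $\abs{p_i - q_i} \le (m_i - p_i) + (m_i - q_i)$ (valid precisely because $m_i$ dominates both) telescopes under the trace to $\tr(\sigma-\rho)_- + \tr(\sigma-\rho)_+ = \norm{\sigma-\rho}_1$. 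Every step checks out, and this avoids the majorization machinery (Lidskii) that the textbook proof of the full Mirsky theorem uses; what the citation buys instead is the stronger statement for arbitrary unitarily invariant norms, which the paper does not need.
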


\subsection{$k$-replica protocols}\label{sec: k replica}
To learn the desired properties of an unknown quantum state $\rho$, we need to perform measurements to the state. General quantum measurements are represented as positive operator-valued measures (POVMs).

\begin{definition}[POVMs]
    A $D$-dimensional positive operator-valued measure (POVM) is a set of positive-semidefinite matrices $\{F_s\in \mathbb{C}^{D\times D}\}_s$ that satisfies $\sum_s F_s=\Id_D$. When we measure a $D$-dimensional state $\sigma$, we obtain a classical outcome $s$ with probability $\tr(F_s\sigma)$.

    A $k$-replica joint measurement on $k$ copies of a $D$-dimensional state $\sigma$ is a $D^k$-dimensional POVM on the state $\sigma^{\otimes k}$. 
\end{definition}

A $k$-replica protocol is a learning protocol that can only perform joint measurements on at most $k$ copies of the unknown states at a time. We give a formal definition as follows.
\begin{definition}[$k$-replica protocols]\label{def: k-replica protocols}
    Given copy access to a $D$-dimensional unknown quantum state $\sigma$, a $k$-replica protocol with $T$ rounds works as follows: In each round of the protocol, it gets $k$ copies of $\sigma$ and measure $\sigma^{\otimes k}$ with a $D^k$-dimensional POVM $\{F_s\}_s$, obtaining some classical outcome $s$ with probability $\tr(F_s\sigma^{\otimes k})$. The choice of the POVM can depend on all previous measurement outcomes. After $T$ measurements, the algorithm predicts the desired properties of $\sigma$ according to all measurement outcomes. The sample complexity of the protocol is $kT$. 
\end{definition}

Alternatively, we can merge the $T$ rounds of measurements into one giant POVM on $kT$ qudits with a tree structure.

\begin{definition}[$k$-replica $T$-round POVMs]
    A $k$-replica $T$-round POVM with local dimension $D$ is a $D^{kT}$-dimensional POVM with the form $\{F_{s_1}\otimes F_{s_1,s_2}\otimes\cdots\otimes F_{s_1,s_2,\cdots, s_T}\}_{s_1,s_2,\cdots, s_T}$, where for each $t\leq T$ and $s_1,\cdots, s_{t-1}$, $\{F_{s_1,s_2,\cdots, s_{t-1}, s_t}\}_{s_t}$ is a $D^k$-dimensional POVM.
\end{definition}

\begin{fact}
    We can redefine a $k$-replica protocol with $T$ rounds as follows: It performs a $k$-replica $T$-round POVM to $kT$ copies of the unknown state, and predicts desired properties according to classical outcomes $s_1,\cdots, s_T$.
\end{fact}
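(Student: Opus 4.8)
The plan is to verify that the two descriptions of a $T$-round $k$-replica protocol induce the same probability distribution over the outcome string $(s_1,\dots,s_T)$ for every unknown state $\sigma$; since the final prediction is a function of that string alone, this establishes their equivalence.

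First I would unfold the adaptive protocol of \Cref{def: k-replica protocols}. In its $t$-th round it is handed a fresh, independent batch of $k$ copies of $\sigma$ --- a subsystem in state $\sigma^{\otimes k}$ that is uncorrelated with every earlier batch, since between rounds the protocol only stores classical outcomes and performs no quantum operation --- and applies the POVM $\{F_{s_1,\dots,s_{t-1},s_t}\}_{s_t}$ chosen from the history $s_1,\dots,s_{t-1}$. By the chain rule for conditional probabilities this gives
\[
\Pr\!\big[(s_1,\dots,s_T)\big]=\prod_{t=1}^{T}\tr\!\big(F_{s_1,\dots,s_t}\,\sigma^{\otimes k}\big).
\]
Next I would evaluate the $k$-replica $T$-round POVM $\{F_{s_1}\otimes F_{s_1,s_2}\otimes\cdots\otimes F_{s_1,\dots,s_T}\}$ acting on the $kT$-qudit state $\sigma^{\otimes kT}$, regrouped as $(\sigma^{\otimes k})^{\otimes T}$. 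Because the trace is multiplicative across tensor factors,
\[
\tr\!\Big[\big(F_{s_1}\otimes\cdots\otimes F_{s_1,\dots,s_T}\big)\,(\sigma^{\otimes k})^{\otimes T}\Big]=\prod_{t=1}^{T}\tr\!\big(F_{s_1,\dots,s_t}\,\sigma^{\otimes k}\big),
\]
which coincides with the adaptive distribution. I would also check that this operator family really is a POVM: each term is a tensor product of positive-semidefinite matrices, hence positive semidefinite, and summing the outcome indices from the innermost round outward --- $\sum_{s_T}F_{s_1,\dots,s_T}=\Id_{D^k}$ for every fixed prefix because $\{F_{s_1,\dots,s_{T-1},s_T}\}_{s_T}$ is a POVM, then $\sum_{s_{T-1}}$, and so on --- collapses the sum to $\Id_{D^{kT}}$.

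The reverse direction is immediate: given any tree-structured $k$-replica $T$-round POVM, one recovers an adaptive protocol by measuring the fresh batch in round $t$ with $\{F_{s_1,\dots,s_t}\}_{s_t}$, and the computation above shows the outcome statistics are preserved. I do not expect a real obstacle here; the only step that needs care is justifying that the batches fed to successive rounds are mutually independent, which is precisely what makes the global state equal to the product $\sigma^{\otimes kT}$ and lets the trace factorize into the per-round contributions.
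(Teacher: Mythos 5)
Your proposal is correct: the paper states this as a Fact without an explicit proof, and your argument --- the chain-rule factorization of the adaptive outcome distribution, the matching factorization of $\tr$ over the tensor-product POVM elements on $(\sigma^{\otimes k})^{\otimes T}$, and the telescoping check that the tree-structured family sums to $\Id_{D^{kT}}$ --- is exactly the standard reasoning the paper implicitly relies on. No gaps.
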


In this paper, we do not utilize the tree structure of $k$-replica protocols. Instead, we only need the fact that every POVM element has a product form. 

\begin{definition}[$(D,T)$-product POVMs]
    A $(D, T)$-product POVM is a $D^{T}$-dimensional POVM $\{F_s\}_s$ such that each $F_s$ is a tensor product of $T$ $D$-dimensional operators, namely, $F_s\coloneqq F_{s,1}\otimes F_{s,2}\otimes\cdots \otimes F_{s,T}$. Denote the set of $(D,T)$-product POVMs by $\cM_{D,T}$. 
\end{definition}

\begin{fact}\label{fact: k t product}
    A $k$-replica $T$-round POVM with local dimension $D$ is a $(D^k,T)$-product POVM.
\end{fact}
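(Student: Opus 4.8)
The plan is simply to unfold the two definitions involved and observe that, after a cosmetic relabeling of the outcomes, a $k$-replica $T$-round POVM is literally a product POVM whose local dimension has been coarsened from $D$ to $D^k$.

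First I would collapse the sequence of conditional outcomes into a single composite label $s\coloneqq(s_1,\dots,s_T)$ and set $F_s\coloneqq F_{s_1}\otimes F_{s_1,s_2}\otimes\cdots\otimes F_{s_1,s_2,\dots,s_T}$. By the definition of a $k$-replica $T$-round POVM, for each $t\le T$ and each fixed history $s_1,\dots,s_{t-1}$ the family $\{F_{s_1,\dots,s_{t-1},s_t}\}_{s_t}$ is a $D^k$-dimensional POVM; hence each tensor factor $F_{s_1,\dots,s_t}$ is a positive-semidefinite operator of dimension $D^k$, and $F_s$ is a tensor product of exactly $T$ such operators. In particular $F_s\ge 0$, being a tensor product of positive-semidefinite operators, and $F_s$ acts on $(\H_D^{\otimes k})^{\otimes T}=\H_D^{\otimes kT}$, so its dimension is $(D^k)^T=D^{kT}$, matching the dimension required of a $(D^k,T)$-product POVM.

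Next I would verify the normalization $\sum_s F_s=\Id_{D^{kT}}$ by summing the outcomes out one layer at a time, starting from the last. Fixing $s_1,\dots,s_{T-1}$ and using $\sum_{s_T}F_{s_1,\dots,s_T}=\Id_{D^k}$ collapses the last tensor factor to $\Id_{D^k}$; iterating this over $s_{T-1},\dots,s_2$ and finally using $\sum_{s_1}F_{s_1}=\Id_{D^k}$ yields $\Id_{D^k}^{\otimes T}=\Id_{D^{kT}}$. Therefore $\{F_s\}_s$ is a genuine POVM and each of its elements is a tensor product of $T$ operators of dimension $D^k$, which is precisely the definition of a $(D^k,T)$-product POVM.

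I do not expect any real obstacle: the statement is a bookkeeping identification of the tree-structured measurement with a product measurement at coarsened local dimension. The only points warranting a line of care are the iterated-summation argument establishing the normalization and keeping straight that the product decomposition is indexed by the $T$ rounds with $D^k$-dimensional factors, not by the $kT$ individual copies with $D$-dimensional factors (indeed the individual round POVMs need not themselves factor across their $k$ copies).
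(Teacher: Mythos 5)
Your proof is correct and is exactly the intended argument: the paper states this as a \emph{Fact} without proof precisely because it follows directly from unfolding the two definitions, and your verification of positivity and of the normalization $\sum_s F_s=\Id_{D^{kT}}$ by summing out the outcomes layer by layer (last round first) is the standard bookkeeping one would write down. Your closing caveat — that the factorization is over the $T$ rounds with $D^k$-dimensional factors, not over the $kT$ individual copies — is also the right point to keep straight and matches how the paper uses the fact later (via $\cM_{D^k,T}$ in \Cref{lem: reduction to TV}).
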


The main tool to prove lower bounds is Le Cam's two-point method.

\begin{task}[$\cE_1$-versus-$\cE_2$ distinguishing problem]
    Given copy access to an unknown quantum state $\sigma$, promised to be drawn from one of two ensembles $\cE_1$ and $\cE_2$, the goal is to determine which ensemble $\sigma$ is drawn from.
\end{task}

\begin{lemma}[Le Cam's two-point method]\label{lem: Le Cam}
    Let $M=\{F_s\}_s$ be a $D$-dimensional POVM and $\cE_1,\cE_2$ be two $D$-dimensional ensembles. The success probability of distinguishing $\cE_1,\cE_2$ by performing $M$ once is at most
    \begin{equation}
        d_M(\E_{\sigma\leftarrow \cE_1}[\sigma],\E_{\sigma\leftarrow \cE_2}[\sigma])\coloneqq \frac{1}{2}\sum_s \abs{\tr(F_s\E_{\sigma\leftarrow \cE_1}[\sigma])-\tr(F_s\E_{\sigma\leftarrow \cE_2}[\sigma])}.
    \end{equation}
    Here $d_M(\sigma_1,\sigma_2)$ is the total variation distance between the outcome distributions obtained by measuring the two states using $M$.
\end{lemma}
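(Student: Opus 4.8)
The plan is to prove \Cref{lem: Le Cam} by a two-step reduction: first collapse each ensemble to its average state using linearity of the Born rule, and then invoke the classical characterization of the optimal single-sample test between two distributions via total variation distance.

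\emph{Step 1: reduce to the averaged states.} Fix the POVM $M=\{F_s\}_s$. If the unknown $\sigma$ is drawn from $\cE_i$ and measured once with $M$, then by linearity of the trace the outcome probabilities are
\[
  \Pr[\text{outcome }s \mid \sigma\leftarrow\cE_i] \;=\; \E_{\sigma\leftarrow\cE_i}\!\bigl[\tr(F_s\sigma)\bigr] \;=\; \tr\!\bigl(F_s\,\E_{\sigma\leftarrow\cE_i}[\sigma]\bigr) \;=\; \tr(F_s\,\bar\sigma_i),
\]
where $\bar\sigma_i \coloneqq \E_{\sigma\leftarrow\cE_i}[\sigma]$. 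Thus all information a distinguisher ever receives is a single draw from the classical distribution $p_i(s)\coloneqq\tr(F_s\bar\sigma_i)$; the POVM never ``sees'' anything beyond the averaged states $\bar\sigma_1,\bar\sigma_2$. One checks that $p_i$ is a bona fide probability distribution: $p_i(s)\ge 0$ since $F_s\succeq 0$ and $\bar\sigma_i\succeq 0$, and $\sum_s p_i(s)=\tr\!\bigl((\textstyle\sum_s F_s)\bar\sigma_i\bigr)=\tr(\bar\sigma_i)=1$.

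\emph{Step 2: optimal test between $p_1$ and $p_2$.} A (possibly randomized) decision rule is specified by weights $w(s)\in[0,1]$, the probability of declaring ``$\cE_1$'' on outcome $s$, and its distinguishing advantage is $\sum_s w(s)\bigl(p_1(s)-p_2(s)\bigr)$. This is maximized by setting $w(s)=1$ exactly on $\{s:p_1(s)\ge p_2(s)\}$, giving $\sum_{s:\,p_1(s)\ge p_2(s)}\bigl(p_1(s)-p_2(s)\bigr)=\tfrac12\sum_s\abs{p_1(s)-p_2(s)}$, where the last equality uses $\sum_s p_1(s)=\sum_s p_2(s)=1$. Substituting $p_i(s)=\tr(F_s\bar\sigma_i)$ yields exactly $d_M(\bar\sigma_1,\bar\sigma_2)$, which proves the claimed bound.

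There is no genuine technical obstacle here --- this is a textbook argument --- but two points deserve care. The first is conceptual rather than technical: the linearity step must be applied to the trace functional (not inside the ensemble average), so that the outcome statistics depend on $\cE_i$ only through $\bar\sigma_i$; this is precisely why lower bounds against a whole family of protocols reduce to understanding averaged states, and it is the feature exploited repeatedly in the sequel (e.g.\ when $M$ is taken to be the giant $(D^k,T)$-product POVM of a $T$-round $k$-replica protocol and one bounds $d_M$ of the $T$-fold averaged states). The second is merely to fix the convention: ``success probability of distinguishing'' is understood as the distinguishing advantage (equivalently, under a uniform prior, twice the excess of the correct-guess probability over $1/2$), for which the stated bound is in fact tight.
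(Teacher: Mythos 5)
Your proof is correct and is exactly the standard argument the paper implicitly relies on (the paper states this lemma without proof, as a known tool): linearity of the Born rule collapses each ensemble to its average, and the optimal classical test between the induced outcome distributions achieves the total variation distance. Your remark on the convention is also consistent with how the paper uses the bound later (e.g.\ in \Cref{lem: reduction to TV}), where ``success probability'' is treated as the distinguishing advantage bounded by $d_M$.
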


Therefore, the problem of proving lower bounds for sample complexity reduces to upper bounding the total variation distance. We recapitulate the reduction in the following lemma.

\begin{lemma}\label{lem: reduction to TV}
    Let $T_0\in\N^+$ and $c_0\in \R^+$. For two $D$-dimensional ensembles $\cE_1, \cE_2$, if we can prove that 
    \begin{equation}
        \max_{M\in \cM_{D, T}}d_M(\E_{\sigma\leftarrow \cE_1}[\sigma^{\otimes T}], \E_{\sigma\leftarrow \cE_2}[\sigma^{\otimes T}])\leq c_0,~\forall T<T_0,
    \end{equation}
    then the sample complexity of distinguishing $\cE_1$ and $\cE_2$ with success probability $c_0$ using 1-replica protocols is at least $T_0$.

    In particular, for two $D$-dimensional ensembles $\cE_1', \cE_2'$, if we can prove that
    \begin{equation}
        \max_{M\in \cM_{D^k, T}}d_M(\E_{\sigma\leftarrow \cE_1'}[\sigma^{\otimes kT}], \E_{\sigma\leftarrow \cE_2'}[\sigma^{\otimes kT}])\leq c_0,~\forall T<T_0.
    \end{equation}
    then the sample complexity of distinguishing $\cE_1'$ and $\cE_2'$ with success probability $c_0$ using $k$-replica protocols is at least $kT_0$.
\end{lemma}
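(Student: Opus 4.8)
The plan is to derive \Cref{lem: reduction to TV} directly from the two structural facts about $k$-replica protocols recorded above together with Le Cam's two-point method (\Cref{lem: Le Cam}). I would prove the first statement in full and obtain the ``in particular'' part as an immediate specialization, so the whole argument is essentially bookkeeping around a single application of Le Cam.

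For the first statement, fix any $1$-replica protocol that uses $T<T_0$ rounds, hence exactly $T$ copies of the unknown state $\sigma$. By the merging construction recorded after \Cref{def: k-replica protocols}, the $T$ rounds combine into a single tree-structured POVM on $\sigma^{\otimes T}$, which by \Cref{fact: k t product} with $k=1$ is a $(D,T)$-product POVM $M=\{F_s\}_s\in\cM_{D,T}$; in particular every element factorizes as $F_s=F_{s,1}\otimes\cdots\otimes F_{s,T}$. If the protocol is randomized, fixing its internal randomness yields a deterministic protocol of this form, and the success probability of the randomized protocol is the average over that randomness of the success probabilities of these deterministic protocols, so it suffices to bound the deterministic case. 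Under this reduction, distinguishing $\cE_1$ from $\cE_2$ is exactly the task of distinguishing the two induced $D^T$-dimensional ensembles $\{\sigma^{\otimes T}\}_{\sigma\leftarrow\cE_1}$ and $\{\sigma^{\otimes T}\}_{\sigma\leftarrow\cE_2}$ with the single POVM $M$.

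Now \Cref{lem: Le Cam} bounds the success probability of this single-measurement distinguishing task by $d_M\big(\E_{\sigma\leftarrow\cE_1}[\sigma^{\otimes T}],\E_{\sigma\leftarrow\cE_2}[\sigma^{\otimes T}]\big)$, and since $M\in\cM_{D,T}$ this is at most $\max_{M'\in\cM_{D,T}} d_{M'}\big(\E_{\sigma\leftarrow\cE_1}[\sigma^{\otimes T}],\E_{\sigma\leftarrow\cE_2}[\sigma^{\otimes T}]\big)\le c_0$ by hypothesis. Hence no $1$-replica protocol using fewer than $T_0$ samples distinguishes $\cE_1$ and $\cE_2$ with success probability exceeding $c_0$, i.e. the sample complexity is at least $T_0$. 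For the ``in particular'' part, I would instantiate this with the $D^k$-dimensional ensembles $\cE_i^{(k)}\coloneqq\{\sigma^{\otimes k}\}_{\sigma\leftarrow\cE_i'}$: a $k$-replica protocol on $\sigma$ with $T$ rounds is literally a $1$-replica protocol on $\sigma^{\otimes k}$ with $T$ rounds (\Cref{fact: k t product}), and $\E_{\tau\leftarrow\cE_i^{(k)}}[\tau^{\otimes T}]=\E_{\sigma\leftarrow\cE_i'}[\sigma^{\otimes kT}]$, so the displayed hypothesis of the second statement is exactly the hypothesis of the first statement for $\cE_1^{(k)},\cE_2^{(k)}$ in dimension $D^k$. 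The first statement then forces $\ge T_0$ copies of $\sigma^{\otimes k}$, i.e. $\ge kT_0$ copies of $\sigma$; since a $k$-replica $T$-round protocol has sample complexity $kT$, this is the claimed bound.

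I do not anticipate a genuine obstacle here. The only point requiring a little care is the first reduction step — merging an adaptive, possibly randomized, multi-round protocol into a single product POVM — where one must explicitly invoke the tree-structured merging from the Facts above and note that neither adaptivity nor internal randomness helps against a fixed pair of ensembles, so that Le Cam's single-measurement bound applies unchanged.
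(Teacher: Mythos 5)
Your proof is correct and follows exactly the route the paper intends for this (unproved, standard) lemma: merge the adaptive rounds into a single tree-structured POVM, note via \Cref{fact: k t product} that it lies in $\cM_{D,T}$ (resp.\ $\cM_{D^k,T}$ after viewing a $k$-replica protocol as a $1$-replica protocol on $\sigma^{\otimes k}$), average over internal randomness, and apply \Cref{lem: Le Cam}. No gaps; the handling of adaptivity and randomness is exactly the care the reduction requires.
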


One useful trick to upper bound the total variation distance is through the one-sided likelihood ratio.
\begin{lemma}[One-sided likelihood ratio bound suffices, adapted from {\cite[Lemma 5.4]{chenExponentialSeparationsLearning2022}}]\label{lem: one-sided likelihood ratio}
    Let $\delta>0$ and let $\{p_i\}_i, \{q_i\}_i$ be two probability distributions such that the one-sided likelihood ratio, $p_i/q_i$, is lower bounded by $1-\delta$ for every $i$ (this condition automatically holds if $q_i=0$). Then the total variation distance, $d(\{p_i\}_i,\{q_i\}_i)\coloneqq \frac12 \sum_i \abs{p_i-q_i}$, is upper bounded by $\delta$. 
\end{lemma}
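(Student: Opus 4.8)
The plan is to bypass the two-sided quantity $\sum_i \abs{p_i - q_i}$ entirely and work instead with the positive-part representation of total variation distance. First I would recall the standard identity
\[
  d(\{p_i\}_i,\{q_i\}_i) \;=\; \sum_i \max\{\,q_i - p_i,\,0\,\},
\]
which holds because $\sum_i (p_i - q_i) = 0$: the total mass by which the $q_i$'s exceed the $p_i$'s equals the total mass by which the $p_i$'s exceed the $q_i$'s, and each of these equals $\tfrac12\sum_i\abs{p_i-q_i}$.

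Next I would bound the surviving summands using the one-sided ratio hypothesis. Every index contributing to the sum above satisfies $q_i > p_i \ge 0$, hence $q_i > 0$, so the bound $p_i/q_i \ge 1-\delta$ is meaningful there and rearranges to $q_i - p_i \le \delta q_i$. Indices with $q_i = 0$ contribute nothing, since then $\max\{\,q_i-p_i,\,0\,\}=0$ as $p_i\ge 0$ --- consistent with the convention flagged in the statement. Summing over all indices,
\[
  d(\{p_i\}_i,\{q_i\}_i) \;=\; \sum_{i:\,q_i>p_i} (q_i - p_i) \;\le\; \delta \sum_{i:\,q_i>p_i} q_i \;\le\; \delta \sum_i q_i \;=\; \delta,
\]
which is the claimed bound.

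There is essentially no obstacle here; the only point worth a moment's attention is precisely the choice to use the positive-part representation of total variation distance rather than attacking $\sum_i\abs{p_i-q_i}$ directly --- in the latter form the indices with $p_i>q_i$ would have to be controlled separately and no hypothesis is available for them --- together with the trivial observation that the indices $q_i=0$, where the likelihood ratio is undefined, are harmless. This is exactly the argument behind \cite[Lemma 5.4]{chenExponentialSeparationsLearning2022}, which I restate here for completeness.
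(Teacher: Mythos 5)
Your proof is correct, and it is essentially the same argument as the cited \cite[Lemma 5.4]{chenExponentialSeparationsLearning2022}, which the paper defers to rather than reproving: rewrite the total variation distance as $\sum_i \max\{q_i-p_i,0\}$ and bound each surviving term by $\delta q_i$ using the one-sided ratio hypothesis, with the $q_i=0$ indices contributing nothing. No issues.
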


\subsection{Haar measure and permutation operators}\label{sec: Haar measure}
Random instances are common tools to prove lower bounds on the sample complexity of learning problems. In quantum information, the uniform distributions over quantum states and unitaries are captured by Haar measure, denoted by $\psi\leftarrow \mu_\Haar(D)$ and $U\leftarrow \mu_\Haar(D\times D)$, respectively. The moments of Haar random states can be nicely represented using permutation operators.

\begin{definition}[Permutation operators]
    For $N\in \N^+$, let $S_N$ be the symmetric group on $N$ elements. For each permutation $\pi\in S_N$, we can define a corresponding permutation operator on $N$ qudits with local dimension $D$:
    \begin{equation}
        \pi^{(D)} \ket{\psi_1}\otimes\cdots\otimes \ket{\psi_N}=\ket{\psi_{\pi^{-1}(1)}}\otimes \cdots\otimes\ket{\psi_{\pi^{-1}(N)}}.
    \end{equation}
    Denote the symmetry operator $S_N^{(D)}\coloneqq \sum_{\pi\in S_N}\pi^{(D)}$ and the normalized symmetry operator $\tilde{S}_N^{(D)}\coloneqq \frac{1}{D^{\uparrow N}}S_N^{(D)}$. We sometimes omit the superscriptions $(D)$ in $S_N^{(D)}$ and $\tilde{S}_N^{(D)}$. 
\end{definition}
\begin{lemma}[Haar integrals for pure states]\label{lem: Haar integrals}
    $\E_{\psi\leftarrow \mu_\Haar(D)}\psi^{\otimes k} = \frac{1}{D(D+1)\cdots (D+k-1)}\sum_{\pi\in S_k}\pi = \tilde{S}_k$. 
\end{lemma}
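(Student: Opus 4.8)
The plan is to pin down $M\coloneqq\E_{\psi\leftarrow\mu_\Haar(D)}\psi^{\otimes k}$ by a symmetry argument, so that essentially only a normalization constant remains to be computed. First I would record two structural facts. (i) Each summand $\ket\psi^{\otimes k}$ is fixed by every permutation operator, so $\pi^{(D)}M=M\pi^{(D)}=M$ for all $\pi\in S_k$; equivalently, writing $\Pi\coloneqq\frac1{k!}\sum_{\pi\in S_k}\pi^{(D)}$ for the orthogonal projector onto the symmetric subspace $\mathrm{Sym}^k(\H_D)\subseteq\H_D^{\otimes k}$, we have $\Pi M=M\Pi=M$, i.e.\ $M$ is a density operator supported on $\mathrm{Sym}^k(\H_D)$. (ii) By left-invariance of the Haar measure, for every unitary $U$,
\begin{equation}
U^{\otimes k}M(U^{\otimes k})^\dagger=\E_{\psi}\bigl(U\ket\psi\bra\psi U^\dagger\bigr)^{\otimes k}=\E_{\psi'\leftarrow\mu_\Haar(D)}(\psi')^{\otimes k}=M,
\end{equation}
so $M$ commutes with $U^{\otimes k}$ for every $U\in\mathrm{U}(D)$.

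Next I would invoke that the representation $U\mapsto U^{\otimes k}$ of $\mathrm{U}(D)$ on $\mathrm{Sym}^k(\H_D)$ is irreducible — a standard fact, which one may instead replace by Schur--Weyl duality (the commutant of $\{U^{\otimes k}\}$ is spanned by the permutation operators $\pi^{(D)}$): if $M=\sum_{\pi}c_\pi\pi^{(D)}$ then $M=M\Pi=\sum_\pi c_\pi\,\pi^{(D)}\Pi=\bigl(\sum_\pi c_\pi\bigr)\Pi$ since $\pi^{(D)}\Pi=\Pi$. Either way, Schur's lemma with (i)--(ii) gives $M=c\,\Pi$ for a scalar $c$. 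Taking traces and using $\dim\mathrm{Sym}^k(\H_D)=\binom{D+k-1}{k}$ yields $1=\tr M=c\binom{D+k-1}{k}$, so $c=\binom{D+k-1}{k}^{-1}$. Finally, substituting $\Pi=\frac1{k!}\sum_{\pi\in S_k}\pi^{(D)}=\frac1{k!}S_k$ and using the elementary identity $k!\binom{D+k-1}{k}=\frac{(D+k-1)!}{(D-1)!}=D(D+1)\cdots(D+k-1)=D^{\uparrow k}$ gives
\begin{equation}
M=\frac{1}{k!\binom{D+k-1}{k}}\,S_k=\frac1{D^{\uparrow k}}\,S_k=\tilde S_k,
\end{equation}
which is exactly the claim.

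The only ingredient beyond bookkeeping is the irreducibility of the symmetric power (equivalently, the Schur--Weyl description of the commutant of $\{U^{\otimes k}\}$); everything else — the symmetry/support of $M$, the Haar-invariance step, the dimension count, and the factorial identity — is routine. If a fully self-contained argument is wanted instead, an alternative route is to write $\ket\psi=\ket g/\norm{\ket g}_2$ with $\ket g$ a standard complex Gaussian vector in $\H_D$ and use that $\ket g/\norm{\ket g}_2$ is independent of $\norm{\ket g}_2$, so that $\langle\bi|M|\bj\rangle=\E\bigl[\prod_{a}g_{i_a}\bar g_{j_a}\bigr]/\E\norm{\ket g}_2^{2k}$; Wick's theorem evaluates the numerator as $\sum_{\pi\in S_k}\prod_a\delta_{i_a,j_{\pi(a)}}=\langle\bi|S_k|\bj\rangle$ (using $\E[g_i\bar g_j]=\delta_{ij}$, $\E[g_ig_j]=0$), while $\E\norm{\ket g}_2^{2k}=D^{\uparrow k}$ because $\norm{\ket g}_2^2$ is $\mathrm{Gamma}(D,1)$-distributed, so $M=\tilde S_k$ directly. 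In this route the obstacle migrates to Wick's theorem for complex Gaussians together with the direction/norm independence fact.
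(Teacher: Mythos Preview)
Your proof is correct. The paper, however, does not supply a proof of this lemma at all: it is listed among the preliminaries as a standard fact about Haar integration (alongside the definition of the symmetry operators $S_N^{(D)}$ and $\tilde S_N^{(D)}$), and is simply invoked later without justification. So there is no ``paper's own proof'' to compare against.

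For what it is worth, your primary route---permutation invariance plus $U^{\otimes k}$-invariance, then Schur's lemma on the irreducible $\mathrm{Sym}^k(\H_D)$, then trace normalization via $\dim\mathrm{Sym}^k(\H_D)=\binom{D+k-1}{k}$---is exactly the textbook argument, and your alternative Gaussian/Wick computation is also standard and correct. Either would be an acceptable proof of this preliminary lemma.
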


Another desirable property of Haar measure is the strong concentration of measure given by Levy's lemma. We say a function $f: \H_d \to \R$ is $L$-Lipschitz if 
\begin{equation}
    \abs{f(\psi)-f(\phi)}\leq L\norm{\psi-\phi}_2.
\end{equation}
\begin{lemma}[Levy's lemma]\label{lem: Levy's lemma}
    Let $f: \H_d\to \R$ be an $L$-Lipschitz function. For any $\delta>0$,
    \begin{equation}
        \Pr_{\psi\leftarrow \mu_\Haar(d)}\bracks*{\abs{f(\psi)-\E_{\phi\leftarrow \mu_\Haar(d)}[f(\phi)]}>\delta} \leq 2\exp(-\frac{2d\delta^2}{9\pi^3 L^2}).
    \end{equation}
\end{lemma}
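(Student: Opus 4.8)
\emph{Proof idea.} The plan is to transfer the problem to the classical concentration of measure on the Euclidean sphere. Identifying $\mathbb{C}^d$ with $\R^{2d}$, a Haar-random pure state is a uniformly random point $x$ on the real unit sphere $S^{2d-1}\subset\R^{2d}$ carrying its rotation-invariant probability measure, and $f$ may be regarded as a function on $S^{2d-1}$. Since the $\ell_2$ distance between two unit vectors never exceeds their geodesic distance on $S^{2d-1}$, the $L$-Lipschitz hypothesis makes $f$ an $L$-Lipschitz function for the geodesic metric as well. Writing $N=2d$ and $g\coloneqq f/L$, it therefore suffices to show that every $1$-Lipschitz $g\colon S^{N-1}\to\R$ obeys $\Pr[\,\abs{g-\E g}>t\,]\le 2\exp(-cNt^2)$ for a universal constant $c$; the target value $c=1/(9\pi^3)$ is very weak, so the reductions below may be carried out crudely.

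For this sphere statement I would appeal to the classical dimension-dependent concentration inequality. One route is Lévy's isoperimetric inequality on $S^{N-1}$ (spherical caps minimize the perimeter), which together with the elementary bound that the complement of a geodesic $t$-neighbourhood of a hemisphere has measure at most $\exp(-(N-1)t^2/2)$ yields sub-Gaussian concentration of $g$ around a median $m_g$. A second route notes that the uniform measure on $S^{N-1}$ satisfies a logarithmic Sobolev inequality with constant of order $N$, and then runs the Herbst argument to get the tail directly around the mean $\E g$. For a genuinely self-contained derivation with the weak constant one can even bypass spherical isoperimetry altogether: write $x=\gamma/\norm{\gamma}_2$ for a standard Gaussian vector $\gamma$ in $\R^N$, extend $\gamma\mapsto g(\gamma/\norm{\gamma}_2)$ to a globally $O(1/\sqrt{N})$-Lipschitz function on $\R^N$ (truncate the radius away from the origin, then apply a McShane--Kirszbraun extension), invoke Gaussian concentration $\Pr[\,\abs{\Phi(\gamma)-\E\Phi}>s\,]\le 2\exp(-s^2/2)$, and discard the exponentially unlikely event that $\norm{\gamma}_2$ is much smaller than $\sqrt{N}$; the bookkeeping of the extended Lipschitz constant and of the discarded event is precisely where the universal constant deteriorates to something of the size of $9\pi^3$.

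In the isoperimetry route the last step is the routine median-to-mean conversion $\abs{\E g-m_g}\le\E\abs{g-m_g}=\int_0^\infty\Pr[\,\abs{g-m_g}>t\,]\,dt=O(1/\sqrt{N})$, which is negligible next to $t=\delta/L$ in the range where the bound is non-vacuous; absorbing this shift into the constant and substituting $g=f/L$, $N=2d$ gives the stated inequality. The real content is the geometric input --- the spherical isoperimetric inequality, equivalently the spherical logarithmic Sobolev inequality --- a substantial classical theorem that I would cite rather than reprove; everything else is transference plus a union bound, and the only delicate point is the degradation of absolute constants in the Gaussian-transference variant, which must be performed generously enough to reach $9\pi^3$.
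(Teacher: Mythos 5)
Your outline is correct and is precisely the standard derivation of L\'evy's lemma (transfer to the uniform measure on $S^{2d-1}$, apply spherical isoperimetry or the log-Sobolev/Herbst argument or Gaussian transference, then absorb the median-to-mean shift into the constant, which is exactly where the weak factor $9\pi^3$ comes from). The paper does not prove this lemma at all --- it imports it as a classical fact --- so there is nothing further to compare; your sketch matches the textbook proof behind the cited statement.
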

There is a stronger concentration for inner products
\begin{lemma}[See {\cite[Proposition 4.8]{kuengACM270Quantum2022}} or {\cite[Example 55]{meleIntroductionHaarMeasure2024}}]\label{lem: haar concentration inner product}
    Let $\psi$ be a Haar random state and $\phi$ be a fixed state.
    \begin{equation}
        \Pr_{\psi}[\abs{\norm{\braket{\psi}{\phi}}^2}\ge \delta]\leq 2\exp(-\frac{d}{2}\delta).
    \end{equation}
\end{lemma}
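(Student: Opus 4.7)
The plan is to reduce the problem to a one-dimensional tail computation via unitary invariance, identify the resulting distribution explicitly, and then convert a polynomial tail to the claimed exponential form. First, by the left-invariance of $\mu_\Haar(d)$, I may pick any unitary $U$ with $U\ket{\phi}=\ket{1}$ and replace $\ket{\psi}$ by $U\ket{\psi}$; this preserves the law of $\ket{\psi}$ while reducing the quantity of interest to $|\psi_{1}|^{2}$, the squared modulus of the first coordinate of a uniformly random unit vector in $\mathbb{C}^{d}$. So without loss of generality $\ket{\phi}=\ket{1}$.

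Second, I would pin down the exact distribution of $|\psi_{1}|^{2}$ by parameterizing $\ket{\psi}=X/\norm{X}_{2}$, where $X=(X_{1}+iX_{2},\,X_{3}+iX_{4},\,\ldots,\,X_{2d-1}+iX_{2d})\in\mathbb{C}^{d}$ has $2d$ i.i.d.\ standard real Gaussian entries. Then $|\psi_{1}|^{2}=(X_{1}^{2}+X_{2}^{2})/\sum_{j=1}^{2d}X_{j}^{2}$ is a ratio of a $\chi^{2}_{2}$ numerator to the sum of the same $\chi^{2}_{2}$ with an independent $\chi^{2}_{2d-2}$; standard facts on Dirichlet marginals then give $|\psi_{1}|^{2}\sim\mathrm{Beta}(1,d-1)$, with density $(d-1)(1-x)^{d-2}$ on $[0,1]$. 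The survival function is therefore explicit:
\begin{equation}
\Pr_{\psi\leftarrow\mu_\Haar(d)}\bracks*{|\psi_{1}|^{2}\ge\delta}=\int_{\delta}^{1}(d-1)(1-x)^{d-2}\,dx=(1-\delta)^{d-1}
\end{equation}
for $\delta\in[0,1]$, and trivially $0$ for $\delta>1$.

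Finally, the elementary inequality $1-\delta\le e^{-\delta}$ yields $(1-\delta)^{d-1}\le e^{-(d-1)\delta}$. For $d\ge 2$ this is bounded by $2e^{-d\delta/2}$, since $-(d-1)\delta+d\delta/2=-(d/2-1)\delta\le 0\le \ln 2$. The degenerate case $d=1$ is immediate because then $|\braket{\psi}{\phi}|^{2}=1$ deterministically, so the probability equals $1$ for $\delta\le 1$ and $0$ otherwise, both bounded by $2e^{-\delta/2}$ on the relevant range. There is no real conceptual obstacle in this argument; the only subtlety is bookkeeping of constants, where the factor of $2$ out front is the slack used to replace the sharp exponent $d-1$ with the cleaner value $d/2$.
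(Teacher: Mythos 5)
Your proof is correct: the reduction by unitary invariance, the exact tail $\Pr[|\psi_1|^2\ge\delta]=(1-\delta)^{d-1}$ from the $\mathrm{Beta}(1,d-1)$ marginal, and the relaxation $(1-\delta)^{d-1}\le e^{-(d-1)\delta}\le 2e^{-d\delta/2}$ all check out (including the $d=1$ edge case). The paper does not prove this lemma itself but cites it to Kueng and Mele, and your argument is essentially the same exact-distribution computation given in those references, so there is nothing further to reconcile.
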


\paragraph{$k$-body component}
For a $k$-qudit observable $O$, the moment operator $\E_{U\leftarrow \mu_\Haar}[U^{\otimes k}O (U^\dagger)^{\otimes k}]$ is a linear combination of permutation operators, expressed as
\begin{equation}
    \E_{U\leftarrow \mu_\Haar}[U^{\otimes k}O (U^\dagger)^{\otimes k}] = \sum_{\pi\in S_k}c_\pi(O)\pi. 
\end{equation}
Let $S_k^{\text{circ}}$ be the set of circular permutations. We say $c_\pi(O)\pi$ is a $k$-body component of $O$ if $\pi\in S_k^{\text{circ}}$ as it cannot be decomposed to a tensor product of smaller operators. The $k$-body weight of $O$ is defined as $w(O)\coloneqq \abs{\sum_{\pi\in S_k^{\text{circ}}}c_\pi(O)}$. 

\subsection{Subroutines}\label{sec: subroutines}
\paragraph{Generalized swap test} Given a state $\sigma$ and a unitary $U$, the generalized swap test (adapted from \cite{ekertDirectEstimationsLinear2002}) goes as follows: Apply a controlled-$U$ to $\ket{+}\otimes \sigma$ and measure the first qubit in the $X$ basis. With probability $\frac{1}{4}\tr((\Id\pm U)\sigma(\Id\pm U^\dagger))$, the outcome is $\ket{\pm}$ and the (unnormalized) post-measurement state is $(\Id\pm U)\sigma(\Id\pm U^\dagger)$.

When $\sigma=\rho^{\otimes k+1}$ for a single-qudit state $\rho$ and $U=\pi$ is a $(k+1)$-qudit cyclic permutation operator. The generalized swap test outputs $\ket{\pm}$ with probability $\frac14\tr((\Id\pm \pi)\rho^{\otimes k+1}(\Id\pm \pi))=\frac{1\pm \tr(\rho^{k+1})}{2}$. Tracing out the last $k$ qudits, the normalized state in the remaining register is $\frac{\rho\pm \rho^{k+1}}{1\pm \tr(\rho^{k+1})}$ upon the outcome $\ket{\pm}$. 
\paragraph{Direct estimation of nonlinear observables} Given a $d$-dimensional state $\rho$ and a $d$-dimensional observable $O$, estimating $\tr(\rho^{k+1} O)$ is easy if we can perform $(k+1)$-replica joint measurements. Indeed, we can write $\tr(\rho^{k+1}O)$ as the linear function of $\rho^{\otimes k+1}$: $\tr(\rho^{\otimes k+1}\cdot\frac12[(O\otimes \Id_{d^k})\pi + \pi^\dagger (O\otimes \Id_{d^k})])$ for any $(k+1)$-body cyclic permutation. Thus, it suffices to measure $\rho^{\otimes k+1}$ in the eigenbasis of the Hermitian operator $\frac12[(O\otimes \Id_{d^k})\pi + \pi^\dagger (O\otimes \Id_{d^k})]$. The overall sample complexity is $O(k/\epsilon^2)$, where $\epsilon$ is the estimation error. 

However, this method requires the diagonalization of $\frac12[(O\otimes \Id_{d^k})\pi + \pi^\dagger (O\otimes \Id_{d^k})]$. A more clever way is to use a variety of generalized swap test. Apply a controlled-$\pi$ to $\ket{+}\otimes \rho^{\otimes k+1}$. We can verify that the expectation of $X\otimes O\otimes \Id_{d}^{\otimes k}$ under the resulting state is equal to $\tr(\rho^{k+1}O)$, thus it can be estimated directly using $O(k/\epsilon^2)$ samples. 

\paragraph{Distributed estimation of $\tr(\rho\sigma O)+\tr(\sigma\rho O)$} Given two states $\rho$ and $\sigma$, the goal of inner product estimation is to estimate $\tr(\rho\sigma)$. 
In \cite{anshuDistributedQuantumInner2022}, the authors characterize the sample complexity of this task in the distributed setting, which is equivalent to the 1-replica protocol defined in this work.
In \cite{duOptimalRandomizedMeasurements2025}, the authors consider the estimation of $\tr(\rho^2O)$ with 1-replica protocol, finding a similar sampling overhead with the purity estimation.
Their techniques actually can be easily combined and applied to the estimation of $\tr(\rho\sigma O)+\tr(\sigma\rho O)$ with 1-replica protocols.
\begin{lemma}[Inner product estimation \cite{anshuDistributedQuantumInner2022,duOptimalRandomizedMeasurements2025}]\label{lem: inner product estimation}
There exists an algorithm for estimating $\tr(\rho\sigma O)+\tr(\sigma\rho O)$ with arbitrary observable $\norm{O}_\infty=1$ with sample complexity $\tilde{\mathcal{O}}(\max\{1/\epsilon^2, \sqrt{d}/\epsilon\})$. The algorithm performs 1-replica measurement to each copy of $\rho$ and $\sigma$ separately and outputs the estimation based on all measurement outcomes. Moreover, the algorithm is non-adaptive, meaning that all measurements are pre-determined and are independent of intermediate measurement outcomes.
\end{lemma}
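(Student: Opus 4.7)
The plan is to adapt the non-adaptive random Clifford (2- and 3-design) measurement protocols of Anshu--Landau--Liu~\cite{anshuDistributedQuantumInner2022} for $\tr(\rho\sigma)$ and of Du et al.~\cite{duOptimalRandomizedMeasurements2025} for $\tr(\rho^2 O)$ to the symmetric bilinear quantity $\tr(\rho\sigma O)+\tr(\sigma\rho O)$. Given $N$ copies each of $\rho$ and $\sigma$, I would sample independent Clifford unitaries $U_1,\ldots,U_N$ and $V_1,\ldots,V_N$, measure $U_i\rho U_i^\dagger$ and $V_j\sigma V_j^\dagger$ in the computational basis to obtain outcomes $s_i,t_j$, and build the single-shot shadows $\hat{\rho}_i=(d+1)U_i^\dagger\ket{s_i}\bra{s_i}U_i-\Id_d$ and $\hat{\sigma}_j=(d+1)V_j^\dagger\ket{t_j}\bra{t_j}V_j-\Id_d$. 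For every ordered pair $(i,j)\in[N]^2$ I would define $X_{ij}=\tr\bigl(\tfrac{1}{2}(\hat{\rho}_i\hat{\sigma}_j+\hat{\sigma}_j\hat{\rho}_i)O\bigr)$ and return the median of means, taken over $\cO(\log(1/\delta))$ independent blocks, of the $U$-statistic $\bar X = N^{-2}\sum_{i,j}X_{ij}$. Because all measurement settings are drawn before any outcome is seen, the protocol is automatically non-adaptive and consumes exactly one copy per measurement, giving a 1-replica scheme.

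Unbiasedness is immediate from the 2-design property of the Clifford group: $\E[\hat{\rho}_i]=\rho$ and $\E[\hat{\sigma}_j]=\sigma$, and the independence of the two batches then gives $\E[X_{ij}]=\tfrac{1}{2}(\tr(\rho\sigma O)+\tr(\sigma\rho O))$, exactly the target quantity.

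The core work is the variance analysis of $\bar X$. I would organize the $N^4$ cross terms $\E[X_{ij}X_{i'j'}]$ by the overlap pattern of the four indices. No-overlap pairs factorize against $\E[\bar X]^2$ and contribute nothing. Doubly-overlapping pairs produce the per-sample second moment; the Clifford shadow variance bound of Huang--Kueng--Preskill, combined with $\norm{O}_\infty\le 1$, yields $\E[X_{ij}^2]=\cO(d)$, hence an overall contribution $\cO(d/N^2)$. Singly-overlapping pairs reduce, after averaging out the non-shared batch via its 2-design average, to an expression of the form $\E_U[\tr(\hat{\rho}_i A\hat{\rho}_i B)]$ with $A,B$ built from $\sigma$ and $O$; the 3-design Weingarten identity (already worked out in both cited papers) collapses this to a linear combination of $\tr(\rho A)\tr(\rho B)$, $\tr(AB)$, and $d^{-1}$-suppressed traces, each controlled by $\norm{O}_\infty\le 1$ and $\norm{\rho}_F,\norm{\sigma}_F\le 1$, giving a total contribution $\cO(1/N)$. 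Summing, $\Var[\bar X]\le\cO(1/N+d/N^2)$, and solving $\cO(1/N+d/N^2)\le\epsilon^2$ yields $N=\cO(\max\{1/\epsilon^2,\sqrt{d}/\epsilon\})$.

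The hardest step will be the single-overlap computation, where the observable $O$ sits between two independent shadows rather than being sandwiched by identical or trivial operators as in the two cited works; this is the precise place where bilinearity forces a genuine combination of the two analyses rather than a direct citation. Once that bound is in hand, the median-of-means boost converts the second-moment control into a high-probability guarantee with an extra $\log(1/\delta)$ factor, absorbed into the $\tO$ notation.
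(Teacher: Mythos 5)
The paper never actually proves this lemma---it imports it from \cite{anshuDistributedQuantumInner2022,duOptimalRandomizedMeasurements2025} with the remark that the two techniques combine---so the relevant comparison is with the cited protocols, and your reconstruction has a genuine gap in the variance analysis that is fatal to the claimed $\sqrt{d}/\epsilon$ scaling. The error is in the doubly-overlapping terms: when $i=i'$, $j=j'$, \emph{both} factors of $X_{ij}=\tr\bigl(\tfrac12(\hat{\rho}_i\hat{\sigma}_j+\hat{\sigma}_j\hat{\rho}_i)O\bigr)$ are random shadows, so conditioning on $\hat{\sigma}_j$ the effective observable is $\tfrac12(\hat{\sigma}_jO+O\hat{\sigma}_j)$, whose (traceless part's) Frobenius norm squared is $\Theta(d^2)$ because $\tr(\hat{\sigma}_j^2)=d^2+d-1$ deterministically; averaging over $\hat{\sigma}_j$ cannot help. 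Concretely, take $O=\Id$ and $\rho=\sigma=\Id/d$: then $\tr(\hat{\rho}_i\hat{\sigma}_j)=(d+1)^2\abs{\braket{\psi_i}{\phi_j}}^2-(d+2)$ for independent Haar-random $\psi_i,\phi_j$, which has variance $\Theta(d^2)$, not $\cO(d)$. Since all covariances between index-overlapping pairs are nonnegative, your U-statistic has variance at least $\Omega(d^2/N^2)$, forcing $N=\Omega(d/\epsilon)$ in the regime $\epsilon\gg 1/d$---a $\sqrt{d}$ factor worse than the lemma---and median-of-means only boosts confidence, it does not repair the variance. The $\cO(d)$ bound you invoke via Huang--Kueng--Preskill is correct only for the singly-overlapping terms, where the non-shared shadow has already been averaged to the true state (Frobenius norm at most $1$); you identified the single-overlap term as the hard step, but that one is fine---the double-overlap term is where the argument breaks, and it breaks precisely because you chose independent random bases for the $\rho$-copies and the $\sigma$-copies.

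The missing idea, and the one the cited works rely on, is \emph{common randomness}: the copies of $\rho$ and $\sigma$ that are correlated in the estimator are measured in the \emph{same} random (Clifford) basis, and the estimator is a collision/cross-correlation statistic in the outcome distributions $p_b=\bra{b}U\rho U^\dagger\ket{b}$, $q_b=\bra{b}U\sigma U^\dagger\ket{b}$ (with $O$-dependent weights as in \cite{duOptimalRandomizedMeasurements2025}), using $\E_U\bigl[\sum_b p_bq_b\bigr]=(1+\tr(\rho\sigma))/(d+1)$ and its $O$-weighted analogues. For a fixed common basis the signal sits at scale $1/d$ while the empirical cross-collision count over $N^2$ outcome pairs fluctuates by $\cO(1/(N\sqrt{d}))$, which is what yields $N=\cO(\sqrt{d}/\epsilon)$; the basis-to-basis fluctuation of $\sum_b p_bq_b$ around its Haar mean is what contributes the $1/\epsilon^2$ term. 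Such a protocol is still non-adaptive and 1-replica (the single learner holds the ``shared'' randomness), which is all the lemma requires; rewriting your proof along these lines, rather than via independent-shadow U-statistics, is necessary to reach the stated complexity. (A trivial further point: your $X_{ij}$ estimates half the target, so a factor of $2$ is missing.)
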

\noindent This lemma is crucial for proving the upper bound result in Theorem~\ref{thm: upper bound}, where we mainly consider the case $\rho$ and $\sigma$ commute with each other.
Therefore, we can simplify $\tr(\rho\sigma O)+\tr(\sigma\rho O)$ into $\tr(\rho\sigma O)$.

\section{Indistinguishability principle for Haar-assembled ensembles}\label{sec: indistinguishability principle}
In this section, we prove the hardness of distinguishing a Haar-assembled ensemble from its average. We restate the definition of Haar-assembled ensembles here for convenience. We slightly generalize the definition in \Cref{def: Haar assembled ensembles} by allowing Haar random states of different dimensions. This generalization is not needed for our main results but may be useful for future applications.
\begin{definition}[Haar-assembled ensembles]\label{def: haar assembled ensembles restated}
    For $m,D\in \N^+$, a $D$-dimensional Haar-assembled ensemble $\cE_\Haar$ assembled by $m$ Haar-random states $\psi_1,\cdots, \psi_m$ with local dimensions $d_1,\cdots, d_m$ is defined as
    \begin{equation}
        \cE_\Haar \coloneqq \left\{\sum_{j=1}^J p_j U_j(\psi_1^{\otimes a_{j1}}\otimes \cdots \otimes \psi_m^{\otimes a_{jm}}\otimes \tau_j)U_j^\dagger\right\}_{\psi_r \leftarrow \mu_\Haar(d_r),r\in[m]}\,,\label{equ: Haar assembled ensembles restated}
    \end{equation}
    where $J\in \N^+$, $\{p_j\}_{j=1}^J$ is a probability distribution. For each $j\in [J]$ and $r\in [m]$, $a_{jr}\in \N$, $U_j$ is a fixed unitary, and $\tau_j$ is a fixed state such that the dimension of each term in the summation is $D$.

    For $r\in [m]$, define $a_{r}\coloneqq \sum_{j=1}^J p_j a_{jr}$ as the average occurrence of $\psi_r$, and $a'_{r}\coloneqq \sum_{j=1}^J p_j a_{jr}^2$ as the average squared occurrence of $\psi_r$. Denote $d_{\min} \coloneqq \min_{r\in[m]} d_r$ and $a_{\max}\coloneqq \max_{r\in[m]} a_r$.
\end{definition}

By \Cref{lem: Haar integrals}, the average of the Haar-assembled ensemble is given by
\begin{equation}
    \E_{\sigma\leftarrow \cE_\Haar}[\sigma] \coloneqq \sum_{j=1}^J p_j U_j(\tilde{S}_{a_{j1}}^{(d_1)}\otimes \cdots \otimes \tilde{S}_{a_{jm}}^{(d_m)}\otimes \tau_j)U_j^\dagger\,.
\end{equation}
We will prove indistinguishability from two aspects:
\begin{itemize}
    \item In \Cref{thm: concentration of Haar-assembled ensemble}, we prove the concentration of $\tr(O\sigma)$ for $\sigma\leftarrow \cE_\Haar$. Therefore, it is hard to distinguish a random instance from its mean by estimating the expectation of a single observable $O$.
    \item In \Cref{thm: Haar-assembled ensemble vs mean}, we show that any 1-replica protocol cannot distinguish $\sigma\leftarrow \cE_\Haar$ from $\E_{\sigma'\leftarrow \cE_\Haar}[\sigma']$ efficiently with nontrivial advantage. We call this the indistinguishability principle for Haar-assembled ensembles.
\end{itemize}

\subsection{Concentration of Haar-assembled ensembles}
We first prove the concentration of Haar-assembled ensembles. 

\begin{theorem}\label{thm: concentration of Haar-assembled ensemble}
    Let $\cE_\Haar$ be a Haar-assembled ensemble defined in \eqref{equ: Haar assembled ensembles restated} and $O$ be a $D$-dimensional observable. For all $\delta>0$,
    \begin{equation}
        \Pr_{\sigma\leftarrow \cE_\Haar}\bracks*{\abs{\tr(O\sigma)-\tr(O \E_{\sigma'\leftarrow \cE_\Haar}[\sigma'])} > \delta} \leq 2\sum_{r=1}^m\exp(-\frac{d_r\delta^2}{18\pi^3 m^2 a_r^2 \norm{O}_{\infty}^2}).
    \end{equation}
\end{theorem}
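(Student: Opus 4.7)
The plan is to view $f(\psi_1, \dots, \psi_m) \coloneqq \tr(O\sigma)$ as a function of $m$ independent Haar-random states and combine a martingale decomposition with Levy's lemma applied to each coordinate separately. The advertised bound already splits as a sum over $r\in[m]$, with each summand matching a single application of Levy's lemma with Lipschitz constant proportional to $a_r\|O\|_\infty$ at scale $\delta/m$; this strongly suggests a telescoping-in-$r$ argument through partial Haar expectations.

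\textbf{Step 1: coordinatewise Lipschitz bound.} First I would show that for each fixed $r\in [m]$, holding all other $\psi_s$ fixed, the map $\psi_r \mapsto f(\psi_1,\dots,\psi_m)$ is $L_r$-Lipschitz with $L_r = 2a_r\|O\|_\infty$. Expanding $\sigma$ according to \Cref{def: haar assembled ensembles restated}, the difference of $f$ evaluated at $\psi_r$ and $\psi_r'$ (keeping the others fixed) is
\begin{equation*}
    \sum_{j=1}^J p_j\,\tr\!\left(O\,U_j\bigl(\psi_1^{\otimes a_{j1}}\otimes\cdots\otimes (\psi_r^{\otimes a_{jr}}-\psi_r'^{\otimes a_{jr}})\otimes\cdots\otimes\tau_j\bigr)U_j^\dagger\right).
\end{equation*}
Using $|\tr(OX)|\le \|O\|_\infty\|X\|_1$, unitary invariance and multiplicativity of the trace norm under tensor products (\Cref{lem: norm facts}(a,b)), then parts (c) and (d) of \Cref{lem: norm facts}, each summand is bounded by $2\|O\|_\infty a_{jr}\|\ket{\psi_r}-\ket{\psi_r'}\|_2$. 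Summing over $j$ weighted by $p_j$ yields $L_r=2a_r\|O\|_\infty$. The same bound carries over to any partial expectation of $f$ over the remaining $\psi_s$'s since averaging does not increase Lipschitz constants.

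\textbf{Step 2: martingale decomposition and Levy.} For $r=0,\dots,m$ let $g_r(\psi_1,\dots,\psi_r) \coloneqq \E_{\psi_{r+1},\dots,\psi_m}[f]$, so $g_m=f$ and $g_0=\E[f]=\tr(O\,\E_{\sigma'\leftarrow \cE_\Haar}[\sigma'])$. Then
\begin{equation*}
    f - \E[f] \;=\; \sum_{r=1}^m\bigl(g_r(\psi_1,\dots,\psi_r)-g_{r-1}(\psi_1,\dots,\psi_{r-1})\bigr).
\end{equation*}
With $\psi_1,\dots,\psi_{r-1}$ frozen, the $r$-th increment is precisely $h_r(\psi_r)-\E_{\psi_r'}h_r(\psi_r')$ where $h_r(\psi_r)\coloneqq \E_{\psi_{r+1},\dots,\psi_m}[f]$. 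By Step 1, $h_r$ is $L_r$-Lipschitz in $\psi_r$, so \Cref{lem: Levy's lemma} gives
\begin{equation*}
    \Pr_{\psi_r}\!\left[\bigl|g_r-g_{r-1}\bigr|>\tfrac{\delta}{m}\,\Big|\,\psi_1,\dots,\psi_{r-1}\right] \le 2\exp\!\left(-\frac{d_r\delta^2}{18\pi^3 m^2 a_r^2\|O\|_\infty^2}\right),
\end{equation*}
after using $L_r^2=4a_r^2\|O\|_\infty^2$ to turn the $2/(9\pi^3)$ factor into $1/(18\pi^3)$.

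\textbf{Step 3: union bound.} If $|g_r-g_{r-1}|\le \delta/m$ for every $r$, then $|f-\E f|\le \delta$ by the triangle inequality applied to the telescoping sum. Hence
\begin{equation*}
    \Pr\bigl[|f-\E f|>\delta\bigr] \;\le\; \sum_{r=1}^m \Pr\bigl[|g_r-g_{r-1}|>\tfrac{\delta}{m}\bigr] \;\le\; 2\sum_{r=1}^m\exp\!\left(-\frac{d_r\delta^2}{18\pi^3 m^2 a_r^2\|O\|_\infty^2}\right),
\end{equation*}
where in the first step I take an outer expectation over $\psi_1,\dots,\psi_{r-1}$ to remove the conditioning, which is harmless since the bound is uniform in these variables.

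The only nontrivial step is the Lipschitz computation in Step 1, and even that reduces to a clean application of $\|\psi^{\otimes a}-\phi^{\otimes a}\|_1\le 2a\|\ket\psi-\ket\phi\|_2$. The martingale/union-bound architecture is standard once the coordinatewise Lipschitz constants are in hand.
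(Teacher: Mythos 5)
Your proposal is correct and follows essentially the same route as the paper: the coordinatewise Lipschitz bound $2a_r\norm{O}_\infty$ via \Cref{lem: norm facts}, a telescoping decomposition through partial Haar expectations, Levy's lemma applied at scale $\delta/m$ in each coordinate, and a union bound. The only difference is cosmetic (you condition on the first $r-1$ states and average over the last ones, the paper does the mirror image), and your explicit remark that averaging preserves Lipschitz constants is a point the paper leaves implicit.
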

We apply Levy's lemma recursively to prove the theorem. The following lemma establishes the Lipschitzness of $\tr(O\sigma)$ as a function of each Haar-random state $\psi_i$.
\begin{lemma}
    Let $O$ be a $D$-dimensional observable and $\sigma = \sum_{j=1}^J p_j U_j(\psi_1^{\otimes a_{j1}}\otimes \cdots \otimes \psi_m^{\otimes a_{jm}}\otimes \tau_j)U_j^\dagger$ as in~\eqref{equ: Haar assembled ensembles restated}. Then for each $r\in [m]$, $\tr(O\sigma)$ is $2a_r\norm{O}_{\infty}$-Lipschitz as a function of $\psi_r$, while fixing all other $\psi_i$s.
\end{lemma}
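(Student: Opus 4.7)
The plan is to bound the variation of $\tr(O\sigma)$ by isolating the dependence on $\psi_r$ inside each mixture component and invoking H\"older's inequality together with the telescoping tensor-power bound from \Cref{lem: norm facts}. Fix another pure state $\ket{\phi_r}$ and let $\sigma'$ denote the state obtained from $\sigma$ by replacing $\ket{\psi_r}$ with $\ket{\phi_r}$; the goal is to show $\abs{\tr(O\sigma) - \tr(O\sigma')} \le 2 a_r \norm{O}_\infty \norm{\ket{\psi_r}-\ket{\phi_r}}_2$.

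Because the trace is invariant under permutations of tensor factors, inside each $j$-th term we can rearrange so that $\psi_r^{\otimes a_{jr}}$ sits in a designated block, absorbing the permutation unitary into $U_j$. The difference then decomposes as
\begin{equation*}
    \sigma - \sigma' \;=\; \sum_{j=1}^J p_j\, U_j \bigl[(\psi_r^{\otimes a_{jr}} - \phi_r^{\otimes a_{jr}}) \otimes R_j\bigr] U_j^\dagger,
\end{equation*}
where $R_j \coloneqq \bigotimes_{i\ne r}\psi_i^{\otimes a_{ji}} \otimes \tau_j$ is a density matrix. Part (a) of \Cref{lem: norm facts} then gives $\abs{\tr(O(\sigma-\sigma'))} \le \norm{O}_\infty \sum_j p_j \norm{U_j[(\psi_r^{\otimes a_{jr}} - \phi_r^{\otimes a_{jr}})\otimes R_j] U_j^\dagger}_1$. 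Unitary conjugation preserves the trace norm, part (b) factorizes it across tensor products, and $\norm{R_j}_1 = 1$, so the $j$-th summand collapses to $\norm{\psi_r^{\otimes a_{jr}} - \phi_r^{\otimes a_{jr}}}_1$.

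Finally, successively applying parts (d) and (c) of \Cref{lem: norm facts} yields $\norm{\psi_r^{\otimes a_{jr}} - \phi_r^{\otimes a_{jr}}}_1 \le a_{jr}\norm{\psi_r-\phi_r}_1 \le 2 a_{jr} \norm{\ket{\psi_r}-\ket{\phi_r}}_2$. Averaging against the probabilities $p_j$ and using the definition $a_r = \sum_j p_j a_{jr}$ from \Cref{def: haar assembled ensembles restated} delivers the claimed Lipschitz bound. There is no substantive obstacle here; the proof is a routine chain of the norm inequalities collected in \Cref{lem: norm facts}, and the only point requiring care is the bookkeeping of the tensor structure inside each mixture component so that $\psi_r^{\otimes a_{jr}}$ can be cleanly factored out.
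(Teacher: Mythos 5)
Your proof is correct and follows essentially the same route as the paper: Hölder's inequality, the triangle inequality over the mixture, unitary invariance and multiplicativity of the trace norm to isolate $\norm{\psi_r^{\otimes a_{jr}}-\phi_r^{\otimes a_{jr}}}_1$, and then parts (d) and (c) of \Cref{lem: norm facts}. The permutation/rearrangement step you mention is not even needed, since $\psi_r^{\otimes a_{jr}}$ already occupies a designated block in the definition, but this does not affect correctness.
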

\begin{proof}
    We write $\sigma=\sigma(\psi_r)$ to emphasize the dependence on $\psi_r$. For any two $d_r$-dimensional states $\phi, \phi'$,
    \begin{align}
        &\abs{\tr(O\sigma(\phi))-\tr(O\sigma(\phi'))}\nonumber\\
        \leq & \norm{O}_{\infty}\norm{\sigma(\phi)-\sigma(\phi')}_1\tag*{(\Cref{lem: norm facts}(a))}\\
        \leq& \norm{O}_{\infty}\sum_{j=1}^J p_j \norm{(\phi^{\otimes a_{jr}}-(\phi')^{\otimes a_{jr}})\otimes \tau_j\otimes \bigotimes_{i\neq r}\psi_i^{\otimes a_{ji}}}_1\tag*{(triangle inequality)}\\
        =& \norm{O}_{\infty}\sum_{j=1}^J p_j \norm{\phi^{\otimes a_{jr}}-(\phi')^{\otimes a_{jr}}}_1\tag*{(\Cref{lem: norm facts}(b))}\\
        \leq & \norm{O}_{\infty}\sum_{j=1}^J p_j a_{jr}\norm{\phi-\phi'}_1\tag*{(\Cref{lem: norm facts}(d))}\\
        = & a_r\norm{O}_{\infty}\norm{\phi-\phi'}_1\tag{$a_r=\sum_{j=1}^Jp_ja_{jr}$}\\
        = & 2a_r\norm{O}_{\infty}\norm{\ket{\phi}-\ket{\phi'}}_2.\tag*{(\Cref{lem: norm facts}(c))}
    \end{align}
\end{proof}

Now we are ready to prove \Cref{thm: concentration of Haar-assembled ensemble}.
\begin{proof}[Proof of \Cref{thm: concentration of Haar-assembled ensemble}]
    The randomness of $\sigma\leftarrow \cE_\Haar$ comes from $m$ independent Haar random state $\psi_1, \cdots, \psi_m$. We write $\tr(O\sigma)$ as $f(\psi_1, \cdots, \psi_m)$ to emphasize this dependence. By telescoping,
    \begin{align}
        &\abs{f(\psi_1, \cdots, \psi_m)-\E_{\psi_1',\cdots, \psi_m'}[f(\psi_1',\cdots, \psi_m')]} \nonumber\\
        \leq&  \sum_{r=1}^m \abs{\E_{\psi_1', \cdots, \psi_{r-1}'}f(\psi_1', \cdots, \psi_{r-1}', \psi_r, \cdots, \psi_m) - \E_{\psi_1', \cdots, \psi_{r}'}f(\psi_1', \cdots, \psi_{r}', \psi_{r+1}, \cdots, \psi_m)}. \label{equ: proof of concentration}
    \end{align}
    Regard $\E_{\psi_1', \cdots, \psi_{r-1}'}f(\psi_1', \cdots, \psi_{r-1}', \psi_r, \cdots, \psi_m)$ as a function of $\psi_r$ for fixed $\psi_{r+1}, \cdots, \psi_m$. By the previous lemma, it is $2a_r\norm{O}_{\infty}$-Lipschitz. Applying Levy's lemma (\Cref{lem: Levy's lemma}), the $r$-th term in \eqref{equ: proof of concentration} is at most $\delta/m$ with probability at least $1-2\exp(-\frac{d_r\delta^2}{18\pi^3 m^2 a_r^2 \norm{O}_{\infty}^2})$. 
    By union bound, with probability at least $1-2\sum_{r=1}^m\exp(-\frac{d_r\delta^2}{18\pi^3 m^2 a_r^2 \norm{O}_{\infty}^2})$, every term in \eqref{equ: proof of concentration} is at most $\delta/m$. Therefore, with the same probability, $\abs{\tr(O\sigma)-\tr(O \E_{\sigma'\leftarrow \cE_\Haar}[\sigma'])} \leq \delta$. 
\end{proof}

\subsection{Haar-assembled ensemble versus its average}
We now show the hardness of distinguishing in a stronger sense. Suppose we have copy access to $\sigma$ that is either drawn from a Haar-assembled ensemble $\cE_\Haar$ or is equal to the averaged state $\E_{\sigma'\leftarrow \cE_\Haar}[\sigma']$. In this section, we lower bound the sample complexity of distinguishing the two cases using $1$-replica protocols.

By \Cref{lem: reduction to TV}, we only need to upper bound
\begin{equation}
    \max_{M\in \cM_{D,T}} d_{M}(\E_{\sigma\leftarrow \cE_\Haar}[\sigma^{\otimes T}], \E_{\sigma\leftarrow \cE_\Haar}[\sigma]^{\otimes T}).
\end{equation}

\begin{theorem}[Indistinguishability principle for Haar-assembled ensembles]\label{thm: Haar-assembled ensemble vs mean}
    Let $T\in\N^*$ and $\cE_\Haar$ be a $D$-dimensional Haar-assembled ensemble defined in \eqref{equ: Haar assembled ensembles restated}:
    \begin{equation}
        \cE_\Haar \coloneqq \left\{\sum_{j=1}^J p_j U_j(\psi_1^{\otimes a_{j1}}\otimes \cdots \otimes \psi_m^{\otimes a_{jm}}\otimes \tau_j)U_j^\dagger\right\}_{\psi_r \leftarrow \mu_\Haar(d_r),r\in[m]}\,.
    \end{equation}
    Any $1$-replica $T$-round protocol can only distinguish between $\cE_\Haar$ and $\E_{\sigma\leftarrow \cE_\Haar}[\sigma]$ with success probability at most
    \begin{equation}\label{equ: haar assembled ensemble vs mean}
        \max_{M\in \cM_{D,T}} d_{M}(\E_{\sigma\leftarrow \cE_\Haar}[\sigma^{\otimes T}], \E_{\sigma\leftarrow \cE_\Haar}[\sigma]^{\otimes T}) \leq  T(T-1)\sum_{r=1}^m \frac{a_r^2}{d_r}+T\sum_{r=1}^m\frac{a_r'}{d}.
    \end{equation}
    As a consequence, $T=\Omega(\sqrt{d_{\min}/(ma_{\max}^2)})$ is necessary to achieve a constant success probability.
\end{theorem}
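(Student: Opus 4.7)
The plan is to follow the one-sided-likelihood-ratio strategy outlined in the overview, extending the argument of \cite{chenExponentialSeparationsLearning2022} through the two additional ingredients highlighted there: a mixture expansion and a multi-state Haar integration. By \Cref{lem: reduction to TV}, it suffices to fix any $(D,T)$-product POVM $\{F_s\}_s$ and upper bound the total variation distance between the outcome distributions $p_s \coloneqq \tr(F_s\,\E_\sigma[\sigma^{\otimes T}])$ and $q_s \coloneqq \tr(F_s\,(\E_\sigma[\sigma])^{\otimes T})$. Expand both states as sums over tuples $\vec{j}=(j_1,\ldots,j_T)\in[J]^T$ with weights $p_{\vec{j}}\coloneqq \prod_t p_{j_t}$, and then carry out the Haar average over each $\psi_r$. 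Write $\E_\sigma[\sigma^{\otimes T}]=\sum_{\vec j}p_{\vec j}N_{\vec j}$ and $(\E_\sigma[\sigma])^{\otimes T}=\sum_{\vec j}p_{\vec j}D_{\vec j}$. The key structural difference is that, in $N_{\vec j}$, each $\psi_r$ spans all $b_r(\vec j)\coloneqq\sum_t a_{j_t r}$ positions distributed across the $T$ rounds, and its Haar integral produces a single big symmetrizer $\tilde S^{(d_r)}_{b_r(\vec j)}$ acting simultaneously on all of them; in $D_{\vec j}$, the same $\psi_r$ is integrated separately within each round, producing the factorized $\bigotimes_t \tilde S^{(d_r)}_{a_{j_t r}}$.

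The crucial step is the term-wise inequality $\tr(F_s N_{\vec j})\ge c_{\vec j}\,\tr(F_s D_{\vec j})$ with $c_{\vec j}\coloneqq \prod_r \prod_t d_r^{\uparrow a_{j_t r}}/d_r^{\uparrow b_r(\vec j)}$. This exploits the subgroup embedding $\prod_t S_{a_{j_t r}}\hookrightarrow S_{b_r(\vec j)}$: split $\sum_{\pi\in S_{b_r}}\pi$ into the subgroup part plus a complement. The subgroup part reproduces $D_{\vec j}$ exactly up to the normalization ratio $c_{\vec j}$, so it remains to show that the complement is non-negative when traced against $F_s$ and the remaining product-PSD factors. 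This is the generalization of \cite[Lemma~5.12]{chenExponentialSeparationsLearning2022} referred to in the overview: after pulling the fixed product unitaries $U_{j_t}$ through the trace (which preserves the product-PSD structure of $F_s$ up to a rearrangement of tensor positions), each $\tr(F_s\,\pi\,D)$ factorizes over the cycle decomposition of $\pi$ into traces of alternating PSDs, and the full non-subgroup sum is non-negative by a direct positivity argument.

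The final step converts the term-wise bound into a TV-distance bound. Let $\delta_{\vec j}\coloneqq 1-c_{\vec j}$. Since $q_s-p_s\le \sum_{\vec j}p_{\vec j}\delta_{\vec j}\,\tr(F_s D_{\vec j})$ and $\sum_s F_s=\Id_D$, $\tr(D_{\vec j})=1$, summing the positive parts over $s$ gives $d_M\le \E_{\vec j}[\delta_{\vec j}]$. To bound $\E[\delta_{\vec j}]$, first use $1-\prod_r(1-\epsilon_r)\le \sum_r\epsilon_r$ to pass to a per-$r$ bound, then apply $\ln(1+i/d_r)\le i/d_r$ together with the identity $b_r^2-\sum_t a_{j_t r}^2=2\sum_{t<s}a_{j_t r}a_{j_s r}$ to obtain $\delta_{\vec j}\le \sum_r\frac{1}{d_r}\sum_{t<s}a_{j_t r}a_{j_s r}$ at leading order, with a higher-order correction involving $\sum_r\sum_t a_{j_t r}^2/d_r^2$. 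Taking expectation over $\vec j$, and using independence of the $j_t$'s so that $\E[a_{j_t r}a_{j_s r}]=a_r^2$ for $t\ne s$ and $\E[a_{j_t r}^2]=a_r'$, the cross terms contribute the $T(T-1)\sum_r a_r^2/d_r$ piece (after absorbing a factor-of-two looseness) while the diagonal corrections yield the $T\sum_r a_r'/d$ piece. The asymptotic corollary $T=\Omega(\sqrt{d_{\min}/(m a_{\max}^2)})$ then follows by setting the right-hand side to a constant and using $\sum_r a_r^2/d_r\le m a_{\max}^2/d_{\min}$.

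The main obstacle I anticipate is the generalized permutation inequality in the second step. The difficulty is not conceptual but combinatorial: after the $U_{j_t}$ are absorbed and tensor positions are rearranged so that each $\tilde S^{(d_r)}_{b_r}$ acts on contiguous blocks, the cycles of a permutation $\pi\in S_{b_r}\setminus \prod_t S_{a_{j_t r}}$ can interleave positions coming from different rounds $t$ (and hence different fixed states $\tau_{j_t}$), and different Haar indices $r$ generate independent copies of this structure that must all be handled simultaneously. Showing non-negativity of the non-subgroup sum reduces, cycle by cycle, to the non-negativity of traces of alternating products of PSDs of the form $\tr(F_{i_1}D_{i_1}F_{i_2}D_{i_2}\cdots)$, which follows from Hermitian-pairing tricks analogous to those used in \cite{chenExponentialSeparationsLearning2022}; carefully carrying out this bookkeeping is where I expect the essential technical work to lie.
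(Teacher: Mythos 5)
Your high-level architecture is the same as the paper's: expand both states over tuples $(j_1,\dots,j_T)$, bound each term by a one-sided likelihood ratio that is lower bounded by the ratio of rising factorials $c_{\vec j}=\prod_r\prod_t d_r^{\uparrow a_{j_t r}}\big/ d_r^{\uparrow b_r(\vec j)}$, convert this to a TV bound, and then average the resulting bound on $1-c_{\vec j}$ over the multinomial distribution of the $j_t$'s to obtain $T(T-1)\sum_r a_r^2/d_r+T\sum_r a_r'/d_r$. That outer scaffolding is correct and mirrors the paper's proof of \Cref{thm: Haar-assembled ensemble vs mean} (your direct TV bookkeeping in place of the paper's triangle inequality plus \Cref{lem: one-sided likelihood ratio} is an equivalent packaging).

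The gap is in the one step that carries the real content: the claim that, after splitting each big symmetrizer $S_{b_r(\vec j)}$ into the subgroup part $\prod_t S_{a_{j_t r}}$ and its complement, the complement traced against the product-PSD measurement operator is non-negative. Your proposed justification --- factorize $\tr(F_s\,\pi\,\cdot)$ over the cycles of $\pi$ and invoke positivity of traces of alternating products of positive semidefinite operators --- does not work: already the trace of a product of three positive semidefinite matrices can be negative (e.g.\ $A=\bigl(\begin{smallmatrix}1&0\\0&0\end{smallmatrix}\bigr)$, $B=\bigl(\begin{smallmatrix}1&1\\1&1\end{smallmatrix}\bigr)$, $C=\bigl(\begin{smallmatrix}1&-2\\-2&5\end{smallmatrix}\bigr)$ gives $\tr(ABC)=-1$), and the alternating form $\tr(F_{i_1}D_{i_1}F_{i_2}D_{i_2}\cdots)$ is no better, since setting some factors to the identity reduces it to the three-matrix case. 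So individual non-subgroup permutations, and individual cycles, can contribute negatively; positivity only holds for suitable group-structured sums. This is exactly what the paper's \Cref{lem: permutation inequality} supplies: it inducts on $T$, and in the $T=2$ base case groups the crossing permutations by their crossing number $i_r$, writes each class, up to a positive factor, as the sandwich $(S^{I_{1r}}\otimes S^{I_{2r}})(\SWAP_r\otimes \Id)(S^{I_{1r}}\otimes S^{I_{2r}})$, and reduces its trace against $G_1\otimes G_2$ via partial traces to $\tr(XY)$ with $X,Y$ positive semidefinite. Supplying this class-level (rather than permutation-level or cycle-level) positivity argument is the missing idea; without it the central inequality $\tr(F_sN_{\vec j})\ge c_{\vec j}\,\tr(F_sD_{\vec j})$ is unproven. (A minor additional slip: the diagonal terms $a_{j_t r}^2$ enter at order $1/d_r$, giving the $T\sum_r a_r'/d_r$ piece directly, not as a $1/d_r^2$ correction, though your final expression is the right one.)
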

\begin{proof}
    Write $\sigma_j=U_j(\psi_1^{\otimes a_{j1}}\otimes \cdots \otimes \psi_m^{\otimes a_{jm}}\otimes \tau_j)U_j^\dagger$. Then $\sigma = \sum_{j=1}^J p_j \sigma_j$. Fix a $M\in \cM_{D, T}$. We will omit the subscription $\psi_r\leftarrow \psi_\Haar(d_r)$ in all expectations below. Expanding 
    \begin{equation}
        \E[\sigma^{\otimes T}] = \sum_{j_1,\cdots, j_T\in [J]} \parens{\prod_{t=1}^T p_{j_t}} \E[\otimes_{t=1}^T \sigma_{j_t}],
    \end{equation}
    and 
    \begin{equation}
        \E[\sigma]^{\otimes T} = \sum_{j_1,\cdots, j_T\in [J]} \parens{\prod_{t=1}^T p_{j_t}} \otimes_{t=1}^T \E[\sigma_{j_t}]
    \end{equation}
    in $d_M(\E[\sigma^{\otimes T}], \E[\sigma]^{\otimes T})$, and applying triangle inequality, we have
    \begin{align}
        d_M(\E[\sigma^{\otimes T}], \E[\sigma]^{\otimes T})\leq
        \sum_{j_1,\cdots, j_T\in [J]} \parens{\prod_{t=1}^T p_{j_t}} d_M(\E[\otimes_{t=1}^T \sigma_{j_t}], \otimes_{t=1}^T \E[\sigma_{j_t}]). \label{equ: main 1}
    \end{align}
    Fix a term $j_1, \cdots, j_T\in [J]$. For $r\in [m]$, let $A_r=\sum_{t=1}^T a_{j_t,r}$ be the total number of occurrences of $\psi_r$ in $\otimes_{t=1}^T \sigma_{j_t}$. We will upper bound $d_M(\E[\otimes_{t=1}^T \sigma_{j_t}], \otimes_{t=1}^T \E[\sigma_{j_t}])$ by
    \begin{equation}
        d_M(\E[\otimes_{t=1}^T \sigma_{j_t}], \otimes_{t=1}^T \E[\sigma_{j_t}]) \leq 1-\exp(-\sum_{r=1}^m \frac{A_r^2}{d_r}) \leq \sum_{r=1}^m \frac{A_r^2}{d_r}. \label{equ: main 2}
    \end{equation}
    Let us first assume \eqref{equ: main 2} and finish the proof. Plugging \eqref{equ: main 2} into \eqref{equ: main 1}, we obtain
    \begin{equation}
        d_M(\E[\sigma^{\otimes T}], \E[\sigma]^{\otimes T}) \leq \sum_{j_1,\cdots, j_T\in [J]} \parens*{\prod_{t=1}^T p_{j_t}} \sum_{r=1}^m \frac{1}{d_r}\parens*{\sum_{t=1}^T a_{j_t, r}}^2.
    \end{equation}
    Let $b_j = \abs{\{t: j_t=j\}}$. The right-hand side can be rearranged as
    \begin{align}
        &d_M(\E[\sigma^{\otimes T}], \E[\sigma]^{\otimes T}) \nonumber\\
        \leq& \sum_{b_1+\cdots + b_J=T}\binom{T}{b_1,b_2,\cdots, b_J}\parens*{\prod_{j=1}^J p_j^{b_j}} \sum_{r=1}^m \frac{1}{d_r}\parens*{\sum_{j=1}^J b_j a_{jr}}^2\nonumber\\
        =& T(T-1)\sum_{r=1}^m \frac{1}{d_r}\parens*{\sum_{j=1}^J p_ja_{jr}}^2+T\sum_{r=1}^m\frac{1}{d_r}\sum_{j=1}^Jp_ja_{jr}^2\,. \tag*{(\Cref{cor: expectation under multinomial distribution quadratic})}
    \end{align}
    This gives \eqref{equ: haar assembled ensemble vs mean}. Since $d_r\ge d_{\min}$ and $a_{jr}\leq a_{\max}$, we have $a_{r}\leq a_{\max}$ and $a'_{r}\leq a_{\max}^2$, so the right-hand side of \eqref{equ: haar assembled ensemble vs mean} is at most $T^2a_{\max}^2m/d_{\min}$. Therefore, $T=\Omega(\sqrt{d_{\min}/(ma_{\max}^2)})$ is necessary to achieve a constant success probability. So the theorem holds once we assume \eqref{equ: main 2}.

    Now we prove \eqref{equ: main 2}. Write $M=\{F_s\}_s$ By definition, $d_M(\E[\otimes_{t=1}^T \sigma_{j_t}], \otimes_{t=1}^T \E[\sigma_{j_t}])$ is the total variation distance between $\{\tr(F_s\E[\otimes_{t=1}^T \sigma_{j_t}])\}_{s}$ and $\{\tr(F_s\otimes_{t=1}^T \E[\sigma_{j_t}])\}_{s}$. \Cref{lem: one-sided likelihood ratio} allows us to upper bound this total variation distance by lower bounding the one-sided likelihood ratio $\tr(F_s\E[\otimes_{t=1}^T \sigma_{j_t}])/\tr(F_s\otimes_{t=1}^T \E[\sigma_{j_t}])$ for every $s$. By definition of $\cM_{D, T}$, each $F_s$ has a product form. So it suffices to prove that for any $F=F_1\otimes \cdots \otimes F_T$ where each $F_t$ is a $D$-dimensional positive semidefinite operator,
    \begin{equation}
        \frac{\tr(F\E[\otimes_{t=1}^T \sigma_{j_t}])}{\tr(F\otimes_{t=1}^T \E[\sigma_{j_t}])} \geq \exp(-\sum_{r=1}^m \frac{A_r^2}{d_r}). \label{equ: main 3}
    \end{equation}

    We simplify \eqref{equ: main 3} by absorbing $U_{j_t}$ and $\tau_{j_t}$ into $F_t$. Specifically, for $t\in [T]$, define $$G_t\coloneqq \tr_{\tau_{j_t}}((U^\dagger_{j_t}F_t U_{j_t})\cdot (\Id^{-\tau_{j_t}}\otimes \tau_{j_t})),$$
    where by $\tr_{\tau_{j_t}}$ we mean tracing out the subsystem occupied by $\tau_{j_t}$ and by $\Id^{-\tau_{j_t}}$ we mean the identity operator acting on the complement of this subsystem. 
    We can verify that 
    \begin{equation}
        \tr(F_t\sigma_{j_t})=\tr(F_t U_{j_t}(\psi_1^{\otimes a_{j_t,1}}\otimes \cdots \otimes \psi_m^{\otimes a_{j_t,m}}\otimes \tau_{j_t})U_{j_t}^\dagger)=\tr(G_t \psi_1^{\otimes a_{j_t,1}}\otimes \cdots \otimes \psi_m^{\otimes a_{j_t,m}}).
    \end{equation}

    Define $G\coloneqq G_1\otimes \cdots \otimes G_T$. The left-hand side of \eqref{equ: main 3} becomes
    \begin{equation}
        \frac{\tr(F\E[\otimes_{t=1}^T \sigma_{j_t}])}{\tr(F\otimes_{t=1}^T \E[\sigma_{j_t}])} = \frac{\tr(G\E[\otimes_{t=1}^T (\psi_1^{\otimes a_{j_t,1}}\otimes \cdots \otimes \psi_m^{\otimes a_{j_t,m}})])}{\tr(G\otimes_{t=1}^T \E[\psi_1^{\otimes a_{j_t,1}}\otimes \cdots \otimes \psi_m^{\otimes a_{j_t,m}}])}. \label{equ: main 4 prep}
    \end{equation}
    To express the Haar integral, we need to mark positions of operators in a clear way. For a $t\in [T]$, we regard $\psi_1^{\otimes a_{j_t, 1}}\otimes \cdots \otimes \psi_m^{\otimes a_{j_t, m}}$ as a state on $\sum_{r=1}^m a_{j_t, r}$ qudits, where $\psi_r^{\otimes a_{j_t, r}}$ lives on $a_{j_t,r}$ qudits among them with local dimension $d_r$. Denote the set of these qudits by $A_{t, r}$. For a set $A$ of qudits (with the same local dimension), we use $S^A$ to denote the symmetry operator on qudits in $A$. With these notations, we have
    \begin{equation}
        \bigotimes_{t=1}^T (\psi_1^{\otimes a_{j_t,1}}\otimes \cdots \otimes \psi_m^{\otimes a_{j_t,m}}) = \bigotimes_{t=1}^T(\psi_1^{\otimes A_{t, 1}}\otimes \cdots \otimes \psi_m^{\otimes A_{t, m}}) = \bigotimes_{r=1}^m \psi_r^{\otimes \bigcup_{t=1}^T A_{t, r}}.
    \end{equation}
    Now we can express the likelihood ratio in \eqref{equ: main 4 prep} as 
    \begin{align}
        \frac{\tr(F\E[\otimes_{t=1}^T \sigma_{j_t}])}{\tr(F\otimes_{t=1}^T \E[\sigma_{j_t}])} &= \frac{\tr(G\otimes_{r=1}^m \E[\psi_r^{\otimes\bigcup_{t=1}^T A_{t,r}}])}{\tr(G\otimes_{t=1}^T\otimes_{r=1}^m \E[\psi_r^{\otimes A_{t,r}}])}\nonumber\\
        &= \frac{\prod_{t=1}^T\prod_{r=1}^m d_r^{\uparrow a_{j_t,r}}}{\prod_{r=1}^m d_r^{\uparrow \sum_{t=1}^T a_{j_t, r}}} \cdot \frac{\tr(G\otimes_{r=1}^m S^{\bigcup_{t=1}^T A_{t,r}})}{\tr(G\otimes_{t=1}^T\otimes_{r=1}^m S^{A_{t,r}})}\tag*{(\Cref{lem: Haar integrals})}\\
        &\geq \frac{\prod_{t=1}^T\prod_{r=1}^m d_r^{\uparrow a_{j_t,r}}}{\prod_{r=1}^m d_r^{\uparrow \sum_{t=1}^T a_{j_t, r}}}\label{equ: main 4} \\
        &\geq \prod_{r=1}^m \exp(-\frac{1}{d_r}\parens*{\sum_{t}^T a_{j_t,r}}^2)\label{equ: main 5}\\
        &= \exp(-\sum_{r=1}^m \frac{A_r^2}{d_r}). \tag*{($A_r=\sum_{t=1}^T a_{j_t,r}$)}
    \end{align}
    Here \eqref{equ: main 4} is by the \Cref{lem: permutation inequality} below, and \eqref{equ: main 5} is from 
    \begin{align*}
        \frac{\prod_{t=1}^T d^{\uparrow y_t}}{d^{\uparrow y_1+\cdots y_T}}&=\prod_{t=1}^{T}\frac{d^{\uparrow y_t}}{(d+y_1+\cdots +y_{t-1})^{\uparrow y_t}}\\
        &\ge \prod_{t=1}^T \left(\frac{d}{d+y_1+\cdots + y_{t-1}}\right)^{y_t}\\
        &\ge \prod_{t=1}^T \exp(-\frac{y_t(y_1+\cdots y_{t-1})}{d})\\
        &= \exp(-\frac{1}{d}\sum_{t_1<t_2}y_{t_1}y_{t_2})\\
        &\ge \exp(-\frac{1}{d}\parens*{\sum_{t}y_t}^2).
    \end{align*}
    Now we have proved \eqref{equ: main 3}, and thus \eqref{equ: main 2} by \Cref{lem: one-sided likelihood ratio}. This completes the proof of the theorem.
\end{proof}

\begin{lemma}\label{lem: permutation inequality}
    Let $I$ be a set of qudits, $T, m\in \N^+$, and let $\{I_{t,r}\}_{t\in [T], r\in [m]}$ be a partition of $I$. Write $P_t=\bigcup_{r=1}^m I_{t,r}$ and $Q_r=\bigcup_{t=1}^T I_{t,r}$, so
    $\{P_1,\dots,P_T\}$ and $\{Q_1,\dots,Q_m\}$ are two partitions of $I$. Suppose $G$ is a positive semidefinite operator on $I$ that factorizes across the partition $\{P_t\}_{t\in [T]}$, i.e.,
    \begin{equation}
        G = G_1 \otimes \cdots \otimes G_T \quad\text{with } G_t \text{ acting on } P_t.
    \end{equation}
    Then 
    \begin{equation}
        \tr(G \otimes_{r=1}^m S^{Q_r}) \geq \tr(G \otimes_{t=1}^T \otimes_{r=1}^m S^{I_{tr}})=\prod_{t=1}^T \tr(G_t\otimes_{r=1}^m S^{I_{tr}})\,. \label{equ: permutation inequality}
    \end{equation}
    Here we do note require that all qudits have the same local dimension. We only require that qudits in each $Q_r$ have the same local dimension.
\end{lemma}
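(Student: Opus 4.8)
The plan is to dispose of the equality, reduce to rank-one $G$, and then induct on the number $T$ of ``$t$-blocks'', with the hard case being a multi-index generalization of the permanental inequality. The claimed identity $\tr\!\big(G\bigotimes_{t}\bigotimes_{r}S^{I_{tr}}\big)=\prod_t\tr\!\big(G_t\bigotimes_r S^{I_{tr}}\big)$ is immediate: for each fixed $t$ all the operators $S^{I_{t,r}}$ ($r\in[m]$) act inside $P_t$, so $\bigotimes_{t,r}S^{I_{t,r}}=\bigotimes_t\big(\bigotimes_r S^{I_{t,r}}\big)$ factorizes across the partition $\{P_t\}$ exactly as $G=\bigotimes_t G_t$ does. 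For the inequality I would first reduce to $G=\ketbra{H}$ with $\ket H=\bigotimes_t\ket{h_t}$ a product vector across $\{P_t\}$: writing spectral decompositions $G_t=\sum_k\ketbra{h^{(t)}_k}$ yields $G=\sum_{\vec k}\ketbra{H_{\vec k}}$ with each $\ket{H_{\vec k}}=\bigotimes_t\ket{h^{(t)}_{k_t}}$ of this product form, and since both $\bigotimes_r S^{Q_r}$ and $\bigotimes_{t,r}S^{I_{t,r}}$ are positive semidefinite, each side of the target is a sum over $\vec k$ of nonnegative terms, so it suffices to treat each $\ket{H_{\vec k}}$. One may moreover replace each $\ket{h_t}$ by $\big(\bigotimes_r\Pi^{I_{t,r}}_{\mathrm{sym}}\big)\ket{h_t}$: using the absorption identity $S^{Q_r}\big(\bigotimes_t S^{I_{t,r}}\big)=\big(\prod_t|I_{t,r}|!\big)S^{Q_r}$ one checks that both sides are unchanged, so one may assume each $\ket{h_t}$ is symmetric within every block $I_{t,r}$.

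Next I would induct on $T$. For $T=1$ the two sides are equal. For the step, split off block $T$: $G=G^{<}\otimes G_T$ with $G^{<}=\bigotimes_{t<T}G_t$, and $Q_r=Q_r^{<}\sqcup I_{T,r}$ with $Q_r^{<}=\bigcup_{t<T}I_{t,r}$. For each $r$ decompose $S^{Q_r}$ over the subgroup $S_{Q_r^{<}}\times S_{I_{T,r}}\le S_{Q_r}$ according to the size of the ``crossing'' $k_r:=\#\{x\in I_{T,r}:\pi_r(x)\in Q_r^{<}\}$; the $k_r=0$ part is exactly $S^{Q_r^{<}}\otimes S^{I_{T,r}}$, and the rest has $k_r\ge1$. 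Expanding $\bigotimes_r S^{Q_r}$, the term with all $k_r=0$ contributes $\tr\!\big(G^{<}\bigotimes_r S^{Q_r^{<}}\big)\cdot\tr\!\big(G_T\bigotimes_r S^{I_{T,r}}\big)$, and the inductive hypothesis bounds the first factor below by $\prod_{t<T}\tr(G_t\bigotimes_r S^{I_{t,r}})$, so this term alone is already at least the right-hand side $\prod_t\tr(G_t\bigotimes_r S^{I_{t,r}})$. Hence it remains to show that the sum of all the remaining ``crossing'' terms (those with $(k_r)_r\ne\vec 0$) is nonnegative.

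This crossing-term positivity is the crux and the step I expect to be the main obstacle. Every crossing permutation links qudits of $I_{T,r}\subseteq P_T$ to qudits of $Q_r^{<}\subseteq\bigcup_{t<T}P_t$, so these terms couple $G_T$ to $G^{<}$; after tracing out the unlinked qudits one is left with traces of products of reduced pieces of the positive operators $G_1,\dots,G_T$, and one must show they add up to something nonnegative. The scalar heart of the statement is the classical permanental inequality: in the case $m=1$ with single-qudit blocks and rank-one $G_t=\ketbra{h_t}$ one has $\tr\!\big[(\bigotimes_t G_t)\sum_{\pi\in S_T}\pi\big]=\mathrm{perm}(\Gamma)$ with $\Gamma_{tt'}=\braket{h_t}{h_{t'}}$ the Hermitian positive semidefinite Gram matrix, and Marcus' inequality $\mathrm{perm}(\Gamma)\ge\prod_t\Gamma_{tt}$ is precisely ``non-crossing part $+$ (nonnegative crossing part)''. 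The difficulty in general is that the crossing patterns in different directions $r$ are entangled through the shared vectors $\ket{h_t}$, so one cannot apply the scalar inequality one direction at a time; the natural fix is to strengthen the inductive hypothesis so that it also controls the relevant partial traces — a positivity statement of the shape ``$\sum_{\mathrm{crossings}}(\text{reduced operator})\succeq 0$'' — which amounts to an operator-valued / multi-index version of the permanental inequality that one then feeds back into the induction. An alternative route worth trying is to expand each (block-symmetric) $\ket{h_t}$ into a finite combination of products of single-qudit states by polarization, turning every $\langle\,\cdot\,|\bigotimes_r S^{Q_r}|\,\cdot\,\rangle$ into a product over $r$ of permanents of inner-product matrices and reducing to a ``multi-permanental'' inequality directly; there the obstacle becomes controlling the interference between distinct polarization terms.
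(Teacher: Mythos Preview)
Your induction scaffolding is the same as the paper's: reduce to splitting off one block (equivalently, to the case $T=2$ with $G_1'=G^{<}$ and $G_2'=G_T$), observe that the zero-crossing term already reproduces the right-hand side, and show that the remaining crossing terms are nonnegative. The equality and the use of the inductive hypothesis on the zero-crossing factor are fine.

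The gap is precisely the step you flag as the crux: you do not actually prove that the crossing contribution is nonnegative, and the routes you sketch (a strengthened operator-valued permanental hypothesis, polarization into single-qudit tensors) are both detours. The paper settles this in one stroke via a double-coset factorization that you are missing. For a fixed crossing profile $(i_r)_r$, write $S^{Q_r}_{i_r}$ for the sum of permutations of $Q_r$ that move exactly $i_r$ elements from $I_{T,r}$ into $Q_r^{<}$. Every such permutation decomposes as (permute inside $Q_r^{<}$ and inside $I_{T,r}$) $\circ$ (one fixed swap of $i_r$ qudits between the two pieces) $\circ$ (permute inside $Q_r^{<}$ and inside $I_{T,r}$), so up to a positive multiplicity constant
\[
S^{Q_r}_{i_r}\;\propto\;\bigl(S^{Q_r^{<}}\otimes S^{I_{T,r}}\bigr)\,(\mathrm{SWAP}_r\otimes\Id)\,\bigl(S^{Q_r^{<}}\otimes S^{I_{T,r}}\bigr).
\]
Taking the tensor product over $r$, with $S_1:=\bigotimes_r S^{Q_r^{<}}$ acting on $P^{<}$, $S_2:=\bigotimes_r S^{I_{T,r}}$ acting on $P_T$, and $F$ the product of the $m$ fixed swaps, one gets
\[
\tr\!\bigl((G^{<}\otimes G_T)\,\textstyle\bigotimes_r S^{Q_r}_{i_r}\bigr)\;\propto\;\tr\!\bigl((S_1G^{<}S_1\otimes S_2G_TS_2)\,(F\otimes\Id)\bigr)
=\tr\!\bigl(\tr_{R_1}(S_1G^{<}S_1)\cdot\tr_{R_2}(S_2G_TS_2)\bigr)\ge 0,
\]
since $F$ is a swap between the relevant subsystems and both partial traces are positive semidefinite (each $S_j$ is Hermitian and $G^{<},G_T$ are PSD). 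This handles every crossing profile $(i_r)_r\neq 0$ at once.

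Two takeaways: first, because you already lumped the first $T-1$ blocks into a single PSD operator $G^{<}$, the crossing term is a trace of a product of \emph{two} PSD operators, not many, so no multi-permanental machinery is needed; second, the rank-one reduction and the block-symmetrization of $\ket{h_t}$ are unnecessary --- the double-coset trick works directly with arbitrary PSD $G_t$.
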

When $I=[T]$, $m=1$, $I_{t, 1}=\{t\}$, this lemma reduces to $\tr((G_1\otimes\cdots \otimes G_T) S_T)\ge \tr(G_1)\times \cdots \times \tr(G_T)$, which is \cite[Lemma 5.12]{chenExponentialSeparationsLearning2022}. When $I=[a+b]$, $T=2$, $m=1$, $I_{1,1}=[a]$, $I_{2,1}=[a+1,\cdots, a+b]$, this lemma reduces to $\tr((G_1\otimes G_2) S_{a+b})\ge \tr(G_1 S_a)\tr(G_2 S_b)$, which is \cite[Lemma 16]{chenOptimalTradeoffsEstimating2024a}. Both lemmas are used to lower bound one-sided likelihood ratios in state discrimination problems involving Haar random states. \Cref{lem: permutation inequality} goes beyond these two special cases and can be useful in more general scenarios.
\begin{proof}
    We prove the inequality by induction on $T$. When $T=1$, the inequality is a trivial equation. We now focus on the case $T=2$. The inequality becomes
    \begin{equation}
        \tr((G_1\otimes G_2)\cdot (\otimes_{r=1}^m S^{I_{1r}\cup I_{2r}}))\ge \tr(G_1\cdot \otimes_{r=1}^m S^{I_{1r}}) \tr(G_2\cdot \otimes_{r=1}^m S^{I_{2r}}) \label{equ: permutation inequality 1}.
    \end{equation}

    For $i_r\leq \min\{\abs{I_{1r}},\abs{I_{2r}}\}$, define $S^{I_{1r}\cup I_{2r}}_{i_r}$ as the set of permutations that cross $I_{1r}$ and $I_{2r}$ $i_r$ times (in other words, the permutation maps $i_r$ qudits from $I_{1r}$ to $I_{2r}$). Then $S^{I_{1r}\cup I_{2r}} = \sum_{i_r}S^{I_{1r}\cup I_{2r}}_{i_r}$. The LHS of \eqref{equ: permutation inequality 1} is the summation of $\tr((G_1\otimes G_2)\cdot (\otimes_{r=1}^m S^{I_{1r}\cup I_{2r}}_{i_r}))$ over all possible $\{i_r\}_{r=1}^m$, while the RHS of \eqref{equ: permutation inequality 1} is $\tr((G_1\otimes G_2)\cdot (\otimes_{r=1}^m S^{I_{1r}\cup I_{2r}}_0))$. Therefore, we only need to prove that $\tr((G_1\otimes G_2)\cdot (\otimes_{r=1}^m S^{I_{1r}\cup I_{2r}}_{i_r}))\ge 0$ for all $\{i_r\}$. 

    The key observation is that every permutation in $S^{I_{1r}\cup I_{2r}}_{i_r}$ can be decomposed to three steps: we first apply some permutation on $I_{1r}$ and some permutation on $I_{2r}$. Then we swap $i_r$ qudits in $I_{1r}$ with $i_r$ qudits in $I_{2r}$. Finally apply some permutation on $I_{1r}$ and some permutation on $I_{2r}$. Therefore, up to a positive factor due to repeated counting, $S_{i_r}^{I_{1r}\cup I_{2r}}\propto (S^{I_{1r}}\otimes S^{I_{2r}})(\SWAP_r\otimes I_r)(S^{I_{1r}}\otimes S^{I_{2r}})$, where $\SWAP_r$ is a SWAP operator between $i_r$ qudits in $I_{1r}$ and $i_r$ qudits in $I_{2r}$ (the choice can be arbitrary), and $I_r$ is the identity on other qudits. Let $S_1=\otimes_{r=1}^m S^{I_{1r}}$, $S_2=\otimes_{r=1}^m S^{I_{2r}}$, and $F$ be the product of the $m$ SWAP operators (acting on $\sum_{r=1}^m i_r$ qudits in total). Let $R_1,R_2$ be the set of qudits in $P_1, P_2$ that are not swapped by $F$, respectively. We have
    \begin{align*}
        \tr((G_1\otimes G_2)\cdot (\otimes_{r=1}^m S^{I_{1r}\cup I_{2r}}_{i_r}))&\propto \tr((S_1G_1S_1^\dagger\otimes S_2G_2S_2^\dagger)\cdot (F\otimes \Id^{R_1}\otimes \Id^{R_2}))\\
        &= \tr((\tr_{R_1}(S_1G_1S_1^\dagger)\otimes \tr_{R_2}(S_2G_2S_2^\dagger))\cdot F)\\
        &= \tr(\tr_{R_1}(S_1G_1S_1^\dagger)\cdot \tr_{R_2}(S_2G_2S_2^\dagger))\ge 0.
    \end{align*}
    Therefore, the lemma holds for $T=2$. We now prove the lemma from the $T-1$ case to the $T$ case. Write $G'_1 = G_1\otimes\cdots G_{T-1}$, $G'_2 = G_{T}$, $I_{1r}'=I_{1r}\cup\cdots\cup I_{T-1, r}, I'_{2r}=I_{T, r}$. Using the results for $2$ and $T-1$,
    \begin{align*}
        \tr(G\otimes_{r=1}^m S^{\cup_{t=1}^T I_{tr}})&=\tr((G_1'\otimes G_2')\cdot (\otimes_{r=1}^m S^{I_{1r}'\cup I_{2r}'}))\\
        &\ge \tr(G_1'\cdot \otimes_{r=1}^m S^{I_{1r}'}) \tr(G_2'\cdot \otimes_{r=1}^m S^{I_{2r}'})\\
        &\ge \left(\prod_{t=1}^{T-1}\tr(G_t\otimes_{r=1}^m S^{I_{tr}})\right) \tr(G_T\otimes_{r=1}^m S^{I_{Tr}})\\
        &= \prod_{t=1}^{T}\tr(G_t\otimes_{r=1}^m S^{I_{tr}}).\qedhere
    \end{align*}
\end{proof}

\section{Lower bounds}
In this section, we leverage Haar-assembled ensembles to construct hard instances for estimating nonlinear properties and testing spectrum. In this section, all qudits have the same local dimension $d$ equal to the dimension of the unknown state $\rho$. 

\subsection{Constructing hard instances}
For two probability distributions $\{p_r\}_{r=0}^m, \{q_r\}_{r=0}^m$, define two single-qudit ensembles 
\begin{equation}
    \cE_1\coloneqq\left\{p_0\rho_0+\sum_{r=1}^m p_r\psi_r\right\}_{\psi_1, \cdots, \psi_m\leftarrow \mu_\Haar(d)}, \quad \cE_2\coloneqq\left\{q_0\rho_0+\sum_{r=1}^m q_r\psi_r\right\}_{\psi_1, \cdots, \psi_m\leftarrow \mu_\Haar(d)}, \label{equ: hard instance}
\end{equation}
where $\rho_0$ is the maximally mixed state, and $\psi_r$s are Haar random states. Compared to \eqref{equ: overview 1}, here we include an extra term of the maximally mixed state. This is used to tune the $\epsilon$-dependence.

Write $\bp=(p_1,\cdots, p_m), \bq=(q_1,\cdots, q_m)$ (note that $p_0$ and $q_0$ are not included). We will choose $p_0=q_0$ and choose $\bp, \bq$ such that the moments of $\bp$ and $\bq$ are consistent up to degree $k$, but differ at degree $k+1$. In this setting, we aim to prove that,
\begin{itemize}
    \item $\E_{\rho\leftarrow \cE_1}[\rho^{\otimes k}] = \E_{\rho\leftarrow \cE_2}[\rho^{\otimes k}]$. Furthermore, $\cE_1$ and $\cE_2$ are hard to distinguish using $k$-replica protocols.
    \item $\E_{\rho\leftarrow \cE_1}[\rho^{\otimes k+1}] \neq \E_{\rho\leftarrow \cE_2}[\rho^{\otimes k+1}]$. Therefore, we can distinguish between $\cE_1$ and $\cE_2$ if we can learn some $(k+1)$-degree properties of $\rho$ efficiently.
\end{itemize}
We formalize and prove the two points in \Cref{lem: hard instance moments} and \Cref{lem: hard instance}.

To express $\E_{\rho\leftarrow \cE_1}[\rho^{\otimes k}]$, we introduce monomial symmetric polynomials. 
\begin{definition}[Monomial symmetric polynomials]
    For $t\in \N^+$, we use $\lambda \vdash t$ to denote that $\lambda$ is a partition of $t$, that is, a multiset of positive integers $\lambda=\{\lambda_1, \cdots, \lambda_{l(\lambda)}\}$  such that $\lambda_1+\cdots +\lambda_{l(\lambda)}=t$. Denote the monomial symmetric polynomial
    \begin{equation}
        m_\lambda(\bp)\coloneqq \sum_{i_1,\cdots, i_{l(\lambda)}\in [m], \text{distinct}}p_{i_1}^{\lambda_1}\cdots p_{i_{l(\lambda)}}^{\lambda_{l(\lambda)}}. \label{equ: monomial symmetric polynomial}
    \end{equation}
\end{definition}
Expanding $\E_{\rho\leftarrow \cE_1}[\rho^{\otimes k}]$, each term is of the form 
\begin{equation}
    \rho_0^{\otimes A_0}\otimes \E_{\psi_1}[\psi_1^{\otimes A_1}]\otimes\cdots \otimes \E_{\psi_m}[\psi_m^{\otimes A_m}]=\rho_0^{\otimes A_0}\otimes \tilde{S}^{A_1}\otimes \cdots \otimes \tilde{S}^{A_m}, \label{equ: expanding hard instance 1}
\end{equation}
where we write $\tilde{S}^{A}\coloneqq \frac{1}{d^{\uparrow \abs{A}}}S^A$. Notice that this formula is not changed if we permute $\psi_1, \cdots, \psi_m$. After collecting like terms, the coefficient of \eqref{equ: expanding hard instance 1} is $p_0^{\abs{A_0}}$ multiplying a symmetric polynomial in $\bp$. To explicitly write down the polynomial, we select nonempty subsets in $A_1,\cdots, A_m$. The cardinalities of them give a partition $\lambda$ of $k-\abs{A_0}$. The coefficient of \eqref{equ: expanding hard instance 1} is $p_0^{\abs{A_0}}m_{\lambda}(\bp)$. Therefore,
\begin{equation}
    \E_{\rho\leftarrow \cE_1}[\rho^{\otimes k}] = \sum_{a=0}^k\sum_{\lambda\vdash k-a}p_0^a m_{\lambda}(\bp)\sum_{\substack{A, B_1,\cdots, B_{l(\lambda)} \text{ partition } [k]\\ \abs{A}=a, \abs{B_i}=\lambda_i}}\rho_0^{\otimes A}\otimes \tilde{S}^{B_1}\otimes \cdots \otimes \tilde{S}^{B_{l(\lambda)}}. \label{equ: expanding hard instance 2}
\end{equation}
Since $m_{\lambda}(\bp)$ is a symmetric polynomial, it can be written as a polynomial in elementary symmetric polynomials, which in turn can be written as a polynomial in power sums by Newton's identities. The full formula is complicated. Here we provide a self-contained way to calculate the leading coefficient. The proof is delayed to \Cref{sec: proof of lem: monomial to power sum}.

\begin{lemma}\label{lem: monomial to power sum}
    Let $t\in \mathbb{N}$ and let $\lambda=\{\lambda_1,\cdots,\lambda_l\}$ be a partition of $t$. 
    Denote by $m_{\lambda}(\bp)$ the monomial symmetric polynomial defined in \eqref{equ: monomial symmetric polynomial}, and let $s_{u}(\bp)\coloneqq \sum_{i=1}^m p_i^u$ be the degree-$u$ power sum. 
    Then there exists a polynomial $g_{\lambda}$ such that
    \begin{equation}
        m_{\lambda}(\bp) = (-1)^{\,l-1}(l-1)!\, s_t(\bp) \;+\; g_{\lambda}(s_1(\bp), s_2(\bp),\dots, s_{t-1}(\bp)).
    \end{equation}
    Equivalently, in the expansion of $m_{\lambda}(\bp)$ in the power-sum basis, the term $s_t(\bp)$ occurs with coefficient $(-1)^{\,l-1}(l-1)!$, and all remaining terms involve only $s_1,\dots,s_{t-1}$.
\end{lemma}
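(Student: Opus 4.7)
The plan is to proceed by induction on $l = l(\lambda)$, the number of parts of $\lambda$. The inductive hypothesis is the statement of the lemma for all partitions with fewer than $l$ parts (and arbitrary size $t$). The key tool is the product-to-sum identity
\begin{equation*}
    \prod_{i=1}^l s_{\lambda_i}(\bp) \;=\; \sum_{\pi \in \Pi_l} m_{\mu(\pi)}(\bp),
\end{equation*}
where $\Pi_l$ is the set of set-partitions of $[l]$ and, for $\pi = \{B_1,\dots,B_{|\pi|}\}$, $\mu(\pi)$ is the integer partition of $t$ with parts $\{\sum_{i\in B_r}\lambda_i\}_{r=1}^{|\pi|}$. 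This identity follows by expanding the left-hand side as $\sum_{(j_1,\dots,j_l)\in [m]^l}\prod_r p_{j_r}^{\lambda_r}$ and grouping ordered tuples according to the equality pattern of $(j_1,\dots,j_l)$; on each equivalence class the restricted sum over distinct representatives is exactly $m_{\mu(\pi)}(\bp)$ in the ordered-tuple convention of~\eqref{equ: monomial symmetric polynomial}.

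The base case $l=1$ is immediate: $\lambda=(t)$ and $m_{(t)}(\bp)=s_t(\bp)$, giving coefficient $1=(-1)^0\cdot 0!$. For the inductive step with $l\geq 2$, I would solve the product-to-sum identity for the singleton-partition term, which equals $m_\lambda$, and then track the coefficient of $s_t$ on both sides. The left-hand side contributes zero because $\lambda_i<t$ for every $i$ forces $\prod_i s_{\lambda_i}$ to lie in $\mathbb{Z}[s_1,\dots,s_{t-1}]$. On the right, the one-block partition $\pi=\{[l]\}$ produces $m_{(t)}=s_t$ with coefficient $1$; every other non-singleton $\pi$ has $|\pi|$ blocks with $2\leq |\pi|\leq l-1$, so $\mu(\pi)$ is a partition of $t$ with strictly fewer than $l$ parts, each strictly less than $t$, and the inductive hypothesis yields $s_t$-coefficient $(-1)^{|\pi|-1}(|\pi|-1)!$ in $m_{\mu(\pi)}$. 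Counting set-partitions of $[l]$ into $k$ blocks by the Stirling number $S(l,k)$, the coefficient $c(\lambda)$ of $s_t$ in $m_\lambda$ satisfies
\begin{equation*}
    c(\lambda) \;=\; -\sum_{k=1}^{l-1} S(l,k)\,(-1)^{k-1}(k-1)!.
\end{equation*}
The remaining polynomial piece $g_\lambda(s_1,\dots,s_{t-1})$ collects all lower-order contributions, which by construction involve only $s_1,\dots,s_{t-1}$.

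The proof then reduces to verifying the combinatorial identity
\begin{equation*}
    \sum_{k=1}^l S(l,k)\,(-1)^{k-1}(k-1)! \;=\; 0 \qquad\text{for } l\geq 2,
\end{equation*}
after which $c(\lambda)=(-1)^{l-1}(l-1)!$ follows at once. This is a standard consequence of the exponential formula applied to $\log(e^x)=x$: the sum above is precisely the coefficient of $x^l/l!$ in $\log\bigl(\sum_{n\geq 0}x^n/n!\bigr)=x$, which vanishes for $l\geq 2$. Alternatively, one can check it directly by induction on $l$ using the recurrence $S(l,k)=kS(l-1,k)+S(l-1,k-1)$.

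I expect no serious obstacle. The only non-routine step is the Stirling-number identity in the last paragraph; the rest is bookkeeping around the partition-expansion formula, which cleanly isolates $m_\lambda$ from $\prod_i s_{\lambda_i}$ modulo strictly lower-complexity terms (fewer parts or smaller maximum part) to which the inductive hypothesis applies.
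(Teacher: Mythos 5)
Your proposal is correct, but it takes a different route from the paper's proof. The paper also inducts on $l$, but instead of expanding the full product $\prod_{i=1}^l s_{\lambda_i}$ over all set partitions of $[l]$, it multiplies only once: $m_{\{\lambda_1,\dots,\lambda_{l-1}\}}(\bp)\,s_{\lambda_l}(\bp) = m_\lambda(\bp) + \sum_{j=1}^{l-1} m_{\lambda^{j\leftarrow l}}(\bp)$, where $\lambda^{j\leftarrow l}$ absorbs $\lambda_l$ into $\lambda_j$. Since the left-hand side lies in $\mathbb{Z}[s_1,\dots,s_{t-1}]$ and each $\lambda^{j\leftarrow l}$ has length $l-1$, the inductive hypothesis for length $l-1$ alone gives the coefficient $-(l-1)\cdot(-1)^{l-2}(l-2)! = (-1)^{l-1}(l-1)!$ directly, with no Stirling numbers needed. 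Your version requires the strong inductive hypothesis for all lengths below $l$ together with the alternating Stirling identity $\sum_{k=1}^{l} S(l,k)(-1)^{k-1}(k-1)!=0$ for $l\ge 2$; your justifications of both the set-partition expansion (in the paper's ordered-tuple, i.e.\ augmented-monomial, convention) and the Stirling identity are sound, and in fact that identity is exactly the one the paper proves and uses elsewhere, in the proof of \Cref{lem: O component}. What your approach buys is a more global statement (the classical expansion of $p_\lambda$ into augmented monomials indexed by set partitions), at the cost of heavier bookkeeping; the paper's one-part-at-a-time absorption is the leaner argument for this specific coefficient.
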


With \eqref{equ: expanding hard instance 2} and \Cref{lem: monomial to power sum}, we are ready to compare the moments of $\cE_1$ and $\cE_2$ in \Cref{lem: hard instance moments}.

\begin{lemma}\label{lem: hard instance moments}
    Let $\cE_1, \cE_2$ be two single-qudit ensembles defined in \eqref{equ: hard instance}. Suppose $p_0=q_0$ and $\sum_{r=1}^m p_r^i=\sum_{r=1}^m q_r^i$ for $i\in [k]$. Then $\E_{\rho\leftarrow \cE_1}[\rho^{\otimes k}] = \E_{\rho\leftarrow \cE_2}[\rho^{\otimes k}]$, 
    $$\E_{\rho\leftarrow \cE_1}[\rho^{\otimes k+1}] - \E_{\rho\leftarrow \cE_2}[\rho^{\otimes k+1}]=(\sum_{r=1}^m p_r^{k+1}-\sum_{r=1}^m q_r^{k+1})\Delta_{k+1},$$
    where
    \begin{equation}
        \Delta_{k+1} \coloneqq \sum_{l=1}^{k+1}(-1)^{l-1}(l-1)!\sum_{\substack{B_1,\cdots, B_{l} \text{ partition } [k+1]\\ \text{all nonempty}}}\tilde{S}^{B_1}\otimes \cdots \otimes \tilde{S}^{B_{l}}. \label{equ: Delta}
    \end{equation}
\end{lemma}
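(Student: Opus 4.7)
The plan is to substitute the explicit expansion \eqref{equ: expanding hard instance 2} into the difference $\E_{\rho\leftarrow \cE_1}[\rho^{\otimes t}] - \E_{\rho\leftarrow \cE_2}[\rho^{\otimes t}]$ for $t = k$ and $t = k+1$, and then invoke \Cref{lem: monomial to power sum} to decide which terms survive the moment-matching hypothesis. Since $p_0 = q_0$, the difference at order $t$ reduces to
\[
\sum_{a=0}^t \sum_{\lambda \vdash t-a} p_0^a \bigl(m_\lambda(\bp) - m_\lambda(\bq)\bigr)\, T_{a,\lambda},
\]
where $T_{a,\lambda}$ denotes the fixed tensor operator that appears in \eqref{equ: expanding hard instance 2}. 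The structural fact I will use is that, by \Cref{lem: monomial to power sum}, $m_\lambda$ is a polynomial in the power sums $s_1,\cdots,s_{|\lambda|}$ alone.

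For $t=k$ every partition in the double sum satisfies $|\lambda|=k-a\le k$, so $m_\lambda(\bp)$ and $m_\lambda(\bq)$ are polynomials in $s_1,\cdots,s_k$ evaluated on $\bp$ and $\bq$ respectively. Because $s_i(\bp)=s_i(\bq)$ for all $i\in[k]$, every term vanishes, giving $\E_{\rho\leftarrow \cE_1}[\rho^{\otimes k}] = \E_{\rho\leftarrow \cE_2}[\rho^{\otimes k}]$. The identical argument also kills every $a\ge 1$ contribution at order $k+1$, since there $|\lambda|=k+1-a\le k$, so only the $a=0$, $|\lambda|=k+1$ piece can remain.

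For that surviving piece, the finer statement of \Cref{lem: monomial to power sum} gives
\[
m_\lambda(\bp) - m_\lambda(\bq) = (-1)^{l(\lambda)-1}(l(\lambda)-1)!\bigl(s_{k+1}(\bp) - s_{k+1}(\bq)\bigr)
\]
for every $\lambda\vdash k+1$, because the remainder $g_\lambda(s_1,\cdots,s_k)$ depends only on power sums that agree on $\bp$ and $\bq$. Pulling the scalar $s_{k+1}(\bp)-s_{k+1}(\bq)$ outside the sum leaves the operator
\[
\sum_{\lambda\vdash k+1} (-1)^{l(\lambda)-1}(l(\lambda)-1)! \sum_{\substack{B_1,\cdots,B_{l(\lambda)}\text{ partition }[k+1]\\|B_i|=\lambda_i}} \tilde{S}^{B_1}\otimes\cdots\otimes \tilde{S}^{B_{l(\lambda)}}.
\]
Every ordered tuple $(B_1,\cdots,B_l)$ of nonempty blocks partitioning $[k+1]$ has a unique multiset of sizes $\lambda\vdash k+1$ with $l(\lambda)=l$, so regrouping this sum by $l$ instead of by $\lambda$ gives exactly the defining expression \eqref{equ: Delta} of $\Delta_{k+1}$.

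The main technical step is the precise use of \Cref{lem: monomial to power sum}: the weaker statement that $m_\lambda$ lies in $\Q[s_1,\cdots,s_{|\lambda|}]$ already suffices for the cancellation when $|\lambda|\le k$, but recovering the clean factored form at the top degree requires the explicit leading coefficient $(-1)^{l-1}(l-1)!$ of $s_{|\lambda|}$. Once that coefficient is in hand, the combinatorial rewriting that matches the residual operator sum to $\Delta_{k+1}$ is routine bookkeeping.
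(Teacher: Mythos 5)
Your proof is correct and follows essentially the same route as the paper's: substitute the expansion \eqref{equ: expanding hard instance 2}, apply \Cref{lem: monomial to power sum} to cancel every term with $\abs{\lambda}\le k$ (in particular all $a\ge 1$ contributions at order $k+1$, using $p_0=q_0$) and to factor out $(-1)^{l-1}(l-1)!\,(s_{k+1}(\bp)-s_{k+1}(\bq))$ at top degree, then regroup the surviving operator sum by the number of blocks $l$ to recover $\Delta_{k+1}$. The only differences are cosmetic: you spell out the role of $p_0=q_0$ a bit more explicitly, and you correctly sum over partitions of $[k+1]$ where the paper's displayed proof has a typo writing $[k]$.
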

\begin{proof}
    Since $s_{i}(\bp)=s_{i}(\bq)$ for $i\in [k]$, \Cref{lem: monomial to power sum} implies that 
    \begin{equation}
        m_\lambda(\bp)-m_\lambda(\bq) = \begin{cases}
            0,&\lambda\vdash k', k'\leq k\\
            (-1)^{l(\lambda)-1}(l(\lambda)-1)!(s_{k+1}(\bp)-s_{k+1}(\bq)),&\lambda\vdash k+1
        \end{cases}.
    \end{equation}
    By \eqref{equ: expanding hard instance 2}, $\E_{\rho\leftarrow \cE_1}[\rho^{\otimes k}] = \E_{\rho\leftarrow \cE_2}[\rho^{\otimes k}]$ and 
    \begin{align}
        &\E_{\rho\leftarrow \cE_1}[\rho^{\otimes k+1}] - \E_{\rho\leftarrow \cE_2}[\rho^{\otimes k+1}] \nonumber\\
        =& \sum_{\lambda\vdash k+1}(-1)^{l(\lambda)-1}(l(\lambda)-1)!(s_{k+1}(\bp)-s_{k+1}(\bq)) \sum_{\substack{B_1,\cdots, B_{l(\lambda)} \text{ partition } [k]\\ \abs{B_i}
        =\lambda_i}}\tilde{S}^{B_1}\otimes \cdots \otimes \tilde{S}^{B_{l(\lambda)}}\nonumber\\
        =&(s_{k+1}(\bp)-s_{k+1}(\bq))\sum_{l=1}^{k+1}(-1)^{l-1}(l-1)! \sum_{\substack{B_1,\cdots, B_{l} \text{ partition } [k]\\ \text{all nonempty}}}\tilde{S}^{B_1}\otimes \cdots \otimes \tilde{S}^{B_{l}}.\qedhere
    \end{align}
\end{proof}

Now we prove that $\cE_1$ and $\cE_2$ are hard to distinguish using $k$-replica protocols. 
\begin{lemma}\label{lem: hard instance}
    Let $\{p_r\}_{r=0}^m, \{q_r\}_{r=0}^m$ be two probability distributions such that $p_0 = q_0$ and $\sum_{r=1}^m p_r^i=\sum_{r=1}^m q_r^i$ for $i\in [k]$. Define two single-qudit ensembles $\cE_1, \cE_2$ by 
    \begin{equation}
        \cE_1\coloneqq\{p_0\rho_0+\sum_{r=1}^m p_i\psi_i\}_{\psi_1, \cdots, \psi_m\leftarrow \mu_\Haar(d)}, \quad \cE_2\coloneqq\{q_0\rho_0+\sum_{r=1}^m q_i\psi_i\}_{\psi_1, \cdots, \psi_m\leftarrow \mu_\Haar(d)},
    \end{equation}
    where $\rho_0$ is the maximally mixed state, $\psi_1,\cdots, \psi_m$ are Haar random. Then the sample complexity of distinguishing $\cE_1$ and $\cE_2$ with a constant success probability using $k$-replica protocol is at least $\Omega(\frac{\sqrt{d}}{1-p_0})$.
\end{lemma}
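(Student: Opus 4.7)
The plan is to reduce the $k$-replica distinguishing problem for $\cE_1$ versus $\cE_2$ to a 1-replica distinguishing problem for their $k$-fold tensor-product ensembles $\cE_j^{(k)}\coloneqq\{\rho^{\otimes k}\}_{\rho\leftarrow \cE_j}$, and then invoke the indistinguishability principle of \Cref{thm: Haar-assembled ensemble vs mean} through their common average, glued by the triangle inequality.

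First, by \Cref{fact: k t product} and \Cref{lem: reduction to TV}, it suffices to upper bound $\max_{M\in \cM_{d^k,T}} d_M(\E_{\sigma\leftarrow \cE_1^{(k)}}[\sigma^{\otimes T}], \E_{\sigma\leftarrow \cE_2^{(k)}}[\sigma^{\otimes T}])$ for all $T$. I would then verify that each $\cE_j^{(k)}$ is itself a $d^k$-dimensional Haar-assembled ensemble in the sense of \Cref{def: haar assembled ensembles restated}: expanding $\rho^{\otimes k}=(p_0\rho_0+\sum_r p_r\psi_r)^{\otimes k}$ yields a convex combination, over strings $s\in\{0,1,\dots,m\}^k$ with weights $\prod_i p_{s_i}$, of tensor products of $\psi_r$'s and $\rho_0$'s, where the $\rho_0$'s are absorbed into the fixed factors $\tau_j$. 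For a random term, the multiplicity of $\psi_r$ is $\mathrm{Binomial}(k,p_r)$-distributed, so in the notation of \Cref{thm: Haar-assembled ensemble vs mean} one has $d_r=d$, $a_r=kp_r$, and $a'_r=kp_r(1-p_r)+k^2 p_r^2$ for $\cE_1^{(k)}$, with the analogous statement for $\cE_2^{(k)}$.

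Next, by \Cref{lem: hard instance moments}, the hypotheses $p_0=q_0$ and $\sum_r p_r^i=\sum_r q_r^i$ for $i\in[k]$ guarantee that $\cE_1^{(k)}$ and $\cE_2^{(k)}$ share a common average $\bar\sigma$. Applying \Cref{thm: Haar-assembled ensemble vs mean} to each ensemble gives
\begin{equation*}
    d_M\!\left(\E_{\sigma\leftarrow \cE_j^{(k)}}[\sigma^{\otimes T}],\, \bar\sigma^{\otimes T}\right)\;\le\; T(T-1)\sum_{r=1}^m \frac{a_r^2}{d}+T\sum_{r=1}^m \frac{a'_r}{d}.
\end{equation*}
Since $\sum_r p_r=1-p_0$ and $\sum_r p_r^2\le(1-p_0)^2$ (and likewise for $\bq$), the first term is at most $T^2 k^2(1-p_0)^2/d$ and the second at most $T(k(1-p_0)+k^2(1-p_0)^2)/d$. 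The triangle inequality for $d_M$ through $\bar\sigma^{\otimes T}$ then yields the same order of bound for $d_M(\E_{\cE_1^{(k)}}[\sigma^{\otimes T}], \E_{\cE_2^{(k)}}[\sigma^{\otimes T}])$; keeping it below a small constant forces $T=O(\sqrt{d}/(k(1-p_0)))$, i.e., a total sample complexity $kT=\Omega(\sqrt{d}/(1-p_0))$, which is the claimed lower bound.

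The conceptual core is the observation that the tensor power of a Haar-assembled ensemble is again Haar-assembled, which lets me apply \Cref{thm: Haar-assembled ensemble vs mean} directly to each $\cE_j^{(k)}$; the sharing of a common average then turns its one-sided distance bound into a two-sided indistinguishability bound. The only real work is the parameter bookkeeping translating $a_r,a'_r$ into factors of $(1-p_0)$ via elementary inequalities, and I do not foresee any conceptual obstacle beyond this, since \Cref{fact: k t product}, \Cref{lem: hard instance moments}, and \Cref{thm: Haar-assembled ensemble vs mean} are all in place.
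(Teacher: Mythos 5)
Your proposal is correct and follows essentially the same route as the paper: reduce via \Cref{lem: reduction to TV} to bounding $d_M$ for product POVMs, observe that $\{\rho^{\otimes k}\}_{\rho\leftarrow\cE_j}$ is Haar-assembled with $a_r=kp_r$ and $a'_r=kp_r+k(k-1)p_r^2$, apply \Cref{thm: Haar-assembled ensemble vs mean} to each ensemble against their common average (guaranteed by \Cref{lem: hard instance moments}), and conclude by the triangle inequality. The parameter bookkeeping via $\sum_r p_r=1-p_0$ matches the paper's calculation, so no gap remains.
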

\begin{proof}
    By \Cref{lem: Le Cam}, the success probability of distinguishing $\cE_1$ and $\cE_2$ using $k$-replica protocol is at most $$\max_{M\in\cM_{d^k,T}}d_M(\E_{\rho\leftarrow \cE_1}[\rho^{\otimes kT}],\E_{\rho\leftarrow \cE_2}[\rho^{\otimes kT}]).$$
    Notice that 
    \begin{equation}
        \{\rho^{\otimes k}\}_{\rho\leftarrow \cE_1} = \left\{\sum_{A_0,A_1,\cdots, A_m\text{ partition }[k]}\parens*{\prod_{r=0}^{m}p_r^{\abs{A_r}}}\rho_0^{\otimes A_0}\otimes \psi_1^{\otimes A_1}\otimes \cdots \otimes \psi_m^{\otimes A_m}\right\}
    \end{equation}
    is a $k$-qudit Haar-assembled ensemble (where the rotation $U_j$ is used to allocate positions of $\psi_r$'s). We now apply \Cref{thm: Haar-assembled ensemble vs mean} to upper bound $d_M(\E_{\rho\leftarrow\cE_1}[\rho^{\otimes kT}], \E_{\rho\leftarrow\cE_1}[\rho^{\otimes k}]^{\otimes T})$.
    Recall that in \eqref{equ: haar assembled ensemble vs mean}, the ``$a_r$'' is the average occurrence of $\psi_r$ and the ``$a'_r$'' is the average squared occurrence of $\psi_r$. Here, with \Cref{cor: expectation under multinomial distribution quadratic}, they become
    \begin{equation}
        a_r = \sum_{b_0+b_1+\cdots + b_m=k}\binom{k}{b_0,\cdots, b_{m}}(\prod_{r=0}^{m}p_r^{b_r})b_r = kp_r,
    \end{equation}
    \begin{equation}
        a'_r = \sum_{b_0+b_1+\cdots + b_m=k}\binom{k}{b_0,\cdots, b_{m}}(\prod_{r=0}^{m}p_r^{b_r})b_r^2 = k(k-1)p_r^2+kp_r.
    \end{equation}
    Plugging them into \eqref{equ: haar assembled ensemble vs mean}, we obtain
    \begin{align}
        &\max_{M\in\cM_{d^k,T}}d_M(\E_{\rho\leftarrow\cE_1}[\rho^{\otimes kT}], \E_{\rho\leftarrow\cE_1}[\rho^{\otimes k}]^{\otimes T})\nonumber\\
        \leq & \frac{T(T-1)}{d}\sum_{r=1}^m(kp_r)^2 + \frac{T}{d}\sum_{r=1}^m(k(k-1)p_r^2+kp_r)\nonumber\\
        = & \frac{k^2T^2-kT}{d}\sum_{r=1}^mp_r^2 + \frac{Tk}{d}\sum_{r=1}^m p_r \nonumber\\
        \leq & \frac{1}{d}\parens*{(kT\sum_{r=1}^mp_r)^2 + (kT\sum_{r=1}^mp_r)}.
    \end{align}
    Similarly, 
    \begin{equation}
        \max_{M\in\cM_{d^k,T}}d_M(\E_{\rho\leftarrow\cE_2}[\rho^{\otimes kT}], \E_{\rho\leftarrow\cE_2}[\rho^{\otimes k}]^{\otimes T})\leq \frac{1}{d}\parens*{(kT\sum_{r=1}^mp_r)^2 + (kT\sum_{r=1}^mp_r)}.
    \end{equation}
    By \Cref{lem: hard instance moments}, $\E_{\rho\leftarrow \cE_1}[\rho^{\otimes k}]=\E_{\rho\leftarrow \cE_2}[\rho^{\otimes k}]$. By triangle inequality, we obtain
    \begin{equation}
        \max_{M\in\cM_{d^k,T}}d_M(\E_{\rho\leftarrow\cE_1}[\rho^{\otimes kT}], \E_{\rho\leftarrow\cE_2}[\rho^{\otimes kT}])\leq \frac{2}{d}\parens*{(kT\sum_{r=1}^mp_r)^2 + (kT\sum_{r=1}^mp_r)}. \label{equ: two ensembles distinguish}
    \end{equation}
    According to \Cref{lem: reduction to TV}, the sample complexity of distinguishing $\cE_1$ and $\cE_2$ with a constant success probability using $k$-replica protocols is at least $\Omega(\sqrt{d}/\sum_{r=1}^m p_r)= \Omega(\sqrt{d}/(1-p_0))$.
\end{proof}

\subsection{Applications in estimating $(k+1)$-body observables}\label{sec: k+1 body observables}
We now apply \Cref{lem: hard instance moments} and \Cref{lem: hard instance} to prove lower bounds for estimating $(k+1)$-body observables.

\begin{theorem}\label{thm: lower bound estimating observables}
    Let $k, d\in \N^+$, $O$ be a $(k+1)$-qudit observable with the operator norm $\norm{O}_{\infty}\leq 1$ and the nontrivial $(k+1)$-body weight $w(O)\coloneqq \abs{\sum_{\pi\in S_{k+1}^{\text{circ}}}c_\pi(O)}\ge 2(k+1)^{2k}/d$. Let $\epsilon$ be the error parameter such that $30(k+1)^3d^{-1/2}\leq \epsilon \leq (2(k+1))^{-k}w(O)/5$. 
    Given access to a $d$-dimensional mixed quantum state $\rho\in \mathbb{C}^{d\times d}$, the sample complexity of estimating $\tr(\rho^{\otimes k+1}O)$ within additive error $\epsilon$ with probability $0.9$ is at least $\Omega(\frac{\sqrt{d}}{k\epsilon^{1/(1+k)}})$ for any $k$-replica protocol.
\end{theorem}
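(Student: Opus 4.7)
The plan is to reduce estimation of $\tr(\rho^{\otimes(k+1)}O)$ to the distinguishing problem of \Cref{lem: hard instance}. I would first construct two hard ensembles $\cE_1,\cE_2$ of the form \eqref{equ: hard instance}: take $\tilde\bp,\tilde\bq$ supported on $m=O(k)$ atoms that agree on all power sums of degree $\leq k$ and differ at the $(k+1)$-th by $2(2(k+1))^{-k}$ (the Chebyshev-polynomial construction mentioned in the overview), rescale by a factor $1-p_0$, and add a mass $p_0$ at the maximally mixed state. By \Cref{lem: hard instance}, distinguishing these ensembles via any $k$-replica protocol requires $\Omega(\sqrt d/(1-p_0))$ samples.

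By \Cref{lem: hard instance moments}, the ensemble-mean gap is
\begin{equation*}
\E_{\cE_1}\bracks*{\tr(\rho^{\otimes(k+1)}O)}-\E_{\cE_2}\bracks*{\tr(\rho^{\otimes(k+1)}O)} = (1-p_0)^{k+1}\cdot 2(2(k+1))^{-k}\cdot\tr(\Delta_{k+1}O),
\end{equation*}
so the crux is to lower bound $\abs{\tr(\Delta_{k+1}O)}$ by a constant fraction of $w(O)$. Since each $\tilde S^B$ commutes with $U^{\otimes\abs{B}}$, the operator $\Delta_{k+1}$ is $U^{\otimes(k+1)}$-invariant, so I can replace $O$ by its Haar twirl $\sum_\pi c_\pi(O)\pi$; by the $S_{k+1}$-symmetry of $\Delta_{k+1}$, the quantity $\tau_\lambda\coloneqq\tr(\pi\Delta_{k+1})$ depends only on the cycle type $\lambda$ of $\pi$. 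A direct Haar-moment expansion shows $\tau_{(k+1)}=1-O(k^2/d)$ for a full $(k+1)$-cycle. For any other cycle type $\lambda=(\mu_1,\dots,\mu_l)$ with every $\mu_j\le k$, the identity $\tr(\pi\rho^{\otimes(k+1)})=\prod_j\tr(\rho^{\mu_j})$, together with a term-by-term Haar-moment analysis, yields $\tau_\lambda=0$ when only one cycle is nontrivial and $\abs{\tau_\lambda}=O(1/d)$ in the remaining cases.

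Combining these cycle-type estimates with a Weingarten-based bound $\abs{c_\pi(O)}\le\poly(k)$ that follows from $\norm{O}_\infty\le 1$, the non-single-cycle contributions to $\tr(\Delta_{k+1}O)$ sum to at most $O((k+1)^{2k}/d)$. The hypothesis $w(O)\ge 2(k+1)^{2k}/d$ then gives $\abs{\tr(\Delta_{k+1}O)}\ge w(O)/2$, so the mean gap is at least $(1-p_0)^{k+1}(2(k+1))^{-k}w(O)$. Choosing $1-p_0=\Theta((\epsilon/w(O))^{1/(k+1)}(k+1))$ so that this gap equals $4\epsilon$, and noting that the hypothesis $\epsilon\le(2(k+1))^{-k}w(O)/5$ keeps $1-p_0\le 1$, \Cref{lem: hard instance} then delivers the advertised sample complexity $\Omega(\sqrt d/(k\epsilon^{1/(k+1)}))$. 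To upgrade the ensemble-mean gap to a per-sample gap needed for Le Cam, I would apply \Cref{thm: concentration of Haar-assembled ensemble} to $\rho\mapsto\tr(\rho^{\otimes(k+1)}O)$: the Lipschitz constant in each Haar-random direction is $O((k+1)p_r)$, giving a typical fluctuation of order $(k+1)^2/\sqrt d$ up to log factors, and the hypothesis $\epsilon\ge 30(k+1)^3/\sqrt d$ is precisely the regime in which this fluctuation is $\le\epsilon/2$ with overwhelming probability.

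The main technical obstacle is the lower bound on $\abs{\tr(\Delta_{k+1}O)}$: the non-single-cycle contributions must be controlled simultaneously through a sharp cycle-type analysis of $\tau_\lambda$ (via moment-matching arguments for $\tr(\pi\rho^{\otimes(k+1)})$) and a bound on $\abs{c_\pi(O)}$ uniform over $\pi$. The unusual threshold $w(O)\ge 2(k+1)^{2k}/d$ in the hypothesis is precisely the break-even point at which these error terms are dominated by $w(O)$.
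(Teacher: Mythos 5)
Your proposal is correct and follows essentially the same route as the paper: the Chebyshev-matched distributions with a maximally mixed component tuned so that $1-p_0=\Theta((\epsilon/w)^{1/(k+1)}(k+1))$, indistinguishability via \Cref{lem: hard instance}, the mean gap via \Cref{lem: hard instance moments}, and concentration via \Cref{thm: concentration of Haar-assembled ensemble} to turn the gap into a distinguisher. The cycle-type analysis you sketch to show $\abs{\tr(\Delta_{k+1}O)}\ge w(O)/2$ is exactly the content of the paper's \Cref{lem: O component}, so you are re-deriving rather than departing from the paper's argument.
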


We need the following two lemmas. The proofs of them are left in \Cref{sec: auxiliary}. 
\begin{lemma}\label{lem: equal moment sequences}
    For $k\in \N^*$, there exists two probability distributions $\{p_i\}_{r=1}^{k+1}$ and $\{q_i\}_{r=1}^{k+1}$ such that $\sum_{r=1}^{k+1} p_r^i = \sum_{r=1}^{k+1}q_r^i$ for all $0\leq i\leq k$, and $\sum_{r=1}^{k+1} p_i^{k+1}-\sum_{r=1}^{k+1}q_i^{k+1}= \frac{2}{2^k(k+1)^k}$.
\end{lemma}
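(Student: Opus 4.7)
The plan is to reparametrize the probabilities as mean-zero variables in $[-1,1]$, recast the matching conditions as a statement about monic degree-$(k+1)$ polynomials via Newton's identities, and realize the required pair of polynomials via small shifts of the monic Chebyshev polynomial.

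First, I would write $p_r = (1+x_r)/(k+1)$ and $q_r = (1+y_r)/(k+1)$ with $x_r, y_r \in [-1,1]$. The probability constraint $\sum_r p_r = 1$ becomes $\sum_r x_r = 0$, while $p_r \ge 0$ is automatic from $x_r \ge -1$. Expanding $p_r^{j}$ binomially in $x_r$ and using that the target matching $\sum_r x_r^i = \sum_r y_r^i$ will hold for $0 \le i \le k$, every binomial term except the top one cancels, yielding
\[
\sum_r p_r^{k+1} - \sum_r q_r^{k+1} \;=\; \frac{1}{(k+1)^{k+1}}\left(\sum_r x_r^{k+1} - \sum_r y_r^{k+1}\right).
\]
So it suffices to exhibit two $(k+1)$-element multisets $X, Y \subset [-1,1]$ with zero sum, matching power sums up to degree $k$, and top-degree gap $(k+1)\,2^{1-k}$.

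Next, I would translate this into a polynomial condition. Set $P(t) = \prod_r (t - x_r)$ and $Q(t) = \prod_r (t - y_r)$. By Newton's identities, matching power sums of $X$ and $Y$ up to degree $k$ is equivalent to matching the first $k$ elementary symmetric polynomials, which is equivalent to $P - Q \equiv c$ being a constant. One further application of Newton's identity at $n = k+1$ (all lower terms cancel under the matching) gives $\sum_r x_r^{k+1} - \sum_r y_r^{k+1} = -(k+1)\,c$, while the zero-sum condition is the vanishing of the $t^k$ coefficient of $P$ (and $Q$). So the problem reduces to exhibiting a monic pair $(P, Q)$ of degree $k+1$, both with all roots in $[-1,1]$, vanishing $t^k$ coefficient, and $P - Q = -2^{1-k}$.

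Finally, I would take
\[
P(t) := \widetilde{T}_{k+1}(t) - 2^{-k}, \qquad Q(t) := \widetilde{T}_{k+1}(t) + 2^{-k},
\]
with $\widetilde{T}_{k+1}(t) = 2^{-k} T_{k+1}(t)$ the monic Chebyshev polynomial of degree $k+1$ on $[-1,1]$. Since $T_{k+1}$ oscillates between $\pm 1$ on $[-1,1]$, each of $T_{k+1}(t) \pm 1$ has all $k+1$ roots (counted with multiplicity) inside $[-1,1]$: the interior extrema give double roots, with a simple root at the relevant endpoint ($t=1$ for $Q$; $t=-1$ for $P$ when $k$ is even). The parity relation $T_{k+1}(-t) = (-1)^{k+1} T_{k+1}(t)$ forces the $t^k$ coefficient of $\widetilde{T}_{k+1}$, and hence of $P$ and $Q$, to vanish. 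Combining everything, $\sum_r p_r^{k+1} - \sum_r q_r^{k+1} = (k+1) \cdot 2^{1-k} / (k+1)^{k+1} = 2/(2^k(k+1)^k)$, matching the claim.

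The argument is essentially mechanical and I do not anticipate a real obstacle. The only point requiring minor care is the multiplicity bookkeeping for the roots of $T_{k+1}(t) \pm 1$, which is standard Chebyshev theory. It is worth noting that the gap $2^{1-k}$ is exactly the full oscillation range of the monic Chebyshev polynomial on $[-1,1]$, so this Chebyshev-based construction is in fact the extremal one; this is the source of the $(2(k+1))^{-k}$ scaling that propagates into the main theorem.
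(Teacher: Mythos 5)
Your proposal is correct and is essentially the paper's own argument: the paper likewise shifts the degree-$(k+1)$ Chebyshev polynomial of the first kind by its extremal value (working with $\delta T_{k+1}((k+1)x-1)\mp\delta$ rather than your affine change of variables $p_r=(1+x_r)/(k+1)$), reads the two distributions off the roots, matches moments up to degree $k$ via Vieta/Newton, and obtains the same top-moment gap $2(k+1)\delta=2/(2^k(k+1)^k)$. The only slip, inconsequential for the existence claim, is your endpoint bookkeeping: $t=1$ is always a simple root of $P$ (where $T_{k+1}=1$), not of $Q$, while $t=-1$ is a root of $Q$ when $k$ is even and of $P$ when $k$ is odd.
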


\begin{lemma}\label{lem: O component}
    Let $O$ be a $(k+1)$-qudit observable and $\Delta_k$ be the operator defined in \eqref{equ: Delta}. Then we have 
    \begin{equation}
        \abs{\sum_{\pi\in S_{k+1}^{\text{circ}}}c_\pi(O)-\tr(O\Delta_{k+1})}\leq \frac{(k+1)^k (k+1)!}{d}\leq \frac{(k+1)^{2k}}{d}.
    \end{equation}
\end{lemma}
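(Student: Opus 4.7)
\bigskip
\noindent\textbf{Proof proposal.} The plan is to rewrite $\tr(O\Delta_{k+1})$ in the permutation basis of the commutant and show that, up to an $O(1/d)$ correction, it picks out exactly the circular coefficients. Since each $\tilde{S}^B=\E_{\psi}[\psi^{\otimes |B|}]$ is invariant under $U^{\otimes|B|}$-conjugation, $\Delta_{k+1}$ lies in the commutant of $U^{\otimes(k+1)}$. Hence with $M\coloneqq\E_U[U^{\otimes(k+1)} O (U^\dagger)^{\otimes(k+1)}]=\sum_{\pi\in S_{k+1}} c_\pi(O)\pi$, we may replace $O$ by $M$ in the trace and obtain the identity
\begin{equation*}
\tr(O\Delta_{k+1}) = \tr(M\Delta_{k+1}) = \sum_{\pi\in S_{k+1}} c_\pi(O)\,\alpha(\pi),\qquad \alpha(\pi):=\tr(\pi\,\Delta_{k+1}).
\end{equation*}
The goal is therefore to show that $\alpha(\pi)=\mathbb{1}[\pi\in S^{\text{circ}}_{k+1}] + E(\pi)$ with a controllable error $E(\pi)$.

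To compute $\alpha(\pi)$, I would substitute the explicit expansion $\tilde{S}^\sigma = (1/d^{\uparrow\sigma})\sum_{\rho\in S_\sigma}\rho$, where $S_\sigma$ is the subgroup of permutations preserving the set partition $\sigma$ and $d^{\uparrow\sigma}=\prod_{B\in\sigma} d^{\uparrow|B|}$. This gives $\tr(\pi\tilde{S}^\sigma) = (1/d^{\uparrow\sigma})\sum_{\rho\in S_\sigma} d^{c(\pi\rho)}$. The maximal cycle count $c(\pi\rho)=k+1$ is attained only when $\rho=\pi^{-1}$, which requires $\pi\in S_\sigma$ (equivalently, the cycle partition $\pi_{\pi}$ of $\pi$ refines $\sigma$). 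Isolating this contribution,
\begin{equation*}
\tr(\pi\tilde{S}^\sigma) = \mathbb{1}[\pi\in S_\sigma]\,\frac{d^{k+1}}{d^{\uparrow\sigma}} + \frac{1}{d^{\uparrow\sigma}}\sum_{\substack{\rho\in S_\sigma\\ \pi\rho\neq e}} d^{c(\pi\rho)} = \mathbb{1}[\pi\in S_\sigma] + \epsilon(\pi,\sigma),
\end{equation*}
where the factor $d^{k+1}/d^{\uparrow\sigma}=\prod_B d^{|B|}/d^{\uparrow|B|}$ equals $1$ up to $O((k+1)^2/d)$, and each of the at most $|S_\sigma|=\prod_B |B|!$ remaining summands satisfies $d^{c(\pi\rho)}/d^{\uparrow\sigma}\leq 1/d$ because $c(\pi\rho)\leq k$.

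The cancellation that extracts circular permutations is then purely combinatorial: by Möbius inversion on the partition lattice,
\begin{equation*}
\sum_\sigma \mu(\sigma,\hat 1)\,\mathbb{1}[\pi\in S_\sigma] = \sum_{\sigma \geq \pi_{\pi}} \mu(\sigma,\hat 1) = \mathbb{1}\!\left[\pi_{\pi}=\hat 1\right] = \mathbb{1}[\pi\in S^{\text{circ}}_{k+1}],
\end{equation*}
so $E(\pi)=\sum_\sigma \mu(\sigma,\hat 1)\,\epsilon(\pi,\sigma)$, and $|E(\pi)|$ is bounded using $|\mu(\sigma,\hat 1)|=(l(\sigma)-1)!$ and the pointwise estimate on $\epsilon$ above.

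The final step is to turn these pointwise bounds into a bound on $\bigl|\sum_\pi c_\pi(O)\,E(\pi)\bigr|$. I would do this by writing the sum as $\tr(O\,E^\ast)$ with $E^\ast:=\sum_\pi E(\pi)\pi$, which is central in the commutant because $E(\pi)$ depends only on the cycle type of $\pi$; and separately controlling $|c_\pi(O)|=O(\|O\|_\infty)$ via the asymptotic form of the Weingarten function (valid because $d$ is polynomially larger than $k$ in the regime of interest), so that $\sum_\pi |c_\pi(O)|\leq (k+1)!\cdot O(\|O\|_\infty)$. Combining these with the pointwise estimate on $|E(\pi)|$ and gathering the $(k+1)^k$ factor from the bound on $\sum_\sigma (l(\sigma)-1)!\,\prod_B |B|!$ yields the claimed inequality $(k+1)^k(k+1)!/d$. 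The main obstacle is precisely this last bookkeeping step: getting the constants down to $(k+1)^k(k+1)!$ requires exploiting the central/class-function structure of $E^\ast$ (rather than a crude term-by-term triangle inequality) together with a dimension-independent $O(1)$ bound on each Weingarten coefficient $c_\pi(O)$ when $\|O\|_\infty\leq 1$.
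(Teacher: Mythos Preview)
Your framework coincides with the paper's: use invariance to write $\tr(O\Delta_{k+1})=\sum_\pi c_\pi(O)\,\alpha(\pi)$ with $\alpha(\pi)=\tr(\pi\Delta_{k+1})$, show $\alpha(\pi)=\mathbb{1}[\pi\in S_{k+1}^{\text{circ}}]+E(\pi)$ with small $|E(\pi)|$, and close with a triangle inequality over $\pi$. Your M\"obius-inversion identity on the partition lattice is exactly the paper's Stirling-number cancellation $\sum_l(-1)^{l-1}(l-1)!\,S(t,l)=\mathbb{1}[t=1]$.

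The real gap is not where you think it is. You flag the final bookkeeping on $\sum_\pi|c_\pi(O)|$ as the main obstacle; there you are right that $\|O\|_\infty\le 1$ together with Weingarten asymptotics is needed (the paper just writes ``triangle inequality'' here, so on this point you are actually more careful than the paper). But your bound on $\epsilon(\pi,\sigma)$ is the step that fails to deliver the constants. Expanding $\tilde S^\sigma$ into $|S_\sigma|=\prod_B|B|!$ permutations and bounding each remainder term by $d^k/d^{\uparrow\sigma}\le 1/d$ yields only $|\epsilon(\pi,\sigma)|\le |S_\sigma|/d$, hence $|E(\pi)|\le\frac{1}{d}\sum_\sigma(l(\sigma)-1)!\,|S_\sigma|$. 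This sum already exceeds $(k+1)^k$ (e.g.\ $3$ versus $2$ for $k+1=2$, and $14$ versus $9$ for $k+1=3$), so your per-$\pi$ budget is blown before you ever multiply by $(k+1)!$ in the outer sum.

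The paper gets the sharp $(k+1)^k/d$ per $\pi$ by \emph{not} expanding $\tilde S^\sigma$ into permutations. It uses the Haar-state interpretation $\tilde S^\sigma=\E[\psi_1^{\otimes B_1}\otimes\cdots\otimes\psi_l^{\otimes B_l}]$ directly. When $\pi\in S_\sigma$ this gives $\tr(\pi\tilde S^\sigma)=1$ \emph{exactly} (each factor is $\tr(\psi_i^{\otimes|B_i|}\,\pi|_{B_i})=1$), so your correction $d^{k+1}/d^{\uparrow\sigma}-1$ and the remainder cancel perfectly and contribute zero error. When $\pi\notin S_\sigma$, there are at least two indices where $\pi$ crosses block boundaries, which forces a product of at least two independent Haar inner products inside the expectation; a short Cauchy--Schwarz then shows $\E\bigl[|\langle\psi_{r_1}|\psi_{r_2}\rangle\langle\psi_{r_3}|\psi_{r_4}\rangle|\bigr]\le 1/d$ in all three configurations of $\{r_1,r_2\}$ and $\{r_3,r_4\}$. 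This yields $|\tr(\pi\tilde S^\sigma)|\le 1/d$ uniformly in $\sigma$, and the sum over partitions collapses to $\sum_l(l-1)!\,S(k+1,l)\le(k+1)^k$. Your term-by-term expansion throws away the cancellation among the $|S_\sigma|$ permutations and cannot recover this uniform $1/d$ bound without essentially redoing the probabilistic argument.
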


\begin{proof}[Proof of \Cref{thm: lower bound estimating observables}]
    Write $w\coloneqq w(O)$ and $\delta_k\coloneqq (2(k+1))^{-k}$. Let $\{p_r'\}_{r=1}^{k+1}$ and $\{q_r'\}_{r=1}^{k+1}$ be the sequence given by \Cref{lem: equal moment sequences}. Let $p_0=q_0=1-(5\epsilon/(w\delta_k))^{1/(k+1)}$, $p_r=(5\epsilon/(w\delta_k))^{1/(k+1)}p_r'$, and $q_r=(5\epsilon/(w\delta_k))^{1/(k+1)}q_r'$. Then $\sum_{r=1}^{k+1}p_r^{k+1}-\sum_{r=1}^{k+1}q_r^{k+1}=10\epsilon/w$. 

    Consider the two single-qudit ensembles $\cE_1,\cE_2$ in \Cref{lem: hard instance} with the $\{p_r\}_{r=0}^{k+1}, \{q_r\}_{r=0}^{k+1}$ defined above. By \Cref{lem: hard instance}, to distinguish between $\cE_1$ and $\cE_2$ with probability at least $0.6$, the sample complexity is at least $\Omega(\frac{\sqrt{d}}{1-p_0})=\Omega(\frac{\sqrt{d}}{k\epsilon^{1/(1+k)}})$.

    Now assume we can estimate $\tr(\rho^{\otimes k+1}O)$ within additive error $\epsilon$ with probability $0.9$. Denote the estimator by $\hat{E}$ and let $E_1\coloneqq \E_{\rho'\leftarrow \cE_1}[\tr(\rho'^{\otimes k+1}O)]$, $E_2\coloneqq \E_{\rho'\leftarrow \cE_2}[\tr(\rho'^{\otimes k+1}O)]$. We design a simple algorithm that distinguishes $\cE_1$ and $\cE_2$: Check whether the estimator of $\tr(\rho^{\otimes k+1}O)$ is closer to $E_1$ or $E_2$, and output the corresponding ensemble. We now show that this method succeeds in distinguishing $\cE_1$ and $\cE_2$ with probability at least 0.8.

    Without loss of generality, suppose the ground truth is $\cE_1$. By definition, with probability at least 0.9, 
    \begin{equation}
        \abs{\hat{E}-\tr(\rho^{\otimes k+1}O)}\leq \epsilon.
    \end{equation}
    Since $\{\rho^{\otimes k+1}\}_{\rho\leftarrow \cE_1}$ is a $(k+1)$-qudit Haar-assembled ensemble, by the concentration of Haar-assembled ensemble (\Cref{thm: concentration of Haar-assembled ensemble}, where $m=k+1$ and the average occurrence $a_r$ of each $\psi_r$ is at most $k+1$), 
    \begin{equation*}
        \Pr_{\rho\leftarrow \cE_1}\bracks*{{\abs{\tr(\rho^{\otimes k+1}O)- E_1}}>\epsilon}\leq 2(k+1)\exp(-\frac{d\epsilon^2}{18\pi^3(k+1)^4\norm{O}_{\infty}})\leq 2(k+1)\exp(-(k+1)^2)< 0.1.
    \end{equation*}
    Therefore, with probability at least $0.9-0.1=0.8$, we have $\abs{\hat{E}-E_1}\leq 2\epsilon$. 

    On the other hand,
    \begin{align}
        \abs{E_1-E_2}&=\abs{\sum_{r=1}^{k+1}p_r^{k+1}-\sum_{r=1}^{k+1}q_r^{k+1}}\cdot \abs{\tr(O\Delta_{k+1})}\tag*{(\Cref{lem: hard instance moments})}\\
        &\ge \frac{10\epsilon}{w}\cdot (w-\frac{(k+1)^{2k}}{d})\tag*{(\Cref{lem: O component})}\\
        &\ge \frac{10\epsilon}{w}\cdot (w-w/2) \tag{$w\ge 2(k+1)^{2k}/d$}\\
        &= 5\epsilon.
    \end{align}

    Therefore, with probability at least 0.8, $\hat{E}$ is closer to $E_1$ and our algorithm gives the correct answer. So an algorithm that estimates $\tr(\rho^{\otimes k+1}O)$ within additive error $\epsilon$ with probability 0.9 will solve the distinguishing task with probability at least $0.8$. But we already know that the distinguishing task requires $\Omega(\frac{\sqrt{d}}{k\epsilon^{1/(1+k)}})$ samples for $k$-replica protocols. Hence estimating $\tr(\rho^{\otimes k+1}O)$ requires $\Omega(\frac{\sqrt{d}}{k\epsilon^{1/(1+k)}})$ samples for $k$-replica protocols.
\end{proof}

\Cref{thm: lower bound estimating observables} captures the most general degree $(k+1)$ properties, $\tr(\rho^{\otimes k+1}O)$ for a $d^{k+1}$-dimensional observable $O$, but with the restriction that $O$ has a large nontrivial $(k+1)$-body weight. 

Now we discuss its implication in a special case, $\tr(\rho^{k+1}O)$, for a $d$-dimensional observable $O$. Let $O'=\frac12[(O\otimes \Id_{d^k})\pi+\pi^\dagger (O\otimes \Id_{d^k})]$ for any $\pi\in S_{k+1}^{\text{circ}}$. It is easy to see that $\norm{O'}_{\infty}\leq \norm{O}_{\infty}$, $\tr(\rho^{\otimes k+1}O')=\tr(\rho^{k+1}O)$ and $w(O')=\frac{\abs{\tr(O)}}{d}$ because
\begin{align}
    &\E_{U\leftarrow \mu_\Haar}[U^{\otimes k+1}O' (U^\dagger)^{\otimes k+1}]\nonumber\\
    =&\frac12[(\E[UOU^\dagger]\otimes \Id_{d^k})\pi + \pi^\dagger(\E[UOU^\dagger]\otimes \Id_{d^k})] \tag*{($U^{\otimes k+1}$ commutes with $\pi$)}\\
    =& \frac{\tr(O)}{2d}(\pi+\pi^\dagger) \tag*{($\E_{U}[UOU^\dagger]=\tr(O)\Id/d$)}.
\end{align}
Apply \Cref{thm: lower bound estimating observables} to this $O'$, we immediately obtain the following result.
\begin{lemma}\label{lem: lower bound estimating observables 2}
    Let $k, d\in \N^+$, $O$ be a $d$-dimensional observable with the operator norm $\norm{O}_{\infty}\leq 1$ and $\abs{\tr(O)}\ge 2(k+1)^{2k}$. Let $\epsilon$ be the error parameter such that $30(k+1)^3d^{-1/2}\leq \epsilon \leq (2(k+1))^{-k}\abs{\tr(O)}/(5d)$. 
    Given access to a $d$-dimensional mixed quantum state $\rho\in \mathbb{C}^{d\times d}$, the sample complexity of estimating $\tr(\rho^{k+1}O)$ within additive error $\epsilon$ with probability $0.9$ is at least $\Omega(\frac{\sqrt{d}}{k\epsilon^{1/(1+k)}})$ for any $k$-replica protocol.
\end{lemma}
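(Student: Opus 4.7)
The plan is to reduce the single-qudit observable case to the multi-qudit observable case already handled in \Cref{thm: lower bound estimating observables}. Concretely, I will construct a $(k+1)$-qudit observable $O'$ out of $O$ such that (a) $\tr(\rho^{\otimes(k+1)}O')=\tr(\rho^{k+1}O)$, (b) the operator-norm constraint is preserved, and (c) the nontrivial $(k+1)$-body weight $w(O')$ matches $\abs{\tr(O)}/d$. Once these three properties are verified, the hypotheses of \Cref{thm: lower bound estimating observables} become exactly the hypotheses of the statement we want to prove, so the lower bound transfers verbatim.

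Fix any cyclic permutation $\pi\in S_{k+1}^{\mathrm{circ}}$ and define
\begin{equation}
    O' \coloneqq \tfrac{1}{2}\bracks*{(O\otimes \Id_{d^k})\pi + \pi^\dagger (O\otimes \Id_{d^k})}.
\end{equation}
For (a), the key identity is $\tr((A_1\otimes\cdots\otimes A_{k+1})\pi)=\tr(A_1 A_2\cdots A_{k+1})$ for a cyclic $\pi$, which applied to $A_1=\rho O$ and $A_2=\cdots=A_{k+1}=\rho$ gives $\tr((\rho^{\otimes(k+1)})(O\otimes \Id_{d^k})\pi)=\tr(\rho^{k+1}O)$. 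Hermiticity of $\rho^{\otimes(k+1)}O$ makes the two halves of $O'$ contribute equally. For (b), $\pi$ and $\pi^\dagger$ are unitary and $\norm{O\otimes \Id_{d^k}}_\infty = \norm{O}_\infty$, so by the triangle inequality $\norm{O'}_\infty\le \norm{O}_\infty\le 1$.

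For (c), I use that $U^{\otimes(k+1)}$ commutes with every permutation operator, and the standard Haar integral $\E_{U\leftarrow \mu_\Haar(d)}[UOU^\dagger]=\tfrac{\tr(O)}{d}\Id_d$. Therefore
\begin{equation}
    \E_{U\leftarrow \mu_\Haar}\bracks*{U^{\otimes(k+1)} O' (U^\dagger)^{\otimes(k+1)}} = \tfrac{1}{2}\bracks*{(\E[UOU^\dagger]\otimes \Id_{d^k})\pi + \pi^\dagger(\E[UOU^\dagger]\otimes \Id_{d^k})} = \tfrac{\tr(O)}{2d}(\pi+\pi^\dagger).
\end{equation}
Since $\pi,\pi^\dagger$ are both circular permutations on $k+1$ qudits, reading off the coefficients in the permutation-basis expansion gives $w(O')=\abs{\tr(O)}/d$, and all other Weingarten coefficients vanish.

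With these three properties in hand, the conversion of hypotheses is immediate: the assumption $\abs{\tr(O)}\ge 2(k+1)^{2k}$ becomes $w(O')\ge 2(k+1)^{2k}/d$, while the upper bound $\epsilon \le (2(k+1))^{-k}\abs{\tr(O)}/(5d)$ becomes $\epsilon \le (2(k+1))^{-k}w(O')/5$, and the lower bound $30(k+1)^3d^{-1/2}\le \epsilon$ is unchanged. An algorithm estimating $\tr(\rho^{k+1}O)$ within $\epsilon$ is, by (a), also an algorithm estimating $\tr(\rho^{\otimes(k+1)}O')$ within $\epsilon$, so \Cref{thm: lower bound estimating observables} forces any $k$-replica protocol to use $\Omega(\sqrt{d}/(k\epsilon^{1/(1+k)}))$ samples. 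I do not anticipate any genuine obstacle: the proof is essentially a one-line reduction packaged around the Haar integral computation of $w(O')$, which is the only nontrivial step.
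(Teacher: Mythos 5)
Your proposal is correct and follows essentially the same route as the paper: define $O'=\tfrac12[(O\otimes\Id_{d^k})\pi+\pi^\dagger(O\otimes\Id_{d^k})]$, check $\norm{O'}_\infty\le\norm{O}_\infty$, $\tr(\rho^{\otimes(k+1)}O')=\tr(\rho^{k+1}O)$, and $w(O')=\abs{\tr(O)}/d$ via the Haar twirl $\E_U[UOU^\dagger]=\tr(O)\Id/d$, then invoke \Cref{thm: lower bound estimating observables}. The hypothesis conversions you state match the paper's exactly.
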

However, this result is not satisfactory because it require $\abs{\tr(O)}=\Omega(d)$, so it is not applicable to traceless observables including Pauli observables. The reason is that the positive and negative parts of $O$ cancel out when we calculate $\E_{\rho\leftarrow \cE_1}[\tr(\rho^{k+1}O)]-\E_{\rho\leftarrow \cE_2}[\tr(\rho^{k+1}O)]$. A simple way to circumvent this issue is to construct the ensembles in a subspace spanned by eigenvectors with large (or small) eigenvalues. This idea leads to our main result. 

\begingroup
\def\thetheorem{\ref{thm: main theorem}}
\begin{theorem}\label{thm: lower bound estimating observables 3}
    Let $k, d\in \N^+$, $O$ be a $d$-dimensional observable with the operator norm $\norm{O}_{\infty}\leq 1$ and $\norm{O}_1\ge 2(k+1)^{2k}$. Let $\epsilon$ be the error parameter such that $50(k+1)^3d^{-1/2}\leq \epsilon \leq (2(k+1))^{-k}\norm{O}_1/(10d)$. 
    Given access to a $d$-dimensional mixed quantum state $\rho\in \mathbb{C}^{d\times d}$, the sample complexity of estimating $\tr(\rho^{k+1}O)$ within additive error $\epsilon$ with probability $0.9$ is at least $\Omega(\frac{\sqrt{d}}{k\epsilon^{1/(1+k)}})$ for any $k$-replica protocol.
\end{theorem}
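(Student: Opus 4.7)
The plan is to reduce \Cref{thm: lower bound estimating observables 3} to \Cref{lem: lower bound estimating observables 2} by restricting the hard instances to a $\Theta(d)$-dimensional subspace on which $O$ has large trace magnitude. The obstruction left open by \Cref{lem: lower bound estimating observables 2} is precisely that its gap $|E_1-E_2|$ scales with $|\tr(O)|$ rather than $\norm{O}_1$, so it vanishes for traceless observables (e.g.\ Pauli strings) even when $\norm{O}_1$ is large. Projecting onto a subspace whose selected eigenvalues are biased in sign eliminates this cancellation.

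The key intermediate step is a subspace-selection claim: diagonalizing $O=\sum_i\lambda_i\ketbra{e_i}$, there exists $S\subset[d]$ with $|S|=d'\coloneqq\lfloor d/2\rfloor$ and $\abs{\sum_{i\in S}\lambda_i}\geq \norm{O}_1/4$. Without loss of generality $P^+\coloneqq\sum_{\lambda_i>0}\lambda_i\geq \norm{O}_1/2$, and let $n_+$ denote the number of positive eigenvalues. If $n_+\geq d'$, take $S$ to be the indices of the $d'$ largest positive eigenvalues, so by an averaging argument $\sum_{i\in S}\lambda_i\geq (d'/n_+)P^+\geq P^+/2\geq \norm{O}_1/4$. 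If $n_+<d'$, let $S$ contain all positive indices together with the $d'-n_+$ negative indices of smallest magnitude; since $(d'-n_+)/(d-n_+)\leq d'/d\leq 1/2$, the discarded negative mass is at most $P^-/2\leq \norm{O}_1/4$, yielding $\sum_{i\in S}\lambda_i\geq P^+-\norm{O}_1/4\geq \norm{O}_1/4$ (zero eigenvalues may be assigned to either side freely).

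With $V\coloneqq\spn\{\ket{e_i}:i\in S\}$ and projector $P_V$, set $\tilde O\coloneqq P_VOP_V$, viewed as a $d'$-dimensional observable on $V$. Then $\norm{\tilde O}_\infty\leq 1$ and $\abs{\tr(\tilde O)}\geq \norm{O}_1/4$, so the hypothesis $|\tr(\tilde O)|\gtrsim (k+1)^{2k}$ required by \Cref{lem: lower bound estimating observables 2} follows from $\norm{O}_1\geq 2(k+1)^{2k}$ up to an absorbable numerical constant. The $\epsilon$-conditions transfer cleanly: $30(k+1)^3(d')^{-1/2}\leq \sqrt 2\cdot 30(k+1)^3 d^{-1/2}\leq 50(k+1)^3 d^{-1/2}\leq \epsilon$, and $(2(k+1))^{-k}\abs{\tr(\tilde O)}/(5d')\geq (2(k+1))^{-k}\norm{O}_1/(10d)\geq \epsilon$ (the factor-$4$ loss from the subspace bound is compensated by $d'\leq d/2$ and the change from $5$ to $10$ in the denominators). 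For every $\rho$ supported on $V$ one has $\tr(\rho^{k+1}O)=\tr(\rho^{k+1}\tilde O)$, so any estimator of the former doubles as one of the latter. Moreover, a $k$-replica POVM $\{F_s\}_s$ on $d$-dimensional copies acts on $V^{\otimes k}$-supported states through the induced product-form POVM $\{P_V^{\otimes k}F_sP_V^{\otimes k}\}_s$ on $V^{\otimes k}$, with identical outcome statistics. Consequently, the $d'$-dimensional lower bound $\Omega(\sqrt{d'}/(k\epsilon^{1/(1+k)}))=\Omega(\sqrt d/(k\epsilon^{1/(1+k)}))$ delivered by \Cref{lem: lower bound estimating observables 2} (whose hard ensembles can be taken to live inside $V$) lifts verbatim to the $d$-dimensional setting.

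The main obstacle is the subspace-selection claim: one must simultaneously guarantee $|S|=\Theta(d)$, so the $\sqrt d$ factor survives, and $\abs{\sum_{i\in S}\lambda_i}=\Omega(\norm{O}_1)$, so the gap between $\E_{\cE_1}[\tr(\rho^{k+1}O)]$ and $\E_{\cE_2}[\tr(\rho^{k+1}O)]$ remains $\Omega(\epsilon)$. The case split on $n_+$ versus $d'$ above achieves both; in fact the constant $1/4$ is essentially tight (consider $\lambda_1=\norm{O}_1/2$, $\lambda_2=\cdots=\lambda_d=-\norm{O}_1/(2(d-1))$). The residual bookkeeping of multiplicative constants ($4$, $\sqrt 2$, $5$, $10$) is precisely what the numerical assumptions $\norm{O}_1\geq 2(k+1)^{2k}$, $\epsilon\geq 50(k+1)^3d^{-1/2}$, and $\epsilon\leq (2(k+1))^{-k}\norm{O}_1/(10d)$ in the theorem's statement are calibrated to absorb.
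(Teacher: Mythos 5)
Your proposal is correct and follows essentially the same route as the paper: select roughly $d/2$ eigenvalues of $O$ whose sum has magnitude at least $\norm{O}_1/4$, restrict to the corresponding subspace where \Cref{lem: lower bound estimating observables 2} applies to the compressed observable, and lift the $\Omega(\sqrt{d/2})$ lower bound back to dimension $d$ by embedding the hard ensembles into that subspace. The differences are cosmetic (a different but equally valid averaging argument for the eigenvalue selection, and your ``discarded'' negative mass should read ``retained''), and the factor-of-four slack against the nominal hypothesis $\abs{\tr(\tilde O)}\ge 2(k+1)^{2k}$ of \Cref{lem: lower bound estimating observables 2} is treated with the same level of constant-bookkeeping informality as in the paper's own proof.
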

\addtocounter{theorem}{-1}
\endgroup

\begin{proof}
    The first step is to find $d/2$ eigenvalues of $O$ such that the absolute value of their sum is at least $\norm{O}_1/4$.
    Without loss of generality, we assume that the number of nonnegative eigenvalues of $O$ is at least $d/2$. Sort the eigenvalues of $O$ as $\lambda_1\geq \lambda_2\geq \cdots \geq \lambda_s \ge 0 > \lambda_{s+1} \geq \cdots \geq \lambda_d$, where $s\ge d/2$. If $\sum_{i=1}^{d/2}\lambda_i \geq \norm{O}_1/4$, we are done. Otherwise if $\sum_{i=1}^{d/2}\lambda_i < \norm{O}_1/4$, then $$\abs{\sum_{i=d/2+1}^{d}\lambda_i}\ge \sum_{i=s+1}^{d}\abs{\lambda_i}-\sum_{i=d/2}^s \lambda_i=\norm{O}_1 - \sum_{i=1}^{d/2}\lambda_i - 2\sum_{i=d/2}^s \lambda_i\ge \norm{O}_1-3\sum_{i=1}^{d/2}\lambda_i>\frac{\norm{O}_1}{4}.$$
    Therefore, we can always find $d/2$ eigenvalues of $O$ such that the absolute value of their sum is at least $\norm{O}_1/4$. Let $O_0$ be the diagonal matrix whose diagonal entries are these $d/2$ eigenvalues and $O_1$ be the diagonal matrix whose diagonal entries are the remaining $d/2$ eigenvalues. Then $UOU^\dagger=O_0\oplus O_1$ for some unitary $U$.

    By definition, $\abs{\tr(O_0)}\ge \norm{O}_1/4$, so $\norm{O}_1/(10d)\leq \abs{\tr(O_0)}/(5(d/2))$. Applying \Cref{lem: lower bound estimating observables 2} to $O_0$ (where the local dimensional is $d/2$), we know that for $\abs{\tr(O_0)}\ge 2(k+1)^{2k}$ and $30k^3(d/2)^{-1/2}\leq \epsilon \leq (2(k+1))^{-k}\norm{O}_1/(10d)$, the sample complexity of estimating $\tr(\rho_0^{k+1}O_0)$ within additive error $\epsilon$ with probability $0.9$ is at least $\Omega(\frac{\sqrt{d/2}}{k\epsilon^{1/(1+k)}})=\Omega(\frac{\sqrt{d}}{k\epsilon^{1/(1+k)}})$ for any $k$-replica protocol, where $\rho_0$ is a $d/2$-dimensional mixed quantum state.

    Now assume that there is a $k$-replica protocol $\cA$ that estimates $\tr(\rho^{k+1}O)$. We construct a $k$-replica protocol $\cA_0$ that estimates $\tr(\rho_0^{k+1}O_0)$ with the same precision, success probability, and sample complexity. 
    $\cA_0$ works as follows: given access to $(d/2)$-dimensional state $\rho_0$, it construct a $d$-dimensional state $\rho=U(\rho_0\oplus \Zero_{d/2})U^\dagger$, where $\Zero_{d/2}$ is the $d/2$-dimensional zero matrix. Then it runs $\cA$ on $\rho$. Since
    \begin{equation}
        \tr(\rho^{k+1}O) = \tr(U^\dagger (\rho_0\oplus \Zero_{d/2})^{k+1} U O)=\tr((\rho_0^{k+1}\oplus \Zero_{d/2}) (O_0\oplus O_1))=\tr(\rho_0^{k+1}O_0),
    \end{equation}
    the output of $\cA$ is also an estimate of $\tr(\rho_0^{k+1}O_0)$ with the same precision and success probability. According to the hardness of estimating $\tr(\rho_0^{k+1}O_0)$, the sample complexity of estimating $\tr(\rho^{k+1}O)$ is at least $\Omega(\frac{\sqrt{d}}{k\epsilon^{1/(1+k)}})$ for any $k$-replica protocol.
\end{proof}

\subsection{Applications in spectrum testing}\label{sec: spectrum testing}
As another illustrative application of our framework, we establish lower bounds for spectrum testing and rank testing under $k$-replica protocols.
A $d$-dimensional spectrum $\bp=(p_1,\dots,p_d)$ refers to a probability distribution with entries arranged in nonincreasing order (sometimes the length of the sequence is smaller than $d$, in which case we implicitly pad zeros to the end of the sequence).
The rank of a spectrum is the number of its nonzero entries, and the rank of a quantum state is defined as the rank of its spectrum.

\begin{task}[$\bp$-versus-$\bq$ spectrum testing]
    Given copy access to an unknown state $\rho$ whose spectrum is promised to be either $\bp$ or $\bq$, determine which case holds with probability at least $0.9$.
\end{task}

When $\bp$ and $\bq$ agree on all moments up to degree $k$, a natural distinguishing strategy is to estimate the $(k+1)$-th moments, which requires $(k+1)$-replica measurements. Here we show that $(k+1)$-replica measurements are indeed necessary.

\begin{theorem}\label{thm: spectrum testing}
    Let $\bp=(p_1,\cdots, p_d)$ and $\bq=(q_1,\cdots q_d)$ be two probability distributions such that $\sum_{r=1}^d p_r^i=\sum_{r=1}^d q_r^i$ for $i\in [k]$. The sample complexity of $\bp$-versus-$\bq$ spectrum testing is at least $\Omega(\frac{\sqrt{d}}{m\sqrt{\ln m}})$ for any $k$-replica protocol, where $m$ is the maximum rank of $\bp$ and $\bq$. 
\end{theorem}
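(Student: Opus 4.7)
The plan is to reduce $\bp$-versus-$\bq$ spectrum testing to a distinguishing problem between two ensembles of the form already analyzed in \Cref{lem: hard instance}. Set $p_0 = q_0 = 0$ and define
\begin{equation}
    \cE_1 \coloneqq \left\{\sum_{r=1}^m p_r \psi_r\right\}_{\psi_1,\ldots,\psi_m \leftarrow \mu_\Haar(d)}, \quad \cE_2 \coloneqq \left\{\sum_{r=1}^m q_r \psi_r\right\}_{\psi_1,\ldots,\psi_m \leftarrow \mu_\Haar(d)},
\end{equation}
where $m$ is the maximum rank of $\bp,\bq$. By \Cref{lem: hard instance} (using $1-p_0 = 1$), any $k$-replica protocol needs $\Omega(\sqrt{d})$ samples to distinguish $\cE_1$ from $\cE_2$. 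However, a state drawn from $\cE_1$ does not have spectrum exactly equal to $\bp$ because the random vectors $\ket{\psi_r}$ are not perfectly orthogonal; our task is therefore to round each draw to a nearby state whose spectrum is exactly $\bp$ (respectively $\bq$), producing rounded ensembles $\tilde\cE_1, \tilde\cE_2$, and to show that rounding costs only a small amount of distinguishing power.

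For the rounding step I would use the approximate orthogonality of Haar-random states: by \Cref{lem: haar concentration inner product} and a union bound over the $\binom{m}{2}$ pairs, with probability at least $1 - 2m^2 \exp(-d\delta/2)$ every pairwise overlap satisfies $|\braket{\psi_i}{\psi_j}|^2 \leq \delta$. Choosing $\delta = c\ln m / d$ for a suitable constant $c$ makes this event occur with probability $\geq 0.99$. Conditioned on approximate orthogonality, the Gram-Schmidt procedure (the rounding lemma alluded to as \Cref{lem: round spectrum} in the overview) produces orthonormal vectors $\ket{\phi_1},\ldots,\ket{\phi_m}$ with $\|\ket{\phi_r}-\ket{\psi_r}\|_2 = O(\sqrt{r\delta})$, and error compounds over the $m$ steps to at most $O(m\sqrt{\ln m/d})$ in total. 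The rounded state $\tilde\sigma = \sum_r p_r \phi_r$ then has spectrum exactly $\bp$ (padded by zeros), and
\begin{equation}
    \|\tilde\sigma - \sigma\|_1 \leq \sum_{r=1}^m p_r \cdot \|\phi_r - \psi_r\|_1 = O\!\left(m\sqrt{\ln m/d}\right),
\end{equation}
uniformly over the high-probability event, and similarly for $\cE_2$ versus $\tilde\cE_2$.

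To conclude, I would chain together the bounds as follows. By the triangle inequality for total variation distance, for any $(d,T)$-product POVM $M$,
\begin{equation}
    d_M\!\left(\E_{\tilde\sigma \leftarrow \tilde\cE_1}[\tilde\sigma^{\otimes T}], \E_{\tilde\sigma \leftarrow \tilde\cE_2}[\tilde\sigma^{\otimes T}]\right) \leq d_M\!\left(\E_{\cE_1}[\sigma^{\otimes T}], \E_{\cE_2}[\sigma^{\otimes T}]\right) + 2 \cdot T \cdot \E\!\left[\tfrac{1}{2}\|\tilde\sigma - \sigma\|_1\right] + o(1),
\end{equation}
using $\|\tilde\sigma^{\otimes T} - \sigma^{\otimes T}\|_1 \leq T \|\tilde\sigma - \sigma\|_1$ together with an $o(1)$ failure-probability contribution from the bad event where approximate orthogonality fails. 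Applying the \Cref{lem: hard instance} bound $O(k^2 T^2 / d)$ to the first term and the rounding bound $O(Tm\sqrt{\ln m / d})$ to the second, the total variation distance stays below a small constant as long as $T = o(\sqrt{d}/(m\sqrt{\ln m}))$, yielding the claimed lower bound via \Cref{lem: reduction to TV}. The main technical hurdle is controlling the Gram-Schmidt error accumulation tightly: a naive per-step bound would give $O(m^{3/2}\sqrt{\ln m/d})$ and lose a $\sqrt{m}$ factor, so the key calculation is showing that the projections onto the already-constructed orthonormal subspace deplete only $O(m\delta)$ in squared norm in total, not $O(m^2\delta)$.
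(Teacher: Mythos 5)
Your proposal is correct and, at the level of the theorem, follows the paper's proof exactly: the same ensembles $\cE_1,\cE_2$ with $p_0=q_0=0$, indistinguishability via \Cref{lem: hard instance} (equivalently the bound $4(kT)^2/d$), a rounding step to ensembles with exact spectra $\bp,\bq$, and a triangle-inequality chain followed by Le Cam / \Cref{lem: reduction to TV}. The only genuine divergence is how the rounding lemma (\Cref{lem: round spectrum}) is established: the paper writes $\Psi=QR$, identifies $R_0^\dagger R_0$ as the Cholesky factorization of the Gram matrix $G$, invokes Sun's perturbation bound $\norm{R_0-\Id}_F\le\sqrt2\norm{G-\Id}_F$, and bounds $\norm{\rho-\sigma}_1$ by H\"older in Frobenius norm, whereas you do an explicit column-by-column Gram--Schmidt analysis. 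Your route works, and the "main technical hurdle" you flag is in fact not an obstacle, for two independent reasons. First, since the Gram--Schmidt vectors span the same subspace as the original ones, the depleted squared norm at step $r$ is $\norm{P_{r-1}\psi_r}_2^2=v^\dagger G_{r-1}^{-1}v$ with $v_j=\braket{\psi_j}{\psi_r}$, so conditioned on the pairwise-overlap event it is at most $\norm{G_{r-1}^{-1}}_\infty\,(r-1)\delta=O(r\delta)$ whenever $m\sqrt{\delta}\le 1/2$ (the complementary regime $m\gtrsim\sqrt{d/\ln m}$ makes the theorem trivial), giving your claimed $\norm{\ket{\phi_r}-\ket{\psi_r}}_2=O(\sqrt{r\delta})$ with no accumulation. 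Second, even the naive accumulating bound $O(r\sqrt\delta)$ suffices, because the quantity you need is the $p_r$-weighted sum and $\sum_r p_r\, r\sqrt\delta\le m\sqrt\delta=O(m\sqrt{\ln m/d})$; the $\sqrt m$ loss you worry about only appears in the unweighted sum. Two small bookkeeping points: for a $k$-replica $T$-round protocol the tensor power is $kT$, so the rounding term should carry a factor $kT$ (this is what yields the lower bound on the sample complexity $kT$ rather than on $T$), and the bad-event contribution is an additive constant like $0.02$ rather than $o(1)$, which, as in the paper, is absorbed into the constant success-probability threshold.
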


It seems a direct corollary of \Cref{lem: hard instance} (where $p_0,q_0$ are set to 0). However, the problem is that the spectrum of $\sum_{r=1}^d p_i\psi_i$ is not exactly $\bp$. We deal with this rounding error in the following lemma.
\begin{lemma}\label{lem: round spectrum}
    Let $\ba=(a_1,\cdots, a_m)$ be a spectrum and $\psi_1,\cdots, \psi_m$ be $m$ Haar random states. With probability at least $0.99$, $\rho\coloneqq \sum_{r=1}^m a_r\psi_r$ is $\mathcal{O}(m\sqrt{\ln m}/\sqrt{d})$-close to a state with spectrum $\ba$ in trace distance.
\end{lemma}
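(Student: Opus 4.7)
The plan is to find, for each Haar-random $\ket{\psi_r}$, a nearby unit vector $\ket{\phi_r}$ so that the collection $\{\ket{\phi_r}\}_{r=1}^m$ is exactly orthonormal. Then $\rho'\coloneqq \sum_r a_r\ket{\phi_r}\bra{\phi_r}$ has spectrum exactly $\ba$ (padded with zeros to length $d$), and the perturbation is small enough to control $\|\rho-\rho'\|_1$.

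First, I would use \Cref{lem: haar concentration inner product} together with a union bound over the $\binom{m}{2}$ pairs $r\neq s$: conditioning on $\psi_s$ and applying the lemma to the independent $\psi_r$, then setting $\delta=\Theta(\ln m/d)$ appropriately, with probability at least $0.99$ every off-diagonal overlap satisfies $|\braket{\psi_r}{\psi_s}|\leq \sqrt{\delta}=O(\sqrt{\ln m/d})$. Assemble the states as columns of a $d\times m$ matrix $A$, so the Gram matrix $G\coloneqq A^\dagger A$ takes the form $G=I+E$ with zero diagonal for $E$ and
\begin{equation*}
\|E\|_\infty \leq \|E\|_F = \sqrt{\sum_{r\neq s}|\braket{\psi_r}{\psi_s}|^2}\leq O\!\left(m\sqrt{\ln m/d}\right).
\end{equation*}
Throughout we may assume $d=\Omega(m^2\ln m)$; otherwise the target bound exceeds $1$ and the claim is vacuous.

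Next, take the polar decomposition $A=BG^{1/2}$, where $B$ is a $d\times m$ isometry, and let $\ket{\phi_r}$ be its $r$-th column. Then $\rho'\coloneqq B\,\diag(\ba)\,B^\dagger = \sum_r a_r\ket{\phi_r}\bra{\phi_r}$ has spectrum exactly $\ba$. Applying the scalar inequality $|\sqrt{1+x}-1|\leq |x|$ (valid for $|x|\leq 1/2$) eigenvalue-wise to the self-adjoint $E$ yields $\|G^{1/2}-I\|_\infty\leq \|E\|_\infty$, and hence $\|A-B\|_\infty = \|B(G^{1/2}-I)\|_\infty\leq \|E\|_\infty = O(m\sqrt{\ln m/d})$.

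Finally, by the triangle inequality and \Cref{lem: norm facts}(c),
\begin{equation*}
\|\rho-\rho'\|_1 \leq \sum_r a_r\|\psi_r-\phi_r\|_1 \leq 2\sum_r a_r\|\ket{\psi_r}-\ket{\phi_r}\|_2 \leq 2\|A-B\|_\infty \sum_r a_r = O\!\left(m\sqrt{\ln m/d}\right),
\end{equation*}
where we used $\|\ket{\psi_r}-\ket{\phi_r}\|_2 = \|(A-B)\ket{e_r}\|_2\leq \|A-B\|_\infty$ together with $\sum_r a_r=1$. The key observation preventing an extra factor of $m$ in the final bound is that the operator-norm estimate on $A-B$ is \emph{uniform} across the columns, so the average against $\ba$ collapses via $\sum_r a_r=1$ rather than accumulating. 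I expect the main technical point to be precisely this uniform columnwise bound; the polar decomposition supplies it cleanly (a direct Gram--Schmidt argument would produce the same scaling but with a more tedious recursive error analysis).
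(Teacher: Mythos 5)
Your proof is correct, but it routes the key step differently from the paper. Both arguments start identically: columns of a $d\times m$ matrix, Gram matrix $G=\Id+E$, and \Cref{lem: haar concentration inner product} plus a union bound to get $\norm{E}_F=\mathcal{O}(m\sqrt{\ln m}/\sqrt{d})$ with probability $0.99$. From there the paper orthogonalizes via the QR factorization $\Psi=QR$, identifies $R_0^\dagger R_0$ as the Cholesky factorization of $G$, invokes an external perturbation bound for Cholesky factors (Sun 1991) to get $\norm{R_0-\Id}_F\leq\sqrt{2}\norm{G-\Id}_F$, and then bounds $\norm{R_0A_0R_0^\dagger-A_0}_1$ by H\"older's inequality using $\norm{A_0}_F\leq 1$. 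You instead use the polar (L\"owdin) decomposition $A=BG^{1/2}$, for which the perturbation bound is the elementary scalar inequality $\abs{\sqrt{1+x}-1}\leq\abs{x}$ applied to the eigenvalues of $E$ (your restriction $\abs{x}\leq 1/2$ is guaranteed by the harmless assumption $d=\Omega(m^2\ln m)$, and in fact the inequality holds for all $x\geq -1$), and you then pass to the trace norm columnwise, using the uniform operator-norm bound $\norm{\ket{\psi_r}-\ket{\phi_r}}_2\leq\norm{A-B}_\infty$ together with $\sum_r a_r=1$ and \Cref{lem: norm facts}(c). The net effect is the same $\mathcal{O}(m\sqrt{\ln m}/\sqrt{d})$ scaling; what your route buys is self-containedness (no citation to Cholesky perturbation theory, only an operator-monotone-type scalar bound and the isometry property of $B$), while the paper's route keeps the whole estimate at the level of matrix norms of $R_0A_0R_0^\dagger-A_0$ and delegates the perturbation analysis to a standard reference. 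Your closing observation is also accurate: a naive per-column accumulation would cost an extra factor of $m$, and both proofs avoid it, you via the weighted sum against $\ba$ with a uniform column bound, the paper via $\norm{A_0}_F\leq 1$ in H\"older's inequality.
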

We leave the proof to \Cref{sec: round spectrum}. 
\begin{proof}[Proof of \Cref{thm: spectrum testing}]
    Define a function $f_{\bp}$ ($f_{\bq}$, resp.)that maps a quantum state to the closest state with spectrum $\bp$ ($\bq$, resp.). Consider the following four ensembles
    \begin{equation*}
        \cE_1=\{\sum_{r=1}^d p_r\psi_r\}_{\psi_1,\cdots, \psi_d\leftarrow \mu_\Haar(d)},\,\cE_1'=\{f_{\bp}(\rho)\}_{\rho\leftarrow \cE_1}, \, \cE_2=\{\sum_{r=1}^d p_r\psi_r\}_{\psi_1,\cdots, \psi_d\leftarrow \mu_\Haar},\,\cE_2'=\{f_{\bq}(\rho)\}_{\rho\leftarrow \cE_2}.
    \end{equation*}
    For any $k$-replica $T$-round POVM $M$, \eqref{equ: two ensembles distinguish} has established that $$d_M(\E_{\rho\leftarrow\cE_1}[\rho^{\otimes kT}], \E_{\rho\leftarrow\cE_2}[\rho^{\otimes kT}])\leq \frac{4(kT)^2}{d}.$$
    By \Cref{lem: round spectrum},
    \begin{align}
        &d_M(\E_{\rho\leftarrow\cE_1}[\rho^{\otimes kT}], \E_{\rho\leftarrow\cE_1'}[\rho^{\otimes kT}])\nonumber\\
        \leq& \norm{\E_{\rho\leftarrow\cE_1}[\rho^{\otimes kT}]- \E_{\rho\leftarrow\cE_1'}[\rho^{\otimes kT}]}_1 \tag*{(property of trace norm)}\\
        \leq& \E_{\rho\leftarrow \cE_1}[\norm{\rho^{\otimes kT}-f_{\bp}(\rho)^{\otimes kT}}_1]\tag*{(triangle inequality)}\\
        \leq& 0.01 + 0.99kTm\sqrt{\ln m}/\sqrt{d} \tag*{(\Cref{lem: round spectrum} and \Cref{lem: norm facts}(d))}.
    \end{align}
    Similarly,
    \begin{equation}
        d_M(\E_{\rho\leftarrow\cE_2}[\rho^{\otimes kT}], \E_{\rho\leftarrow\cE_2'}[\rho^{\otimes kT}])\leq 0.01 + 0.99kTm\sqrt{\ln m}/\sqrt{d}.
    \end{equation}
    Applying triangle inequality,
    \begin{equation}
        d_M(\E_{\rho\leftarrow\cE_1'}[\rho^{\otimes kT}], \E_{\rho\leftarrow\cE_2'}[\rho^{\otimes kT}])\leq 0.02+\frac{4(kT)^2}{d}+2kTm\sqrt{\ln m}/\sqrt{d}.
    \end{equation}
    By Le Cam's two-point method (\Cref{lem: Le Cam}), this upper bounds the success probability of distinguishing $\cE_1'$ and $\cE_2'$ using $k$-replica protocols with $T$ rounds. To achieve a winning probability of $0.9$, the sample complexity is at least $kT=\Omega(\frac{\sqrt{d}}{m\sqrt{\ln m}})$. 

    On the other hand, assume we can solve the spectrum testing task, obviously we can solve the distinguishing task. Hence the sample complexity of the spectrum testing task is at least $\Omega(\frac{\sqrt{d}}{m\sqrt{\ln m}})$. 
\end{proof}

Now we prove \Cref{cor: rank testing} and \Cref{cor: noninteger power} by carefully choosing $\bp$ and $\bq$.

\begingroup
\def\thetheorem{\ref{cor: rank testing}}
\begin{corollary}
    Let $k, d\in \N^+$ and $\epsilon=2/(k+1)^3$. The sample complexity of $(k, \epsilon)$ rank testing is at least $\Omega(\sqrt{d}/(k\sqrt{\ln k}))$ for any $k$-replica protocol.
\end{corollary}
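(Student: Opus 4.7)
}

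The idea is to reduce $(k,\epsilon)$ rank testing to a $\bp$-versus-$\bq$ spectrum testing instance, and then invoke Theorem~\ref{thm: spectrum testing}. For this I need two probability distributions $\bp$ of rank $k$ and $\bq$ of rank $k+1$ such that (i) $\sum_r p_r^i=\sum_r q_r^i$ for every $i\in[k]$, and (ii) the smallest entry of $\bq$ satisfies $q_{k+1}\ge 2\epsilon=4/(k+1)^3$. Property (i) ensures the spectrum-testing lower bound applies with maximal rank $m=k+1$; property (ii) turns the rank gap into a trace-distance gap via Mirsky's inequality, as I explain below.

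Assuming such $\bp,\bq$ have been constructed, the reduction is immediate. Any state $\sigma$ with spectrum $\bq$ is $\epsilon$-far in trace distance from every rank-$\le k$ state $\tau$: writing the sorted spectrum of $\tau$ as $(s_1,\ldots,s_d)$, we have $s_{k+1}=0$, so Mirsky's inequality gives
\begin{equation}
\|\sigma-\tau\|_1\;\ge\;\sum_{i=1}^{d}|q_i-s_i|\;\ge\;q_{k+1}-s_{k+1}\;=\;q_{k+1}\;\ge\;2\epsilon,
\end{equation}
hence $d_{\tr}(\sigma,\tau)\ge\epsilon$. Consequently, an algorithm solving the $(k,\epsilon)$ rank-testing task with success probability $0.9$ must output ``rank $\le k$'' for inputs with spectrum $\bp$ and ``$\epsilon$-far'' for inputs with spectrum $\bq$, and therefore distinguishes the two ensembles with the same success probability. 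Applying Theorem~\ref{thm: spectrum testing} with $m=k+1$ yields a sample-complexity lower bound of $\Omega(\sqrt{d}/((k+1)\sqrt{\ln(k+1)}))=\Omega(\sqrt{d}/(k\sqrt{\ln k}))$ for any $k$-replica protocol, which is the claim.

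The only step that requires real work is the construction of $\bp$ and $\bq$ satisfying (i) and (ii). Here is my intended approach. Padding $\bp$ with an extra zero, I regard it as a probability vector with $k+1$ entries; by Newton's identities, condition (i) fixes the first $k$ elementary symmetric polynomials of $\bq$ to equal those of $\bp$, so the only free parameter is $e_{k+1}(\bq)=\prod_r q_r$. Equivalently, the monic polynomial $Q(x)=\prod_r(x-q_r)$ is $Q(x)=P(x)+(-1)^{k+1}c$, where $P(x)=\prod_r(x-p_r)\cdot x$ is the corresponding polynomial for the padded $\bp$ and $c=e_{k+1}(\bq)>0$ is to be chosen. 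I then pick $\bp$ so that $P$ has a critical point at a value $x_0\in(0,\min_r p_r)$ bounded below by $1/(k(k+1))$ and with $|P(x_0)|=c^*$ for a computable $c^*$; at the critical parameter $c=c^*$, the perturbed polynomial $Q$ has a double root at $x_0$ plus $k-1$ additional roots that, for a suitable Chebyshev-type choice of $\bp$, can be shown to all lie in $[x_0,1]$. The inequality $1/(k(k+1))\ge 4/(k+1)^3\Leftrightarrow(k-1)^2\ge 0$ then delivers $q_{k+1}\ge 2\epsilon$.

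The main obstacle is ensuring that the $k-1$ auxiliary roots of $Q$ are all real and positive. The naive choice $\bp=(1/k,\dots,1/k)$ fails for $k\ge 3$: perturbing the multiplicity-$k$ root of $P(x)=x(x-1/k)^k$ by a constant produces roots near $1/k+\omega(c/P^{(k)}(1/k))^{1/k}$ for the $k$-th roots of unity $\omega$, most of which are complex. I will instead take $\bp$ to have entries at Chebyshev-like nodes (mirroring the construction in Lemma~\ref{lem: equal moment sequences}), for which the equioscillation property of $P$ guarantees that the perturbed polynomial $Q$ at the critical parameter has $k+1$ real positive roots, all lying above $1/(k(k+1))$. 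This last analytic step---certifying reality and positivity of all roots, and thereby the quantitative lower bound on $q_{k+1}$---is where the bulk of the technical work lies.
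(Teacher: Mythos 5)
Your reduction skeleton is the same as the paper's: build $\bp$ of rank $k$ and $\bq$ of rank $k+1$ with matching moments up to degree $k$, use Mirsky's inequality to turn the rank gap into an $\epsilon$ trace-distance gap, and invoke \Cref{thm: spectrum testing} with $m=k+1$. However, the heart of the corollary is precisely the construction of $\bp,\bq$, and that is where your proposal has a genuine gap. The paper does not need any new construction: it reuses \Cref{lem: equal moment sequences}, whose proof (rescaled Chebyshev polynomial $T(x)=\delta T_{k+1}((k+1)(x-1))$) already certifies everything required --- both $T(x)-\delta$ and $T(x)+\delta$ have $k+1$ nonnegative roots, one of the two resulting distributions contains a zero entry (so it has rank at most $k$), and the smallest nonzero entry is at least $2/(k+1)^3=\epsilon$. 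You instead propose to re-derive such a pair from scratch via a constant perturbation $Q=P+(-1)^{k+1}c$ of $P(x)=x\prod_r(x-p_r)$, and you explicitly defer the decisive step (all roots of $Q$ real, positive, and quantitatively bounded below) as ``the bulk of the technical work.'' As written, this is an unproven plan rather than a proof, and your intermediate claim that the critical double root satisfies $x_0\ge 1/(k(k+1))$ looks too strong for Chebyshev-type nodes: the smallest nonzero entry in such constructions scales as $\Theta(1/(k+1)^3)$ (the paper's lemma gives exactly $\frac{2}{k+1}\sin^2\frac{\pi}{2(k+1)}\ge 2/(k+1)^3$), not $\Theta(1/k^2)$, so the inequality chain you build on that bound does not stand.

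A second, smaller issue is the factor of two in the Mirsky step. You only use $\norm{\sigma-\tau}_1\ge q_{k+1}$, hence you must demand $q_{k+1}\ge 2\epsilon=4/(k+1)^3$, which exceeds the $2/(k+1)^3$ guaranteed by \Cref{lem: equal moment sequences}. The paper avoids this by exploiting that both spectra are normalized: with $r_{k+1}=0$ one has $\sum_{i=1}^{k}(q_i-r_i)=-q_{k+1}$, so
\begin{equation}
d_{\tr}(\rho,\sigma)\ \ge\ \frac12\sum_{i=1}^{k+1}\abs{q_i-r_i}\ \ge\ \frac12\abs*{\sum_{i=1}^{k}(q_i-r_i)}+\frac12 q_{k+1}\ =\ q_{k+1}\ \ge\ \epsilon,
\end{equation}
which makes the lemma's bound $q_{k+1}\ge 2/(k+1)^3$ exactly sufficient. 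To repair your argument, drop the bespoke polynomial-perturbation construction, cite \Cref{lem: equal moment sequences} together with the two facts recorded in its proof (one distribution has a zero entry; the smallest nonzero entry is at least $2/(k+1)^3$), and use the normalization-improved Mirsky bound above; the rest of your reduction then goes through as you wrote it.
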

\addtocounter{theorem}{-1}
\endgroup
\begin{proof}
    Let $\bp=(p_1,\cdots, p_{k+1})$ and $\bq=(q_1,\cdots, q_{k+1})$ be two distributions obtained from \Cref{lem: equal moment sequences}. We sort them in the nonincreasing order. By definition, their moments coincide up to degree $k$. By \Cref{thm: spectrum testing}, the sample complexity to distinguish them is at least $\Omega(\sqrt{d}/(k\sqrt{\ln k}))$. 

    According to the proof of \Cref{lem: equal moment sequences}, one of $\bp$ and $\bq$ has rank at most $k$. Without loss of generality assume $\bp$ has rank at most $k$. Another fact from the proof of \Cref{lem: equal moment sequences} is that all nonzero entries of $\bq$ are at least $\epsilon$. For any state $\rho$ with spectrum $\bq$ and state $\sigma$ with rank at most $k$, writing the spectrum of $\br$ as $\br=(r_1,\cdots, r_k, r_{k+1}=0)$, we have 
    \begin{equation}
        d_{\tr}(\rho, \sigma)=\frac{1}{2}\norm{\rho-\sigma}_1\ge \frac12\sum_{i=1}^{k+1}\abs{q_i-r_i}\ge \frac12 \abs{\sum_{i=1}^{k}(q_i-r_i)}+\frac12 q_{k+1} = q_{k+1} \ge \epsilon.
    \end{equation}
    Therefore, a state $\rho$ with spectrum $\bp$ has rank at most $k$ and a state $\rho$ with spectrum $\bq$ is $\epsilon$-far away from any state with rank at most $k$. So the $\bp$-versus-$\bq$ spectrum testing can be reduced to the $(k, \epsilon)$ rank testing. Thus, we get the $\Omega(\sqrt{d}/(k\sqrt{\ln k}))$ lower bound for the $(k, \epsilon)$ rank testing problem.
\end{proof}

\begingroup
\def\thetheorem{\ref{cor: noninteger power}}
\begin{corollary}
    Let $d, m, k\in \N^+$ and $\alpha>0$ be a non-integer. Define $\epsilon(\alpha, k, m)$ as the maximum of $\frac12 (\sum_{r=1}^{m}p_r^\alpha -  \sum_{r=1}^m q_r^\alpha)$ over all pairs of distributions $\{p_r\}_{r=1}^m, \{q_r\}_{r=1}^m$ such that $\sum_{r=1}^m p_r^i=\sum_{r=1}^m q_r^i, \forall i\in [k]$. Given copy access of an unknown $d$-dimensional state $\rho$, the sample complexity of estimating $\tr(\rho^{\alpha})$ within error $\epsilon(\alpha, k, m)$ is at least $\Omega(\sqrt{d}/(m\sqrt{\ln m}))$.
\end{corollary}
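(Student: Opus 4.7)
} The plan is to reduce the estimation task to a $\bp$-versus-$\bq$ spectrum testing problem and then invoke \Cref{thm: spectrum testing}. The lower bound is understood to be for $k$-replica protocols, matching the restriction under which \Cref{thm: spectrum testing} applies.

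First I would unpack the definition of $\epsilon(\alpha,k,m)$. By the definition as a maximum, there exist two $m$-element probability distributions $\bp=(p_1,\dots,p_m)$ and $\bq=(q_1,\dots,q_m)$ that agree on all power sums of degree $1,\dots,k$, i.e.\ $\sum_{r=1}^m p_r^i = \sum_{r=1}^m q_r^i$ for every $i\in[k]$, and that realize $\tfrac12\bigl|\sum_r p_r^\alpha - \sum_r q_r^\alpha\bigr| = \epsilon(\alpha,k,m)$. (If the supremum is not attained, take a pair achieving $\epsilon(\alpha,k,m)-\eta$ for arbitrarily small $\eta>0$; this only affects constants.) Pad both to $d$-dimensional spectra by appending zeros. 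Since they have at most $m$ nonzero entries and coincide on the first $k$ moments, \Cref{thm: spectrum testing} gives that any $k$-replica protocol distinguishing the two spectra with constant success probability uses $\Omega(\sqrt{d}/(m\sqrt{\ln m}))$ copies.

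Next I would carry out the standard estimation-to-testing reduction. Suppose there is a $k$-replica protocol $\mathcal{A}$ that, on input copies of any $d$-dimensional $\rho$, outputs an estimate $\hat{E}$ of $\tr(\rho^\alpha)$ with $|\hat{E}-\tr(\rho^\alpha)|\le \epsilon(\alpha,k,m)$ with probability at least $0.9$. For any state $\rho$ with spectrum $\bp$ one has $\tr(\rho^\alpha)=\sum_r p_r^\alpha$, and similarly for $\bq$. Because these two target values differ by exactly $2\epsilon(\alpha,k,m)$, the tester that outputs $\bp$ or $\bq$ according to which of $\sum_r p_r^\alpha,\sum_r q_r^\alpha$ is closer to $\hat{E}$ correctly identifies the underlying spectrum with probability at least $0.9$. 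Combining this with the spectrum testing lower bound forces the sample complexity of $\mathcal{A}$ to be at least $\Omega(\sqrt{d}/(m\sqrt{\ln m}))$.

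There is no substantive obstacle here: the argument is a textbook Le~Cam style reduction, with \Cref{thm: spectrum testing} doing all the heavy lifting. The only mildly delicate point is the potential non-attainment of the supremum in $\epsilon(\alpha,k,m)$, which is handled by the $\eta$-approximation above and does not affect the asymptotic bound. A minor sanity check worth noting is that $\epsilon(\alpha,k,m)>0$ when $\alpha$ is non-integer and $k<m$, since $x^\alpha$ cannot be interpolated by a degree-$k$ polynomial through all $m$ mass points simultaneously; this is exactly what makes the reduction informative.
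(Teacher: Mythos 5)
Your proposal is correct and follows essentially the same route as the paper: take the extremal pair $\bp,\bq$ from the definition of $\epsilon(\alpha,k,m)$, invoke \Cref{thm: spectrum testing} for the $\Omega(\sqrt{d}/(m\sqrt{\ln m}))$ distinguishing lower bound, and reduce spectrum testing to estimating $\tr(\rho^\alpha)$ since the two candidate values differ by $2\epsilon(\alpha,k,m)$. Your extra remarks on non-attainment of the supremum and on positivity of $\epsilon(\alpha,k,m)$ are harmless refinements the paper omits.
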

\addtocounter{theorem}{-1}
\endgroup
\begin{proof}
    Let $\bp=\{p_r\}_{r=1}^m, \bq=\{q_r\}_{r=1}^m$ be the pair distributions that achieves the maximum in the definition of $\epsilon(\alpha, k, m)$. By \Cref{thm: spectrum testing}, the sample complexity of $\bp$-versus-$\bq$ spectrum testing is at least $\Omega(\sqrt{d}/(m\sqrt{\ln m}))$ for $k$-replica protocols. Since $\sum_{r=1}^m p_r^\alpha - \sum_{r=1}^m q_r^\alpha=2\epsilon(\alpha, k, m)$, estimating $\tr(\rho^\alpha)$ within error $\epsilon(\alpha, k, m)$ could solve the $\bp$-versus-$\bq$ spectrum testing task. Therefore, the sample complexity of estimating $\tr(\rho^\alpha)$ within error $\epsilon(\alpha, k, m)$ is at least $\Omega(\sqrt{d}/(m\sqrt{\ln m}))$ for $k$-replica protocols.
\end{proof}

\section{Upper bounds}\label{sec: upper bounds}
In this section, we prove the upper bound results \Cref{thm: upper bound}.

\begingroup
\def\thetheorem{\ref{thm: upper bound}}
\begin{theorem}
Let $k,d\in \N^+$, $\epsilon>0$, and $O$ be a $d$-dimensional observable with $\norm{O}_\infty\leq 1$. There exists a $\ceil{\frac{k+1}{2}}$-replica protocol that estimates $\tr(\rho^{k+1}O)$ within error $\epsilon$ with sample complexity $\tilde{\mathcal{O}}\left(\max\{\frac{k\sqrt{d}}{\epsilon}, \frac{k}{\epsilon^2}\}\right)$.
\end{theorem}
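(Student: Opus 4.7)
The plan is to prove the upper bound by recursively reducing the estimation of $\tr(\rho^{k+1}O)$ to a single distributed inner-product estimation together with trace powers of strictly smaller degree, closely following the sketch in the overview. Throughout, set $k_1 = \lceil (k+1)/2 \rceil$ and $k_2 = \lfloor (k+1)/2 \rfloor$ so that $k_1 + k_2 = k+1$ and $k_2 \leq k_1$; the whole protocol will never perform joint measurements on more than $k_1$ copies at a time.

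First I would establish the preparation subroutine. Applying the generalized swap test of \Cref{sec: subroutines} to $\rho^{\otimes k_i}$ with $U$ a $k_i$-cycle returns outcome $\ket{+}$ with probability $(1+\tr(\rho^{k_i}))/2\geq 1/2$; tracing out all but one register conditional on success yields the single-qudit state $\rho_i \coloneqq (\rho + \rho^{k_i})/(1+\tr(\rho^{k_i}))$. Each preparation uses $k_i\leq k_1$ fresh copies, so a Chernoff bound shows that $\mathcal{O}(T)$ trials produce at least $T$ copies of each $\rho_i$ with overwhelming probability, at a cost of $\mathcal{O}(kT)$ samples of $\rho$. The empirical success frequency also furnishes an $\epsilon$-estimator of $\tr(\rho^{k_i})$ as soon as $T=\Omega(1/\epsilon^2)$.

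Next, a direct expansion yields the algebraic identity
\begin{equation}
\tr(\rho^{k+1}O) = (1+\tr(\rho^{k_1}))(1+\tr(\rho^{k_2}))\,\tr(\rho_1\rho_2 O) - \tr(\rho^{2}O) - \tr(\rho^{k_1+1}O) - \tr(\rho^{k_2+1}O).
\end{equation}
Since $\rho_1$ and $\rho_2$ are functions of $\rho$ they commute, so $\tr(\rho_1\rho_2 O)$ equals the symmetrized version handled by \Cref{lem: inner product estimation}; I would invoke that lemma on the prepared copies of $\rho_1,\rho_2$, estimating $\tr(\rho_1\rho_2 O)$ to error $\epsilon'=\Theta(\epsilon)$ using $\tilde{\mathcal{O}}(\max\{1/\epsilon^2,\sqrt{d}/\epsilon\})$ copies of each $\rho_i$, hence $\tilde{\mathcal{O}}(\max\{k/\epsilon^2, k\sqrt{d}/\epsilon\})$ copies of $\rho$. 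The non-adaptive nature of \Cref{lem: inner product estimation} means the post-selection in the preparation phase does not interfere with the analysis. Because all prefactors in the identity are $\mathcal{O}(1)$, $\Theta(\epsilon)$-accurate estimates of every term on the right-hand side propagate linearly to an $\epsilon$-estimate of $\tr(\rho^{k+1}O)$.

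Finally, I would handle the three residual trace terms by recursion. Each of $\tr(\rho^2 O)$, $\tr(\rho^{k_1+1}O)$, $\tr(\rho^{k_2+1}O)$ has degree strictly smaller than $k+1$; applying the same reduction once more to $\tr(\rho^{k_1+1}O)$ and $\tr(\rho^{k_2+1}O)$ produces sub-tasks of degrees at most $\lceil (k_1+1)/2\rceil + 1\leq k_1$, valid whenever $k_1\geq 3$ (i.e.\ $k\geq 4$). At that point each surviving $\tr(\rho^{j}O)$ with $j\leq k_1$ is estimated by the controlled-$\pi$ variant of the generalized swap test from \Cref{sec: subroutines}, using only $\mathcal{O}(k/\epsilon^{2})$ copies and $j\leq k_1$-replica joint measurements. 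The base cases $k\in\{1,2,3\}$ fall below the recursion and are handled directly: $k=1$ by \Cref{lem: inner product estimation} applied to $\tr(\rho^2 O)$; $k=2$ by a single application of the identity, whose only non-degree-$2$ remainder is $\tr(\rho_1\rho_2 O)$ itself; and $k=3$ by one application plus the $k=2$ routine for the $\tr(\rho^3 O)$ term.

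The main obstacle is bookkeeping: propagating the $\Theta(\epsilon)$ error bounds and the $\mathcal{O}(1)$ multiplicative prefactors through the at-most-two-level recursion, and confirming that the aggregate sample count is dominated by the single inner-product estimation rather than by the $\mathcal{O}(1)$-many sub-calls or the probabilistic preparation step. Since the recursion tree has constant depth and constant branching, both the error and the sample count blow up only by $\mathcal{O}(1)$ factors, which together with the Chernoff failure probability absorb cleanly into the $\tilde{\mathcal{O}}$ notation; the direct-estimation leaves contribute only $\mathcal{O}(k/\epsilon^2)$ each, so the dominant term is the distributed inner-product estimation, yielding the claimed sample complexity $\tilde{\mathcal{O}}(\max\{k\sqrt{d}/\epsilon,\,k/\epsilon^2\})$.
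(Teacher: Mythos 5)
Your proposal is correct and follows essentially the same route as the paper's proof: generalized swap tests to prepare $\rho_1,\rho_2$ (with the success frequency giving $\tr(\rho^{k_i})$), the same product identity, distributed estimation of $\tr(\rho_1\rho_2O)$ via the inner-product lemma (using commutativity), and a depth-two recursion bringing all residual degrees down to at most $k_1$, with the same base cases $k\in\{1,2,3\}$ and the same sample-complexity accounting.
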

\addtocounter{theorem}{-1}
\endgroup

\begin{proof}
    Denote $k_1=\ceil{\frac{k+1}{2}}$ and $k_2=\floor{\frac{k+1}{2}}$. Then $k_1+k_2=k+1$. 
    Define two states $\rho_1 \coloneqq \frac{\rho+\rho^{k_1}}{1+\tr(\rho^{k_1})}$ and $\rho_2 \coloneqq \frac{\rho+\rho^{k_2}}{1+\tr(\rho^{k_2})}$.

    Apply the generalized swap test to $\rho^{\otimes k_1}$ and a $k_1$-qudit cyclic permutation operator $U=\pi$, and trace out the last $(k_1-1)$ qudit. With probability $\frac{1+\tr(\rho^{k_1})}{2}\ge \frac12$, we obtain the (unnormalized) state $\tr_{>1}((\Id+\pi)\rho^{\otimes k_1}(\Id+\pi^\dagger))=2\rho + 2\rho^{k_1}$, which is $\rho_1$ after normalization. Therefore, we obtain a copy of state $\rho_1$ with probability at least $1/2$. With high probability, we successfully obtain $M=\tO(\max\{\frac{\sqrt{d}}{\epsilon}, \frac{1}{\epsilon^2}\})$ copies of $\rho_1$ after $3M$ runs. Similarly, with high probability, we successfully obtain $M$ copies of $\rho_2$.

    If $k=1$, we can estimate $\tr(\rho^2O)$ using \Cref{lem: inner product estimation} directly. If $k=2$, then $\rho_1=(\rho+\rho^2)/(1+\tr(\rho^2))$. Since $\tr(\rho^3O)=(1+\tr(\rho^2))\tr(\rho_1\rho O)-\tr(\rho^2O)$, once we get estimators for $\tr(\rho^2), \tr(\rho_1\rho O)$, and $\tr(\rho^2)$ within error $\epsilon/10$, we can obtain an estimator of $\tr(\rho^3O)$ within error $\epsilon$. Here $\tr(\rho^2)$ can be estimated using the generalized swap test, $\tr(\rho_1\rho O)$ can be estimated in a distributed way using \Cref{lem: inner product estimation}, and $\tr(\rho^2O)$ can be estimated directly as it is a linear property of $\rho^{\otimes 2}$. All of these estimations can be done by $2$-replica protocols. Therefore, we obtain a $2$-replica protocol for estimating $\tr(\rho^3O)$. 

    Now we assume $k\ge 3$. The idea is similar. Notice that
    $$\tr(\rho^{k+1}O)=(1+\tr(\rho^{k_1}))(1+\tr(\rho^{k_2}))\tr(\rho_1\rho_2O)-\tr(\rho^2O)-\tr(\rho^{k_1+1}O)-\tr(\rho^{k_2+1}O).$$
    Once we get estimators for $\tr(\rho^{k_1}), \tr(\rho^{k_2}), \tr(\rho_1\rho_2O), \tr(\rho^2O), \tr(\rho^{k_1+1}O)$, and $\tr(\rho^{k_2+1}O)$ within error $\epsilon/10$, we can obtain an estimator of $\tr(\rho^{k+1}O)$ within error $\epsilon$. Here $\tr(\rho^{k_1})$, $\tr(\rho^{k_2})$, and $\tr(\rho^2O)$ can be estimated using the generalized swap test. $\tr(\rho_1\rho_2O)$ can be estimated in a distributed way using \Cref{lem: inner product estimation}. Therefore, estimating $\tr(\rho^{k+1}O)$ within error $\epsilon$ reduces to estimating $\tr(\rho^{k_1+1}O)$ and $\tr(\rho^{k_2+1}O)$ within error $\epsilon/10$, which have the same form as the original problem but with smaller degrees. If $k=3$, then $k_1=k_2=2$, and we have already shown how to estimate $\tr(\rho^3O)$ using a $2$-replica protocol. If $k\ge 4$, we repeat this reduction one more time. So the problem reduces to estimating $\tr(\rho^{k'+1}O)$ for $k'=\ceil{\frac{k_1+1}{2}}, \floor{\frac{k_1+1}{2}}, \ceil{\frac{k_2+1}{2}}$, and $\floor{\frac{k_2+1}{2}}$ within error $\epsilon/100$. Since $k\ge 4$, we can verify that $k'+1\leq k_1$. Therefore, all these $\tr(\rho^{k'+1}O)$ can be estimated using the generalized SWAP test algorithm. 

    In practice, we do not need to wait until all copies of $\rho_1, \rho_2$ are successfully generated. Instead, we measure each copy immediately after it is generated. Therefore, the overall algorithm is a $k_1$-replica protocol. The sample complexity is $\tO(\max\{\frac{k\sqrt{d}}{\epsilon}, \frac{k}{\epsilon^2}\})$ due to the cost of distributed inner product estimation. 
\end{proof}

\section{Proofs of auxiliary lemmas} \label{sec: auxiliary}
\subsection{Expectations under multinomial distributions}
In this paper, we run into the following calculation several times
\begin{equation}
    \sum_{x_1+\cdots +x_m = N}\binom{N}{x_1,\cdots, x_m} p_1^{x_1}\cdots p_m^{x_m} f(x_1,\cdots, x_m), \label{equ: expectation under multinomial distribution}
\end{equation}
where $\{p_1,\cdots, p_m\}$ is a probability distribution and $f$ is a polynomial. This can be understood as the expectation of a random variable $f(x_1, \cdots, x_m)$ when $(x_1,\cdots, x_m)$ follows a multinomial distribution. The following lemma is useful to compute such expectations.
\begin{lemma}\label{lem: expectation under multinomial distribution}
    Let $c_1,\cdots, c_m \in \N$. We have 
    \begin{equation}
        \sum_{x_1+\cdots +x_m=N}\binom{N}{x_1,\cdots, x_m}\parens*{\prod_{i=1}^m p_i^{x_i}} \prod_{i=1}^m x_i^{\downarrow c_i} = N^{\downarrow c_1+\cdots + c_m}\prod_{i=1}^m p_i^{c_i}.
    \end{equation}
\end{lemma}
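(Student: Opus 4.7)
The plan is to prove the identity by a direct algebraic manipulation: absorb the falling factorials into the multinomial coefficient, shift the summation variables, and then apply the multinomial theorem together with the fact that $p_1+\cdots+p_m=1$.

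In more detail, first I would rewrite the factors using $x_i^{\downarrow c_i}=x_i!/(x_i-c_i)!$, which is valid whenever $x_i\ge c_i$ (and both sides vanish otherwise, so the terms with some $x_i<c_i$ contribute zero). Combined with $\binom{N}{x_1,\dots,x_m}=N!/\prod_i x_i!$, this gives
\begin{equation}
\binom{N}{x_1,\dots,x_m}\prod_{i=1}^m x_i^{\downarrow c_i}=\frac{N!}{\prod_{i=1}^m (x_i-c_i)!}=N^{\downarrow C}\binom{N-C}{x_1-c_1,\dots,x_m-c_m},
\end{equation}
where $C\coloneqq c_1+\cdots+c_m$, and we used $N!/(N-C)!=N^{\downarrow C}$.

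Next I would change variables to $y_i\coloneqq x_i-c_i$, which runs over nonnegative integers summing to $N-C$, and factor out $\prod_i p_i^{c_i}$ from $\prod_i p_i^{x_i}$. The left-hand side then becomes
\begin{equation}
N^{\downarrow C}\prod_{i=1}^m p_i^{c_i}\sum_{y_1+\cdots+y_m=N-C}\binom{N-C}{y_1,\dots,y_m}\prod_{i=1}^m p_i^{y_i}.
\end{equation}
Finally, the inner sum is exactly the multinomial expansion of $(p_1+\cdots+p_m)^{N-C}=1^{N-C}=1$, yielding the claimed identity.

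There is no real obstacle here; the only small bookkeeping point is to observe that terms with $x_i<c_i$ contribute zero on the left (because $x_i^{\downarrow c_i}=0$), which is what makes the substitution $y_i=x_i-c_i\ge 0$ legitimate. Once this is noted, the computation is a one-line application of the multinomial theorem.
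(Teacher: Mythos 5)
Your proof is correct and follows essentially the same route as the paper's: absorb the falling factorials into the multinomial coefficient via $\binom{N}{x_1,\dots,x_m}\prod_i x_i^{\downarrow c_i}=N^{\downarrow C}\binom{N-C}{x_1-c_1,\dots,x_m-c_m}$, shift variables, and apply the multinomial theorem with $p_1+\cdots+p_m=1$. Your remark that terms with $x_i<c_i$ vanish is a nice bit of care that the paper leaves implicit.
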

\begin{proof}
    Notice that $\binom{N}{x_1,\cdots, x_m}\prod_{i=1}^m x_i^{\downarrow c_i} = N^{\downarrow c_1+\cdots + c_m}\binom{N-(c_1+\cdots + c_m)}{x_1-c_1, \cdots, x_m-c_m}$. Let $x_i' = x_i-c_i$. We have
        \begin{align*}
            &\sum_{x_1+\cdots +x_m=N}\binom{N}{x_1,x_2,\cdots, x_m}\left(\prod_{i=1}^m p_i^{x_i}\right)\prod_{i=1}^m x_i^{\downarrow c_i}\\
            =&N^{\downarrow c_1+\cdots + c_m}\sum_{x_1+\cdots +x_m=N}\binom{N-(c_1+c_2+\cdots +c_m)}{x_1-c_1,x_2-c_2,\cdots, x_m-c_m}\left(\prod_{i=1}^m p_i^{x_i}\right)\\
            =& N^{\downarrow c_1+\cdots + c_m}\prod_{i=1}^m p_i^{c_i}\sum_{x_1'+\cdots +x_m'=N-(c_1+\cdots +c_m)}\binom{N-(c_1+c_2+\cdots +c_m)}{x_1',x_2',\cdots, x_m'}\left(\prod_{i=1}^m p_i^{x_i'}\right)\\
            =& N^{\downarrow c_1+\cdots + c_m}\prod_{i=1}^m p_i^{c_i}\left(p_1+\cdots + p_m\right)^{N-(c_1+\cdots + c_m)}\\
            =& N^{\downarrow c_1+\cdots + c_m}\prod_{i=1}^m p_i^{c_i}. \qedhere
        \end{align*}
\end{proof}
For a general polynomial $f(x_1,\cdots, x_m)$, we can always write it as a linear combination of $\prod_{i=1}^m x_i^{\downarrow c_i}$s and apply \Cref{lem: expectation under multinomial distribution} to compute the expectation. For convenience, we list quadratic calculations below.
\begin{lemma}\label{cor: expectation under multinomial distribution quadratic}
    When $f(x_1, \cdots, x_m)$ is $x_i^2$, $x_ix_j$ ($i\neq j$), or $x_i$, the expectation \eqref{equ: expectation under multinomial distribution} equals to $N(N-1)p_i^2+Np_i$, $N(N-1)p_ip_j$, or $Np_i$, respectively.
\end{lemma}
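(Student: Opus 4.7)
The plan is to reduce each case directly to \Cref{lem: expectation under multinomial distribution} by rewriting the polynomial $f$ in the basis of falling-factorial monomials $\prod_i x_i^{\downarrow c_i}$. Since \Cref{lem: expectation under multinomial distribution} gives a closed form for every such monomial, the corollary will follow immediately once $f$ is so expressed.

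First I would handle the linear and off-diagonal cases, which are trivial: $x_i = x_i^{\downarrow 1}$, so \Cref{lem: expectation under multinomial distribution} with $c_i = 1$ and all other $c_j = 0$ yields $N^{\downarrow 1} p_i = N p_i$; similarly $x_i x_j = x_i^{\downarrow 1} x_j^{\downarrow 1}$ for $i \neq j$, giving $N^{\downarrow 2} p_i p_j = N(N-1) p_i p_j$. For the diagonal case I would use the identity $x_i^2 = x_i(x_i - 1) + x_i = x_i^{\downarrow 2} + x_i^{\downarrow 1}$ and apply linearity of the expectation along with \Cref{lem: expectation under multinomial distribution} to each term, obtaining $N^{\downarrow 2} p_i^2 + N^{\downarrow 1} p_i = N(N-1) p_i^2 + N p_i$.

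There is no real obstacle here; the only thing to watch is the algebraic conversion $x_i^2 = x_i^{\downarrow 2} + x_i$, which is what lets the previous lemma apply directly. No additional tools or estimates are needed, and the proof will fit comfortably in a few lines.
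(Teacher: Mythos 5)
Your proposal is correct and matches the paper's (implicit) argument exactly: the paper states the quadratic cases as immediate consequences of \Cref{lem: expectation under multinomial distribution} after noting that any polynomial can be expanded in falling-factorial monomials, which is precisely your decomposition $x_i^2 = x_i^{\downarrow 2} + x_i^{\downarrow 1}$, $x_i x_j = x_i^{\downarrow 1} x_j^{\downarrow 1}$, $x_i = x_i^{\downarrow 1}$.
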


\subsection{Proof of Lemma \ref{lem: monomial to power sum}}\label{sec: proof of lem: monomial to power sum}
\begingroup
\def\thetheorem{\ref{lem: monomial to power sum}}
\begin{lemma}
    Let $t\in \mathbb{N}$ and let $\lambda=\{\lambda_1,\cdots,\lambda_l\}$ be a partition of $t$. 
    Denote by $m_{\lambda}(\bp)$ the monomial symmetric polynomial defined in \eqref{equ: monomial symmetric polynomial}, and let $s_{u}(\bp)\coloneqq \sum_{i=1}^m p_i^u$ be the degree-$u$ power sum. 
    Then there exists a polynomial $g_{\lambda}$ such that
    \begin{equation}
        m_{\lambda}(\bp) = (-1)^{\,l-1}(l-1)!\, s_t(\bp) \;+\; g_{\lambda}(s_1(\bp), s_2(\bp),\dots, s_{t-1}(\bp)).
    \end{equation}
    Equivalently, in the expansion of $m_{\lambda}(\bp)$ in the power-sum basis, the term $s_t(\bp)$ occurs with coefficient $(-1)^{\,l-1}(l-1)!$, and all remaining terms involve only $s_1,\dots,s_{t-1}$.
\end{lemma}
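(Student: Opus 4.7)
My plan is to induct on $l$, the number of parts of $\lambda$. The base case $l=1$ is immediate: $\lambda = (t)$ and $m_{(t)}(\bp) = \sum_i p_i^t = s_t(\bp)$, so the $s_t$-coefficient is $1 = (-1)^0 \cdot 0!$, with no lower-degree power sums appearing.

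For the inductive step with $l \geq 2$, the key ingredient is the reduction identity
\begin{equation*}
m_\lambda(\bp) \;=\; m_{\lambda'}(\bp)\, s_{\lambda_l}(\bp) \;-\; \sum_{k=1}^{l-1} m_{\lambda^{(k)}}(\bp),
\end{equation*}
where $\lambda' = (\lambda_1, \ldots, \lambda_{l-1})$ is $\lambda$ with its last part deleted, and $\lambda^{(k)}$ is obtained from $\lambda$ by merging the $k$-th and $l$-th parts, i.e.\ replacing $\lambda_k$ by $\lambda_k + \lambda_l$ and dropping $\lambda_l$. To verify this identity I would expand $m_{\lambda'}(\bp)\, s_{\lambda_l}(\bp)$ as a sum over distinct tuples $(i_1, \ldots, i_{l-1})$ together with a free index $j$, and split according to whether $j \notin \{i_1, \ldots, i_{l-1}\}$ (this contributes exactly $m_\lambda(\bp)$, because all $l$ indices are then distinct) or $j = i_k$ for some $k \in [l-1]$ (the $k$-th exponent then becomes $\lambda_k + \lambda_l$, giving $m_{\lambda^{(k)}}(\bp)$).

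With the identity in hand, the conclusion follows by matching coefficients of $s_t$ on both sides. Every $\lambda^{(k)}$ is a partition of $t$ with exactly $l-1$ parts, so by the inductive hypothesis its $s_t$-coefficient is $(-1)^{l-2}(l-2)!$. On the other hand, $\lambda'$ is a partition of $t - \lambda_l \leq t-1$, so by the inductive hypothesis $m_{\lambda'}(\bp)$ is a polynomial in $s_1, \ldots, s_{t-\lambda_l}$; multiplying by $s_{\lambda_l}$ (with $1 \leq \lambda_l \leq t-1$ since $l \geq 2$) produces no $s_t$ term. Reading off the coefficient of $s_t$ yields
\begin{equation*}
\text{coeff}_{s_t}\bigl(m_\lambda(\bp)\bigr) \;=\; 0 \;-\; (l-1)\cdot (-1)^{l-2}(l-2)! \;=\; (-1)^{l-1}(l-1)!,
\end{equation*}
and all remaining terms involve only $s_1, \ldots, s_{t-1}$ by the same inductive analysis.

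I do not anticipate a substantive obstacle; the content is essentially the derivation of the reduction identity by an inclusion--exclusion on whether the ``new'' index collides with an existing one. The only point requiring care is verifying that $\lambda_l \leq t-1$ in the inductive step (which uses $l \geq 2$) so that $m_{\lambda'}(\bp)\, s_{\lambda_l}(\bp)$ genuinely contributes nothing to the $s_t$-coefficient.
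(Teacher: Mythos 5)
Your proposal is correct and follows essentially the same route as the paper: induction on the number of parts $l$, using the identity $m_{\lambda'}(\bp)\,s_{\lambda_l}(\bp)=m_\lambda(\bp)+\sum_{k=1}^{l-1}m_{\lambda^{(k)}}(\bp)$ (you state it rearranged) verified by the same collision/no-collision split of the extra index, and then extracting the $s_t$ contribution from the merged partitions of length $l-1$ while noting the product term only involves $s_1,\dots,s_{t-1}$. No gaps.
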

\addtocounter{theorem}{-1}
\endgroup
\begin{proof}
    We prove it by induction on $l$. When $l=1$, $m_\lambda(\bp)=m_{\{t\}}(\bp)=s_t(\bp)$, so the statement holds with $g_\lambda=0$. 

    Now suppose $l>1$ and write $\lambda=\{\lambda_1, \cdots, \lambda_l\}$. Consider 
    \begin{align}
        m_{\{\lambda_1,\cdots, \lambda_{l-1}\}}(\bp)s_{\lambda_l}(\bp) &= \sum_{i_1,\cdots, i_{l-1}\in [m], \text{distinct}}\sum_{i_l}p_{i_1}^{\lambda_1}\cdots p_{i_{l}}^{\lambda_{l}}\nonumber\\
        &=m_{\lambda}(\bp) + \sum_{i_1,\cdots, i_{l-1}\in [m], \text{distinct}}\sum_{j=1}^{l-1}p_{i_1}^{\lambda_1}\cdots p_{i_{l-1}}^{\lambda_{l-1}}\cdot p_{i_j}^{\lambda_l}\nonumber\\
        &=m_{\lambda}(\bp) + \sum_{j=1}^{l-1}m_{\lambda^{j\leftarrow l}}(\bp),
    \end{align}
    where $\lambda^{j\leftarrow l}\coloneqq \{\lambda_1,\cdots, \lambda_{j-1}, \lambda_j+\lambda_l, \lambda_{j+1}, \cdots, \lambda_{l-1}\}$ is a partition of length $l-1$ obtained by absorbing $\lambda_l$ into $\lambda_j$. By induction hypothesis, $m_{\lambda^{j\leftarrow l}}(\bp)$ is $(-1)^{l-2}(l-2)!s_t(\bp)$ plus some polynomials in $s_1,\cdots, s_{t-1}$. The left-hand side $m_{{\lambda_1,\cdots, \lambda_{l-1}}}s_{\lambda_l}(\bp)$ is also a polynomial in $s_1,\cdots, s_{t-1}$. Therefore, $m_\lambda(\bp)$ is $-(l-1)\times(-1)^{l-2}(l-2)!s_t(\bp)=(-1)^{l-1}(l-1)!s_t(\bp)$ plus some polynomials in $s_1,\cdots, s_{t-1}$.
\end{proof}

\subsection{Proof of Lemma \ref{lem: equal moment sequences}}
\begingroup
\def\thetheorem{\ref{lem: equal moment sequences}}
\begin{lemma}
    For $k\in \N^*$, there exists two probability distributions $\{p_r\}_{r=1}^{k+1}$ and $\{q_i\}_{r=1}^{k+1}$ such that $\sum_{r=1}^{k+1} p_i^m = \sum_{r=1}^{k+1}q_i^m$ for all $0\leq m\leq k$, and $\sum_{r=1}^{k+1} p_i^{k+1}-\sum_{r=1}^{k+1}q_i^{k+1}= \frac{2}{2^k(k+1)^k}$.
\end{lemma}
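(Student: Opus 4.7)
The plan is to realize $\bp$ and $\bq$ as rescaled extrema of the Chebyshev polynomial of degree $k{+}1$ shifted to $[0,1]$. Concretely, I would work with the monic polynomial $\hat C(x) \coloneqq 2^{-(2k+1)} T_{k+1}(2x-1)$, where $T_{k+1}$ is the Chebyshev polynomial of the first kind; this $\hat C$ equioscillates between $\pm 2^{-(2k+1)}$ at its $k{+}2$ extrema $y_r = \tfrac{1}{2}\bigl(1+\cos(r\pi/(k+1))\bigr)$, $r = 0,\dots,k+1$, with $\hat C(y_r) = (-1)^r 2^{-(2k+1)}$. Setting
$$P(x) \coloneqq \hat C(x) - 2^{-(2k+1)}, \qquad Q(x) \coloneqq \hat C(x) + 2^{-(2k+1)},$$
I would take $\bp'$ (resp.\ $\bq'$) to be the multiset of roots of $P$ (resp.\ $Q$). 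Interior extrema are critical points of $\hat C$ and so contribute double roots to whichever of $P,Q$ vanishes there, while the endpoint extrema $y_0=1$ and $y_{k+1}=0$ contribute simple roots; a short degree-count then confirms exactly $k{+}1$ roots in $[0,1]$ for each polynomial. Finally I would rescale by $\alpha \coloneqq 2/(k+1)$, setting $p_r \coloneqq \alpha p'_r$ and $q_r \coloneqq \alpha q'_r$, so that $p_r, q_r \in [0,\alpha] \subset [0,1]$.

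To check that $\bp,\bq$ are probability distributions, I would read off the coefficient of $x^k$ in $\hat C$: since $T_{k+1}$ has no degree-$k$ term in its variable, expanding $(2x-1)^{k+1}$ gives $[x^k]\hat C(x) = -(k+1)/2$, so Vieta yields $\sum_r p'_r = \sum_r q'_r = (k+1)/2$ and hence $\sum_r p_r = \sum_r q_r = 1$. For the moment conditions, the key observation is that $P(x)-Q(x) = -2^{-2k}$ is a nonzero constant, so $P$ and $Q$ agree in every non-constant coefficient. By Vieta this forces $e_i(\bp') = e_i(\bq')$ for $1 \le i \le k$, and Newton's identities propagate the equality to $s_i(\bp') = s_i(\bq')$ in the same range. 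For the $(k{+}1)$-th power sum, Newton's identity gives
$$s_{k+1}(\bp') - s_{k+1}(\bq') = (-1)^k(k+1)\bigl(e_{k+1}(\bp') - e_{k+1}(\bq')\bigr),$$
and $e_{k+1}(\bp')-e_{k+1}(\bq') = (-1)^{k+1}(P(0)-Q(0)) = (-1)^k\, 2^{-2k}$ simplifies this to $(k+1)/2^{2k}$. Rescaling multiplies each $s_i$ by $\alpha^i$, preserving the equality for $i \le k$ and yielding
$$s_{k+1}(\bp) - s_{k+1}(\bq) = \alpha^{k+1} \cdot \frac{k+1}{2^{2k}} = \frac{2^{k+1}(k+1)}{(k+1)^{k+1}\, 2^{2k}} = \frac{2}{2^k(k+1)^k},$$
as required.

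\paragraph{Main obstacle.}
The delicate step will be the multiplicity bookkeeping — verifying that $P$ and $Q$ each really have exactly $k{+}1$ roots in $[0,1]$ counted with multiplicity, so that $\bp'$ and $\bq'$ genuinely have $k{+}1$ entries. This reduces to a short parity analysis on $k{+}1$, which determines whether $y_0$ and $y_{k+1}$ both belong to $P$, both to $Q$, or split between them; in every case the simple roots at the endpoint(s) plus the double roots at the relevant interior extrema saturate the degree. Everything else is mechanical algebra with Newton's identities and the leading-coefficient expansion of $T_{k+1}(2x-1)$.
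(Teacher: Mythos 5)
Your proposal is correct and takes essentially the same route as the paper: both realize $\bp$ and $\bq$ as the root multisets of a rescaled Chebyshev polynomial shifted by $\pm\delta$, so the two polynomials differ only in their constant term and Vieta plus Newton's identities force all moments up to degree $k$ to agree. The only difference is bookkeeping: the paper applies the affine substitution $x\mapsto (k+1)x-1$ up front so the roots already sum to $1$ and reads off the degree-$(k+1)$ gap by summing $T(p_r)-T(q_r)=2\delta$ over the $k+1$ roots, whereas you normalize on $[0,1]$, rescale the roots by $2/(k+1)$ afterwards, and extract the gap from $e_{k+1}$ (the constant terms) via Newton's identity --- both yield $\frac{2}{2^k(k+1)^k}$.
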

\addtocounter{theorem}{-1}
\endgroup
To prove the lemma, we introduce the Chebyshev polynomial. 
Let $T_{k+1}(x)$ be the $(k+1)$-th Chebyshev polynomial of the first kind, namely the polynomial that satisfies $T_{k+1}(\cos\theta)=\cos((k+1)\theta)$. $T_{k+1}(x)$ has the following properties:
\begin{enumerate}
    \item When $x\in [-1, 1]$, $T_{k+1}(x)\in [-1, 1]$.
    \item The leading two terms of $T_{k+1}(x)$ is $2^k x^{k+1}+0x^k$.
    \item Equation $T_{k+1}(x)=1$ has $k+1$ roots in $[-1, 1]$. When $k+1$ is even, the roots are $\cos\frac{2r\pi}{k+1}$ for $r=0, 1, \cdots, (k+1)/2$, with multiplicity $1$ for $r=0,(k+1)$ and multiplicity $2$ for other $r$. When $k+1$ is odd, the roots are $\cos\frac{2r\pi}{k+1}$ for $r=0, 1, \cdots, k/2$, with multiplicity $1$ for $r=0$ and multiplicity $2$ for other $r$. 
    \item Equation $T_{k+1}(x)=-1$ has $k+1$ roots in $[-1, 1]$. When $k+1$ is even, the roots are $\cos\frac{(2r+1)\pi}{k+1}$ for $r=0, 1, \cdots, (k-1)/2$ with multiplicity 2 for all $r$. When $k+1$ is odd, the roots are $\cos\frac{(2r+1)\pi}{k+1}$ for $r=0, 1, \cdots, k/2$, with multiplicity 1 for $r=k/2$ and multiplicity $2$ for other $r$.
\end{enumerate}

\begin{proof}[Proof of \Cref{lem: equal moment sequences}]
    We rescale $T_{k+1}$ by defining $T(x)\coloneqq \delta T_{k+1}((k+1)(x-1))$, where $\delta =\frac{1}{2^k(k+1)^{k+1}}$, so that 
    \begin{enumerate}
        \item The leading two terms of $T(x)$ is $x^{k+1}-x^k$,
        \item Both $T(x)-\delta$ and $T(x)+\delta$ have $(k+1)$ nonnegative roots.
    \end{enumerate}
    Let $\{p_r\}_{r=1}^{k+1}$ and $\{q_r\}_{r=1}^{k+1}$ be roots of $T(x)-\delta$ and $T(x)+\delta$, respectively. By Vieta's formulas and the properties above, $\{p_r\}_{r=1}^{k+1},\{q_r\}_{r=1}^{k+1}$ are probability distributions such that all elementary symmetric polynomials up to degree $k$ are the same. By Newton's identities, it implies that $\sum_{r=1}^{k+1} p_r^m=\sum_{r=1}^{k+1} q_r^m$ for all $0\leq m\leq k$. The difference between $(k+1)$-th moments can be bounded by 
    \begin{equation}
        \sum_{r=1}^{k+1} p_r^{k+1}-\sum_{r=1}^{k+1}q_r^{k+1}=\sum_{r=1}^{k+1}(T(p_r)-T(q_r)) =2(k+1)\delta.
    \end{equation}
    This completes the proof the lemma. Moreover, since $-1$ is the root of either $T_{k+1}(x)=-1$ or $T_{k+1}(x)=1$, one of $\{p_r\}_{r=1}^{k+1}$ and $\{q_r\}_{r=1}^{k+1}$ contains a zero entry. And the smallest nonzero entropy is 
    \begin{equation}
        \frac{1-\cos\frac{\pi}{k+1}}{k+1}=\frac{2}{k+1}\sin^2\frac{\pi}{2(k+1)}\ge \frac{2}{(k+1)^3},
    \end{equation}
    where we use the fact $\sin x\ge 2x/\pi$ for $x\in [0,\pi/2]$.
\end{proof}

\subsection{Proof of Lemma \ref{lem: O component}}
To ease the notation, here we replace $k+1$ in the original lemma by $k$.
\begingroup
\def\thetheorem{\ref{lem: O component}}
\begin{lemma}
    Let $O$ be a $k$-qudit observable and $\Delta_k$ be the operator defined in \eqref{equ: Delta}. Then we have
    \begin{equation}
        \abs{\sum_{\pi\in S_{k}^{\text{circ}}}c_\pi(O)-\tr(O\Delta_{k})}\leq \frac{k^{k-1}k!}{d}.
    \end{equation}
\end{lemma}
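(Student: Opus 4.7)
The plan is to reduce $\tr(O\Delta_k) - \sum_{\pi \in S_k^{\text{circ}}} c_\pi(O)$ to a controlled sum over permutations. Since each $\tilde S^\Pi$ commutes with $U^{\otimes k}$ for every unitary $U$, so does $\Delta_k$, and averaging gives
\begin{equation*}
    \tr(O \Delta_k) \;=\; \tr\!\big(\E_U[U^{\otimes k} O U^{\dagger\otimes k}]\,\Delta_k\big) \;=\; \sum_{\pi \in S_k} c_\pi(O)\,\tr(\pi \Delta_k).
\end{equation*}
The lemma therefore follows once I show that $\tr(\pi\Delta_k)$ equals $\mathbb{1}[\pi \in S_k^{\text{circ}}]$ up to an $O(1/d)$ correction, and that the weighted sum of these corrections against $c_\pi(O)$ can be bounded by $k^{k-1}k!/d$.

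The core combinatorial computation is of $\tr(\pi\tilde S^\Pi) = (\prod_j d^{\uparrow |B_j|})^{-1} \sum_{\tau \in S_\Pi} d^{c(\pi\tau)}$, split by whether $\pi$ preserves the blocks of $\Pi$ (equivalently $\mathrm{cyc}(\pi) \le \Pi$ in the partition lattice). When $\pi \in S_\Pi$, the sum factors block-by-block and on each $B_j$ the substitution $\sigma_j = \pi|_{B_j}\tau_j$ gives $\sum_{\sigma_j} d^{c(\sigma_j)} = d^{\uparrow |B_j|}$, so $\tr(\pi\tilde S^\Pi) = 1$. When $\pi \notin S_\Pi$ no $\tau$ can produce $\pi\tau = \mathrm{id}$, hence $c(\pi\tau)\le k-1$ throughout and $|\tr(\pi\tilde S^\Pi)| \le \prod_j |B_j|!/d$. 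Weighting the compatible case by $\mu_\Pi = (-1)^{l-1}(l-1)!$ and summing, the Möbius-style identity $\sum_{\Pi'\vdash[c]}(-1)^{l(\Pi')-1}(l(\Pi')-1)! = \mathbb{1}[c=1]$ applied to the cycles of $\pi$ as atoms collapses the sum to exactly $\mathbb{1}[\pi \in S_k^{\text{circ}}]$.

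Next I would rearrange: writing $M = \E_U[U^{\otimes k} O U^{\dagger\otimes k}]$ and $M_\Pi = \sum_{\pi \in S_\Pi} c_\pi(O)\,\pi$ for the ``$S_\Pi$-part'' of $M$, and using that $\tilde S^\Pi$ commutes with every $\pi \in S_\Pi$ (so $\tr(\tilde S^\Pi M_\Pi) = \sum_{\pi \in S_\Pi} c_\pi$), one obtains the telescoping form
\begin{equation*}
    \tr(O\Delta_k) - \sum_{\pi \in S_k^{\text{circ}}} c_\pi(O) \;=\; \sum_\Pi \mu_\Pi\,\tr\!\big((M - M_\Pi)\tilde S^\Pi\big).
\end{equation*}
Applying the Case-(b) bound term-wise gives $|\tr((M - M_\Pi)\tilde S^\Pi)| \le (\sum_{\pi\notin S_\Pi} |c_\pi(O)|)\,\prod_j|B_j|!/d$. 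Summing over $\Pi$ and using the combinatorial identity $\sum_\Pi (l-1)!\prod_j|B_j|! = (k-1)!(2^k-1)$ (which follows from $\sum_{\Pi: l(\Pi)=l}\prod_j|B_j|! = k!\binom{k-1}{l-1}/l!$ together with $\sum_l \binom{k-1}{l-1}/l = (2^k-1)/k$) bounds the error by a constant times $\|c\|_1\cdot(k-1)!(2^k-1)/d$.

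The main obstacle will be bounding $\|c\|_1 = \sum_\pi|c_\pi(O)|$ in terms of the implicit $\|O\|_\infty \le 1$. My approach is to use $\|M\|_\infty \le \|O\|_\infty$ together with the Gram matrix $G_{\pi\sigma} = d^{c(\pi^{-1}\sigma)}$ of the permutation basis in the Hilbert--Schmidt inner product: since $G$ has diagonal $d^k$ and the off-diagonals in each row sum to $d^{\uparrow k} - d^k = O(k^2 d^{k-1})$, Gershgorin gives $\lambda_{\min}(G) \ge d^k/2$ whenever $d$ is larger than some constant multiple of $k^2$ (the only regime in which the claimed bound is nontrivial). Then $\|c\|_2^2 \le \|M\|_{\mathrm{HS}}^2/\lambda_{\min}(G) \le 2$, so $\|c\|_1 \le \sqrt{2k!}$ by Cauchy--Schwarz, and the total error is at most $\sqrt{2k!}(k-1)!(2^k-1)/d$, which is at most $k^{k-1}k!/d$ by the elementary estimate $\sqrt{2}(2^k-1)(k-1)!\le k^{k-1}\sqrt{k!}$ valid for $k\ge 3$; the small cases can be checked by direct Weingarten computation.
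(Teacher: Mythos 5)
Your proposal is correct where it applies, and it takes a genuinely different route from the paper in the two places that matter. The paper, after the same invariance step $\tr(O\Delta_k)=\sum_\pi c_\pi(O)\tr(\Delta_k\pi)$, fixes $\pi$ and bounds $\tr\bigl(\pi\cdot\tilde S^{B_1}\otimes\cdots\otimes\tilde S^{B_l}\bigr)$ for block-crossing $\pi$ analytically: it writes each $\tilde S^{B_j}$ as $\E_{\psi_j}[\psi_j^{\otimes B_j}]$, extracts two inner-product factors between independent Haar states, and gets $1/d$ per partition by Cauchy--Schwarz, hence a per-permutation error $\sum_l (l-1)!S(k,l)/d\le k^{k-1}/d$; the stated bound then follows by triangle inequality together with an implicit estimate $\sum_\pi\abs{c_\pi(O)}\le k!$ (equivalently $\abs{c_\pi}\lesssim\norm{O}_\infty$), which the paper never justifies. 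You instead bound the crossing terms by exact counting over the Young subgroup ($\pi\tau\neq\mathrm{id}$ forces $c(\pi\tau)\le k-1$, giving $\prod_j\abs{B_j}!/d$ per partition), collapse the block-respecting part exactly to $\sum_{\pi\in S_k^{\mathrm{circ}}}c_\pi$ via the partition-lattice M\"obius identity, and, crucially, you supply the missing coefficient bound $\norm{c}_1\le\sqrt{2\,k!}$ through the Gram matrix of permutation operators and Gershgorin. Your combinatorial identities check out ($\sum_{\Pi:l(\Pi)=l}\prod_j\abs{B_j}!=k!\binom{k-1}{l-1}/l!$, total $(k-1)!(2^k-1)$), and $\sqrt2(2^k-1)(k-1)!\le k^{k-1}\sqrt{k!}$ does hold for $k\ge3$, with $k=1,2$ easy to verify directly; your final bound is in fact asymptotically stronger than the stated one. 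What each approach buys: the paper's Haar/Cauchy--Schwarz step gives the sharper per-partition constant $1/d$, but leaves the $\ell_1$-norm of the Weingarten-type coefficients unaddressed; your argument is purely combinatorial and closes that step explicitly.

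One caveat: your Gershgorin bound $\lambda_{\min}(G)\ge d^k/2$ needs $d\ge Ck^2$, and the parenthetical claim that this is ``the only regime in which the claimed bound is nontrivial'' is not substantiated --- for $k\le d\lesssim k^2$ the inequality is not obviously trivial precisely because no a priori bound on the $c_\pi$ is available there (and for $d<k$ the coefficients are not even uniquely defined). This does not affect the application, where $d$ is exponentially large and the theorem's hypotheses already force $d\gg k^2$, and the paper's own proof silently assumes more; but you should state the hypotheses $\norm{O}_\infty\le1$ and $d=\Omega(k^2)$ explicitly rather than appeal to triviality.
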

\addtocounter{theorem}{-1}
\endgroup
We rewrite the definition of $\Delta_k$ here for convenience
\begin{equation*}
    \Delta_{k} \coloneqq \sum_{l=1}^{k}(-1)^{l-1}(l-1)!\sum_{\substack{B_1,\cdots, B_{l} \text{ partition } [k]\\ \text{all nonempty}}}\tilde{S}^{B_1}\otimes \cdots \otimes \tilde{S}^{B_{l}}.
\end{equation*}
We introduce Stirling numbers of the second kind, $S(k, l)$, defined as the number of ways to partition a set of $k$ elements into $l$ nonempty subsets. A useful property of $S(k, l)$ is the polynomial identity $x^k=\sum_{l=1}^kS(k, l)x^{\downarrow l}$. So $x^{k-1}=\sum_{l=1}^kS(k, l)(x-1)^{\downarrow l-1}$. Setting $x=0$, we obtain that
\begin{equation}
    \sum_{l=0}^k S(k,l)(-1)^{l-1}(l-1)! = \begin{cases}
    1, &k=1\\
    0, &k>1
    \end{cases}.
\end{equation}

\begin{proof}[Proof of \Cref{lem: O component}]
    By definition of $\Delta_{k+1}$, $(U^\dagger)^{\otimes (k+1)}\Delta_{k+1}U^{\otimes (k+1)}=\Delta_{k+1}$. So
    \begin{equation}
        \tr(O\Delta_{k+1})=\E_{U\leftarrow \mu_\Haar}[\tr(U^{\otimes (k+1)}O(U^\dagger)^{\otimes (k+1)}\Delta_{k+1})]=\sum_{\pi}c_{\pi}(O)\tr(\Delta_{k+1}\pi).\label{equ: O component 1}
    \end{equation}
    The goal is to show that $\tr(\Delta_{k}\pi)\approx 1$ if $\pi\in S_{k+1}^{\text{circ}}$ and $\tr(\Delta_{k}\pi)\approx 0$ otherwise.
    Fix a $\pi$, to calculate $\tr(\Delta_k\pi)$, we divide terms $B_1,\cdots, B_l$ in $\Delta_k$ into two cases.

    \noindent\textbf{Case 1:} There exists an $i\in [k]$ such that $i$ and $\pi(i)$ are in different parts of the partition, say, $i\in B_{r_1}$ and $\pi(i)\in B_{r_2}$ for $r_1\neq r_2$.

    Apparently, such $i$ cannot be unique. So there is another $j\in [k]$ such that $j$ and $\pi(j)$ are in different parts, say, $j\in B_{r_3}$ and $\pi(j)\in B_{r_4}$ for $r_3\neq r_4$.
    In this case 
    \begin{align}
        \abs{\tr(\tilde{S}^{B_1}\otimes \cdots \otimes \tilde{S}^{B_{l}}\cdot \pi)}&\leq \E_{\psi_1,\cdots \psi_l\leftarrow\mu_\Haar}[\abs{\tr(\psi_1^{\otimes B_1}\otimes \cdots \psi_l^{\otimes B_l}\cdot \pi)}]\leq \E[\abs{\braket{\psi_{r_1}}{\psi_{r_2}}\braket{\psi_{r_3}}{\psi_{r_4}}}]. \label{equ: O component 2}
    \end{align}
    If $\{r_1,r_2\}=\{r_3,r_4\}$, then the RHS is just $\frac{1}{d}$. If $\{r_1,r_2\}\cap \{r_3, r_4\} = \emptyset$, then by Cauchy's inequality, the RHS is 
    \begin{equation*}
        \E[\abs{\braket{\psi_{r_1}}{\psi_{r_2}}}]\cdot \E[\abs{\braket{\psi_{r_3}}{\psi_{r_4}}}]\leq \sqrt{\E[\abs{\braket{\psi_{r_1}}{\psi_{r_2}}}^2]}\cdot \sqrt{\E[\abs{\braket{\psi_{r_3}}{\psi_{r_4}}}^2]}=\frac{1}{d}.
    \end{equation*}
    If $\abs{\{r_1,r_2\}\cap \{r_3, r_4\}}=1$, say $r_1=r_3$, then by Cauchy's inequality, the RHS of \eqref{equ: O component 2} is at most
    \begin{equation*}
        \sqrt{\E_{\psi_{r_1}}\E_{\psi_{r_2},\psi_{r_4}}[\abs{\braket{\psi_{r_1}}{\psi_{r_2}}}^2\cdot \abs{\braket{\psi_{r_1}}{\psi_{r_4}}}^2]}=\sqrt{\E_{\psi_{r_1}}\frac{1}{d^2}}=\frac{1}{d}.
    \end{equation*}
    In summary, the RHS of \eqref{equ: O component 2} is always at most $1/d$. So the total contribution of this case is at most 
    \begin{equation}
        \frac{1}{d}\sum_{l=1}^k(l-1)!S(k, l)\leq \frac{1}{d}\sum_{l=1}^k (k-1)^{\downarrow l-1}S(k, l)=\frac{k^{k-1}}{d}.
    \end{equation}

    \noindent\textbf{Case 2:} $i$ and $\pi(i)$ are in the same part for every $i$. 

    The cycle decomposition of $\pi$ gives a partition of $[k]$, denoted by $A_1,\cdots, A_t$. Case 2 happens if and only if each $A_j$ is contained in one of $B_j$. The number of such partitions $B_1,\cdots, B_l$ is $S(t, l)$ (the number of ways to divide $\{A_1,\cdots, A_t\}$ into $l$ parts). For each such partition $B_1, \cdots, B_l$,
    \begin{equation}
        \tr(\tilde{S}^{B_1}\otimes \cdots \otimes \tilde{S}^{B_{l}}\cdot \pi)=\E_{\psi_1,\cdots \psi_l\leftarrow\mu_\Haar}[\tr(\psi_1^{\otimes B_1}\otimes \cdots \psi_l^{\otimes B_l}\cdot \pi)]=1.
    \end{equation}
    So the total contribution of this case is
    \begin{equation}
        \sum_{l=1}^k (-1)^{l-1}(l-1)! S(t,l) = \begin{cases}
    1, &t=1\\
    0, &t>1
    \end{cases}.
    \end{equation}
    Here $t=1$ if and only if $\pi\in S_k^{\text{circ}}$. 
    In summary, we have shown that, up to error $k^{k-1}/\sqrt{d}$, $\tr(\Delta_{k}\pi)\approx 1$ if $\pi\in S_{k+1}^{\text{circ}}$ and $\tr(\Delta_{k}\pi)\approx 0$ otherwise. The lemma follows from \eqref{equ: O component 1} and triangle inequality. 
\end{proof}

\subsection{Proof of Lemma \ref{lem: round spectrum}}\label{sec: round spectrum}
\begingroup
\def\thetheorem{\ref{lem: round spectrum}}
\begin{lemma}
    Let $\ba=(a_1,\cdots, a_m)$ be a spectrum and $\psi_1,\cdots, \psi_m$ be $m$ Haar random states. With probability at least $0.99$, $\rho\coloneqq \sum_{r=1}^m a_r\psi_r$ is $\mathcal{O}(m\sqrt{\ln m}/\sqrt{d})$-close to a state with spectrum $\ba$ in trace distance.
\end{lemma}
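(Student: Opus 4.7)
The plan is to approximately orthonormalize the $m$ Haar-random states using the Löwdin (symmetric) orthogonalization and to show that the resulting rank-$m$ state has spectrum exactly $\ba$ and lies close to $\rho$ in trace distance. The starting point is Haar concentration for inner products (\Cref{lem: haar concentration inner product}): for each ordered pair $i\neq j$, conditioning on $\ket{\psi_i}$ fixed, one has $\Pr[|\braket{\psi_i}{\psi_j}|^2 \ge \delta] \leq 2e^{-d\delta/2}$. Taking $\delta = \Theta(\ln m / d)$ and union-bounding over the $\binom{m}{2}$ pairs, the event
\begin{equation*}
\mathcal{G} \coloneqq \{|\braket{\psi_i}{\psi_j}| \leq \epsilon \text{ for all } i\neq j\},\qquad \epsilon = O\bigl(\sqrt{\ln m/d}\bigr),
\end{equation*}
holds with probability at least $0.99$; I would condition on $\mathcal{G}$ for the rest of the argument.

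Second, I would pass to the Gram matrix $G \coloneqq \Psi^\dagger \Psi$, where $\Psi = [\ket{\psi_1}|\cdots|\ket{\psi_m}]$ is viewed as a $d\times m$ matrix. On $\mathcal{G}$, $G$ has unit diagonal and off-diagonal entries of magnitude at most $\epsilon$, so writing $E \coloneqq G-I$, Gershgorin gives $\norm{E}_{\infty} \leq (m-1)\epsilon = O(m\sqrt{\ln m/d})$. In the regime where this quantity is at most $1/2$ (otherwise the claimed bound is vacuous), $G$ is positive and invertible, so the Löwdin-orthonormalized frame $\Phi \coloneqq \Psi G^{-1/2}$ is a well-defined $d\times m$ isometry. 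The rounded target is then
\begin{equation*}
\sigma \coloneqq \Phi\, \diag(\ba)\, \Phi^\dagger = \sum_{r=1}^m a_r \ketbra{\phi_r},
\end{equation*}
which has spectrum exactly $\ba$ because $\Phi$ is an isometry.

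The final step is a Gram-matrix perturbation bound on
\begin{equation*}
\rho - \sigma = \Psi\bigl[\diag(\ba) - G^{-1/2}\diag(\ba) G^{-1/2}\bigr]\Psi^\dagger.
\end{equation*}
Writing $F \coloneqq I - G^{-1/2}$ and expanding $(I+E)^{-1/2}$ as a Neumann series yields $\norm{F}_{\infty} = O(\norm{E}_{\infty})$. Expanding $(I-F)\diag(\ba)(I-F) = \diag(\ba) - F\diag(\ba) - \diag(\ba)F + F\diag(\ba)F$ and using $\norm{\diag(\ba)}_1 = 1$, the bracketed middle factor has trace norm $O(\norm{E}_{\infty})$. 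Combining with $\norm{\Psi}_{\infty}^2 = \norm{G}_{\infty} \leq 1 + \norm{E}_{\infty} = O(1)$ and the Hölder-type inequality $\norm{AXB}_1 \leq \norm{A}_{\infty}\norm{X}_1\norm{B}_{\infty}$ gives $\norm{\rho-\sigma}_1 = O(m\sqrt{\ln m/d})$, hence the desired $d_{\tr}(\rho,\sigma) = O(m\sqrt{\ln m/d})$.

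The main conceptual obstacle is the choice of orthonormalization: naive Gram-Schmidt would work but compounds perturbation errors sequentially and yields an awkward final bound; Löwdin orthogonalization both preserves the spectrum $\ba$ exactly and linearizes cleanly in $G-I$, allowing a one-line matrix-perturbation argument on top of the Haar inner-product concentration. The only point to check with care is that the operator-to-trace-norm conversion absorbs at most an additional $\norm{\Psi}_{\infty}^2 = O(1)$ factor, which is cheap in the regime $m\sqrt{\ln m/d} \lesssim 1$; in the complementary regime the statement is trivial.
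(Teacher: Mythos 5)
Your proof is correct, and it reaches the same $\mathcal{O}(m\sqrt{\ln m}/\sqrt{d})$ bound by a genuinely different route than the paper. Both arguments start identically (Haar concentration of pairwise overlaps plus a union bound, then control of the Gram matrix $G=\Psi^\dagger\Psi$), but the paper bounds $\norm{G-\Id}_F$ directly via the sum of squared overlaps, takes a QR decomposition $\Psi=QR$, identifies $R_0^\dagger R_0$ as the Cholesky factorization of $G$, and invokes an external perturbation bound for Cholesky factors (Sun 1991) to get $\norm{R_0-\Id}_F\le\sqrt2\norm{G-\Id}_F$, defining the rounded state as $QAQ^\dagger$. You instead use L\"owdin symmetric orthogonalization $\Phi=\Psi G^{-1/2}$, whose isometry property makes the spectrum claim immediate, and replace the Cholesky perturbation theorem with an elementary spectral-mapping/Neumann bound $\norm{\Id-G^{-1/2}}_\infty=\mathcal{O}(\norm{G-\Id}_\infty)$ in the regime $\norm{G-\Id}_\infty\le 1/2$ (with the complementary regime dismissed because trace distance is at most $1$, which also quietly handles $m>d$ where $G$ is singular). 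Your operator-norm bookkeeping via Gershgorin and the H\"older inequality $\norm{AXB}_1\le\norm{A}_\infty\norm{X}_1\norm{B}_\infty$ is sound, including the extra $\norm{\Psi}_\infty^2=\norm{G}_\infty=\mathcal{O}(1)$ factor. The trade-off: your argument is self-contained and avoids citing the Cholesky perturbation result, at the cost of an explicit case split on the size of $\norm{G-\Id}_\infty$; the paper's Frobenius-norm route needs no such split but leans on the cited factorization-perturbation theorem.
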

\addtocounter{theorem}{-1}
\endgroup
\begin{proof}
    Let $\Psi$ be a $d\times m$ matrix whose $r$-th column is $\ket{\psi_r}$. Let $A_0$ be the $m\times m$ diagonal matrix with diagonal $\ba$. Then $\rho=\Psi A_0\Psi^\dagger$. 
    Define the Gram matrix $G\coloneqq \Psi^\dagger \Psi$.
    \begin{equation}
        \norm{G-\Id}_F^2 = \sum_{i\neq j} \abs{\braket{\psi_i}{\psi_j}}^2.
    \end{equation}
    By \Cref{lem: haar concentration inner product} and union bound, with probability at least $1-2m^2\exp(-d\delta/2)$, $\norm{G-\Id}_F^2\leq \frac{m^2}{d}+m^2\delta$. Setting $\delta=\frac{2\ln(200m^2)}{d}$, we obtain that with probability at least $0.99$, $\norm{G-\Id}_F^2=\mathcal{O}(\frac{m^2\ln m}{d})$ and $\norm{G-\Id}_F=\mathcal{O}(m\sqrt{\ln m}/\sqrt{d})$

    Write the $QR$ decomposition $\Psi=QR$, where $Q$ is a $d\times d$ unitary and $R$ is a $d\times m$ upper triangular matrix. Since $d>m$, $R$ consists of a $m\times m$ upper triangular matrix $R_0$ and a $(d-m)\times m$ zero matrix. Since $G=\Psi^\dagger \Psi=R^\dagger R=R_0^\dagger R_0$ and $R_0$ is upper triangular, $R_0^\dagger R_0$ is the Cholesky decomposition of $G$. According to perturbation bounds for Cholesky decomposition \cite[Theorem 1.1]{sunPerturbationBoundsCholesky1991}, $\norm{R_0-\Id_m}_F\leq \sqrt{2}\norm{G-\Id_m}_F$. 

    Let $A$ be the $d\times d$ diagonal matrix with diagonal $(a_1,\cdots, a_m, 0, 0, \cdots)$. Define $\sigma\coloneqq QAQ^{\dagger}$.
    Then $\sigma$ is a state with spectrum $\ba$. $\norm{\rho-\sigma}_1=\norm{Q(RA_0R^\dagger -A)Q^\dagger}_1=\norm{RA_0R^\dagger -A}_1=\norm{R_0A_0R_0^\dagger -A_0}_1$. Write $E=R_0-\Id$. By Holder's inequality $\norm{XY}_1\leq \norm{X}_F\norm{Y}_F$, we obtain
    \begin{align*}
        \norm{R_0A_0R_0^\dagger -A_0}_1&\leq \norm{EA_0}_1+\norm{A_0E}_1+\norm{EA_0E}_1\tag{triangle inequality}\\
        &\leq 2\norm{E}_F\norm{A_0}_F+\norm{EA_0}_F\norm{E}_F\tag{Holder's inequality}\\
        &\leq 2\norm{E}_F+\norm{E}_F^2 \tag{$\norm{A_0}_F\leq 1$, $\norm{EA_0}_F\leq \norm{A_0}_F$},
    \end{align*}
    where $\norm{EA_0}_F\leq \norm{A_0}_F$ because $A_0$ is a diagonal matrix with all entries in $[0, 1]$. Hence, with probability at least $0.99$, $\norm{\rho-\sigma}_1\leq \mathcal{O}(m\sqrt{\ln m}/\sqrt{d})$. 
\end{proof}

\section*{Acknowledgements}
We thank Weiyuan Gong, Xinyu Tan, Zhihan Zhang, Jonas Haferkamp, and Sitan Chen for helpful discussions. This work is supported by the National Natural Science Foundation of China (Grants No. T2225008, No. T24B2002, and No.~12475023), the Innovation Program for Quantum Science and Technology (No. 2021ZD0302203), Tsinghua University Dushi Program, the Shanghai Qi Zhi Institute Innovation Program SQZ202318, and a startup funding from YMSC.

\bibliographystyle{plainurl}

\end{document}